\begin{document}

\title{A new class of  hyper-bent
Boolean functions in binomial forms
\thanks{Chungming Tang, Yanfeng Qi and Maozhi Xu
acknowledge support from
the Natural Science Foundation of China
(Grant No.10990011 \& No.60763009). Baocheng Wang and Yixian
Yang acknowledge support from   National Science Foundation 
of China Innovative Grant (70921061), the CAS/SAFEA 
International Partnership Program for Creative Research Teams.
}
}


\author{Chunming Tang \and
 Yanfeng Qi \and Maozhi Xu \and  Baocheng Wang
\and Yixian Yang         
}


\institute{
 Chunming Tang,
 Yanfeng Qi and Maozhi Xu \at
              Laboratory of Mathematics and Applied Mathematics, School of Mathematical Sciences, Peking University, Beijing, 100871, China\\
              Chunming Tang's e-mail: tangchunmingmath@163.com\\
              Baocheng Wang  and  Yixian Yang\at
             Information Security Center, Beijing University of Posts and Telecommunications
 and Research Center on fictitious Economy and Data Science,
 Chinese Academy of Sciences,  Beijing, 100871,
 China 
}

\date{Received: date / Accepted: date}

\maketitle

\begin{abstract}
Bent functions, which are maximally nonlinear Boolean functions with even numbers of variables
and whose Hamming distance to the set of all affine functions equals $2^{n-1}\pm 2^{\frac{n}{2}-1}$,
were introduced by Rothaus in 1976 when he
considered problems in combinatorics. Bent functions have been extensively studied due to their applications in cryptography, such as S-box, block cipher and stream cipher. Further, they have been applied to coding theory, spread spectrum and combinatorial design. Hyper-bent functions, as a special class of bent functions, were introduced by Youssef and Gong in 2001, which have stronger
properties and rarer elements. Many research focus on the construction of bent and hyper-bent functions. In this paper, we consider functions defined over $\mathbb{F}_{2^n}$ by
$f_{a,b}:=\mathrm{Tr}_{1}^{n}(ax^{(2^m-1)})
+\mathrm{Tr}_{1}^{4}(bx^{\frac{2^n-1}{5}})$, where $n=2m$, $m\equiv 2\pmod 4$, $a\in \mathbb{F}_{2^m}$ and $b\in\mathbb{F}_{16}$. When  $a\in \mathbb{F}_{2^m}$ and $(b+1)(b^4+b+1)=0$,
with the help of Kloosterman sums and the factorization of  $x^5+x+a^{-1}$, we present a
characterization of hyper-bentness of $f_{a,b}$.
Further, we use generalized Ramanujan-Nagell equations to characterize hyper-bent functions of $f_{a,b}$ in the case  $a\in\mathbb{F}_{2^{\frac{m}{2}}}$.

\keywords{Boolean functions \and bent functions \and
hyper-bent functions \and Walsh-Hadamard transformation
\and Dickson polynomials \and
generalized Ramanujan-Nagell equations}
\end{abstract}

\section{Introduction}
\label{intro}
Bent functions are maximally nonlinear
Boolean functions with even numbers of  variables
whose Hamming distance to the set of all
affine functions equals $2^{n-1}\pm
2^{\frac{n}{2}-1}$.
These functions introduced by
Rothaus \cite{OSR} as interesting combinatorial objects
have been extensively studied for their applications not only in cryptography, but also
in coding theory  \cite{CP,SM} and combinatorial design. Some basic knowledge and recent
results on bent functions can be found
in \cite{CC,HGT,SM}. A bent function can be
considered as a Boolean function defined
over $\mathbb{F}_{2}^{n}$,
$\mathbb{F}_{2^m}\times \mathbb{F}_{2^m}~(n=2m)$ or $\mathbb{F}_{2^n}$. Thanks to the different structures of the vector space
$\mathbb{F}_{2}^n$ and the Galois field
$\mathbb{F}_{2^n}$, bent functions can be
well studied. Although some algebraic
properties of bent functions are well known,
the general structure of bent functions
on $\mathbb{F}_{2^n}$ is not clear yet. As a result, much research on bent functions on
$\mathbb{F}_{2^n}$ can be found in
\cite{APG,PG,PGK,JD,JH,HGA,RG,GL,GA,SM3,SM1,SM,SMT,SMG,NG}.
Youssef and Gong \cite{AG} introduced
a class of bent functions called hyper-bent
functions, which achieve the maximal minimum
distance to all the coordinate functions of all
bijective monomials (i.e., functions of the form
$\mathrm{Tr}_{1}^{n}(ax^i)+
\epsilon $, $\mathrm{gcd}
(i, 2^n-1)=1$). However, the definition
of hyper-bent functions was given
by Gong and Golomb \cite{GS} by
a property of the extend Hadamard transform
of Boolean functions. Hyper-bent functions
as special bent functions with strong properties
are hard to characterize and many related problems  are open. Much research give
the precise characterization of hyper-bent
functions in certain forms.

The complete classification of bent and hyper-bent functions is not yet achieved. The monomial bent functions in the
form $\mathrm{Tr}_{1}^{n}(ax^s)$ are
considered in \cite{APG,GL}. Leander
\cite{GL} described the necessary conditions for
$s$ such that $\mathrm{Tr}_{1}^{n}(ax^s)$ is a bent function. In particular, when
$s=r(2^m-1)$ and $(r,2^m+1)=1$, the monomial functions $\mathrm{Tr}_{1}^{n}(ax^s)$ (i.e.,
the Dillon functions) were extensively studied
in \cite{PG,JD,GL}. A class of quadratic functions over $\mathbb{F}_{2^n}$ in polynomial form $\sum\limits_{i=1}^{\frac{n}{2}
-1}a_{i}\mathrm{Tr}_{1}^{n}(x^{1+2^i})
+a_{\frac{n}{2}}\mathrm{Tr}_{1}^{\frac{n}{2}}
(x^{\frac{n}{2}+1})~(a_{i}\in \mathbb{F}_{2})$ was described and studied in \cite{PEC,HD,TK,SJ,WMF,NG}.
Dobbertin et al. \cite{HGA}
constructed a class of binomial bent functions
of the form $\mathrm{Tr}_{1}^{n}(a_{1}x^{s_{1}}
+a_{2}x^{s_{2}}), (a_{1},a_{2})
\in (\mathbb{F}_{2^n}^{*})^2$
with  Niho power functions. Garlet
and Mesanager \cite{CS} studied the duals of
the Niho bent functions in \cite{HGA}.
In \cite{SM3,SM1,SMG}, Mesnager considered
the binomial functions of the form
$\mathrm{Tr}_{1}^{n}(ax^{r(2^{m}-1)})
+\mathrm{Tr}_{1}^{2}(bx^{\frac{2^n-1}{3}})$,
where $a\in \mathbb{F}_{2^n}^*$ and $b\in
\mathbb{F}_{4}^{*}$. Then he gave the link between the bentness property of such functions and Kloosterman sums.
Leander and Kholosha \cite{GA} generalized
one of the constructions proven by Dobbertin et al. \cite{HGA} and presented a new primary construction of bent functions consisting of a linear combination of $2^r$ Niho exponents.
Carlet et al. \cite{CTA} computed the dual of
the Niho bent function with $2^r$ exponents found
by Leander and Kholosha \cite{GA} and
showed that this new bent function is not
of the Niho type.  Charpin and Gong \cite{PG}
presented a characterization of bentness of Boolean functions over $\mathbb{F}_{2^n}$ of the form
$\sum\limits_{r\in R}\mathrm{Tr}_{1}^{n}
(a_{r}x^{r(2^m-1)})$, where $R$ is a subset of the set of representatives of the cyclotomic cosets modulo $2^m+1$ of maximal size $n$. These functions include the well-known monomial functions with the Dillon exponent as a special case. Then they
described the bentness of these functions with the Dickson polynomials.
Mesnager et al. \cite{SM,SMT} generalized the results of Charpin and Gong \cite{PG} and considered the bentness of Boolean functions over $\mathbb{F}_{2^n}$ of the form $\sum\limits_{r\in R}\mathrm{Tr}_{1}^{n}(a_{r}x^{r(2^m-1)})
+\mathrm{Tr}_{1}^{2}(bx^{\frac{2^n-1}{3}})$, where
$n=2m$, $a_{r}\in \mathbb{F}_{2^m}$ and  $b\in\mathbb{F}_{4}$. Further, they presented the link between the bentness of such functions and some exponential sums (involving Dickson polynomials).

In this paper, we consider a class of
Boolean functions $\mathcal{H}_{n}$.
A Boolean function $f_{a,b}$ in $\mathcal{H}_{n}$ is defined over $\mathbb{F}_{2^n}$ by the form:
$f_{a,b}:=\mathrm{Tr}_{1}^{n}(ax^{(2^m-1)})
+\mathrm{Tr}_{1}^{4}(bx^{\frac{2^n-1}{5}})$, where $n=2m$, $m\equiv 2\pmod 4$, $a
\in \mathbb{F}_{2^m}$ and $b\in\mathbb{F}_{16}$.
When $b=0$, Charpin and Gong \cite{PG} described the bentness and hyper-bentness of these functions with some character sums involving Dickson polynomial. Generally, it is elusive to give a characterization of bentness and hyper-bentness of Boolean functions in $\mathcal{H}_{n}$.
This paper discusses the hyper-bentness of
Boolean function in $\mathcal{H}_{n}$ for two cases. In the  first case $b=1$ or $b^4+b+1=0$,
we present the hyper-bentness of $f_{a,b}$ by
the factorization of $x^5+x+a^{-1}$ and Kloosterman sums. For the second case $a\in\mathbb{F}_{2^{\frac{m}{2}}}$, we give all the hyper-bent functions with the help of generalized Ramanujan-Nagell equations.

The rest of paper is organized as follows.
In Section 2, we give some notations and recall
some basic knowledge for the paper. In Section 3, we study the hyper-bentness of Boolean functions in $\mathcal{H}_{n}$ for two cases (1) $b=1$ or $b^4+b+1=0$;
(2) $a\in\mathbb{F}_{\frac{m}{2}}$. Finally, Section 4 makes a conclusion.

\section{Preliminaries}
\label{sec:2}
\subsection{Boolean functions}
\label{subsec:2.1}
Let $n$ be a positive integer. $\mathbb{F}_{2}^n$ is a n-dimensional vector space defined over finite field $\mathbb{F}_{2}$.
Take two vectors
$x=(x_{1},\cdots,x_{n})$ and $y=(y_{1},\cdots, x_{n})$ in $\mathbb{F}_{2}^n$. Their dot product is defined by
$$
\langle x,y\rangle:=\sum_{i=1}^{n}x_{i}y_{i}.
$$
$\mathbb{F}_{2^n}$ is a finite field
with $2^n$ elements and $\mathbb{F}_{2^n}^*$ is the multiplicative group of $\mathbb{F}_{2^n}$.
Let $\mathbb{F}_{2^k}$ be a subfield of
$\mathbb{F}_{2^n}$. The trace function from
$\mathbb{F}_{2^n}$ to $\mathbb{F}_{2^k}$,
denoted by $\mathrm{Tr}_{k}^{n}$, is a map
defined as
$$
\mathrm{Tr}_{k}^{n}(x):
= x+ x^{2^k}+
x^{2^{2k}}+\cdots+ x^{2^{n-k}}.
$$
When $k=1$, $\mathrm{Tr}_{1}^{n}$ is called the absolute trace. The trace function $\mathrm{Tr}_{k}^{n}$ satisfies the following
properties.
\begin{align*}
&\mathrm{Tr}_{k}^{n}(ax+by)=
a\mathrm{Tr}_{k}^{n}(x)+b\mathrm{Tr}_{k}^{n}(y),
~~a,b\in \mathbb{F}_{2^k}, x,y\in \mathbb{F}_{2^n}. \\
& \mathrm{Tr}_{k}^{n}(x^{2^k})
=\mathrm{Tr}_{k}^{n}(x), ~~x\in \mathbb{F}_{2^n}.
\end{align*}
When $\mathbb{F}_{2^k}\subseteq \mathbb{F}_{2^r}
\subseteq \mathbb{F}_{2^n}$, the trace function
$\mathrm{Tr}_{k}^{n}$ satisfies the following  transitivity property.
$$
\mathrm{Tr}_{k}^{n}(x)=\mathrm{Tr}_{k}^{r}(
\mathrm{Tr}_{r}^{n}(x)), ~~~x\in \mathbb{F}_{2^n}.
$$
A Boolean function over $\mathbb{F}_{2}^n$ or $\mathbb{F}_{2^n}$ is an $\mathbb{F}_{2}$-valued function. The absolute trace function is  a useful tool in constructing Boolean functions over
$\mathbb{F}_{2^n}$. From the absolute trace function, a dot product  over
$\mathbb{F}_{2^n}$ is defined by
$$
\langle x,y\rangle:= \mathrm{Tr}_{1}^{n}(xy),
~~~x,y\in \mathbb{F}_{2^n}.
$$
A Boolean function over $\mathbb{F}_{2^n}$ is often
represented by the algebraic normal form (ANF):
$$
f(x_{1}, \cdots, x_{n})=
\sum_{I\subseteq \{1,\cdots,n\}}
a_{I}(\prod_{i\in I}x_{i}),~~~ a_{I}\in \mathbb{F}_{2}.
$$
When $I=\emptyset$, let $\prod\limits_{i\in I}=1$. The terms $\prod\limits_{i\in I}x_{i}$ are called
monomials. The algebraic degree of
a Boolean function $f$ is the globe degree of
its ANF, that is, $\mathrm{deg}(f):=\mathrm{max}
\{\#(I)|a_{I}\neq 0\}$, where $\#(I)$ is the order of $I$ and $\#(\emptyset)=0$.

Another representation of a Boolean function is
of the form
$$
f(x)=\sum_{j=0}^{2^n-1}a_{j}x^j.
$$
In order to make $f$ a Boolean function, we should require $a_{0}, a_{2^n-1}\in\mathbb{F}_{2}$
 and $a_{2j}=a_{j}^2$, where
$2j$ is taken modulo $2^n-1$.
This makes that $f$ can be represented by
a trace expansion of the form
$$
f(x)=\sum_{j\in \Gamma_{n}}
\mathrm{Tr}_{1}^{o(j)}(a_{j}x^j)+
\epsilon(1+x^{2^{n}-1})
$$
called its polynomial form, where
\begin{itemize}
\item $\Gamma_n$ is the set of integers obtained by choosing one element in each cyclotomic class of 2 module
    $2^n-1$ ($j$ is often chosen as the smallest element in its cyclotomic class, called the coset leader of the class);
\item $o(j)$ is the size of the cyclotomic coset of 2 modulo $2^n-1$ containing $j$;
\item $a_{j}\in\mathbb{F}_{2^{o(j)}}$;
\item $\epsilon=wt(f)\pmod 2$, where
$\mathrm{wt}(f):=\#\{
x\in\mathbb{F}_{2^n}|f(x)=1
\}$.
\end{itemize}
Let $\mathrm{wt}_{2}(j)$ be the number
of 1's in its binary expansion. Then
$$
\mathrm{deg}(f)=
\left\{
  \begin{array}{l}
    n, ~~~~~~~~~~~~~~~~~~~~~~~~~\epsilon=1 \\
    \mathrm{max}\{
\mathrm{wt}_{2}(j)|a_{j}\neq 0\}, ~~\epsilon=0.
  \end{array}
\right.
$$
\subsection{Bent and hyper-bent functions}
\label{subsec:2.2}
The "sign" function of a Boolean function $f$ is defined by
$$
\chi(f):=(-1)^f.
$$
When $f$ is a Boolean function over
 $\mathbb{F}_{2}^n$, the Walsh Hadamard transform of $f$ is the discrete Fourier transform
of $\chi(f)$, whose value at
$w\in \mathbb{F}_{2}^n$ is defined by
$$
\widehat{\chi}_{f}(w):=
\sum_{x\in\mathbb{F}_{2}^n}(-1)^{f(x)
+\langle w,x\rangle}.
$$
When $f$ is a Boolean function over  $\mathbb{F}_{2^n}$, the Walsh Hadamard
transform of $f$ is defined by
$$
\widehat{\chi}_{f}(w):=
\sum_{x\in\mathbb{F}_{2^n}}(-1)^{f(x)
+\mathrm{Tr}_{1}^{n}(wx)},
$$
where $w\in\mathbb{F}_{2^n}$.
Then we can define the bent functions.
\begin{definition}\label{defn2.1}
A Boolean function $f:\mathbb{F}_{2^n}\rightarrow
\mathbb{F}_{2}$ is called a bent function, if
$\widehat{\chi}_{f}(w)
=\pm2^{\frac{n}{2}}~ (\forall w\in \mathbb{F}_{2^n})$.
\end{definition}
If $f$ is a bent function, $n$ must be
even. Further, $\mathrm{deg}(f)\leq \frac{n}{2}$ \cite{CC}. Hyper-bent functions are an important  subclass of bent functions. The definition of hyper-bent functions is given below.
\begin{definition}\label{defn2.2}
A bent function $f:\mathbb{F}_{2^n}\rightarrow \mathbb{F}_{2}$ is called a hyper-bent function, if,  for any $i$ satisfying $(i,2^n-1)=1$, $f(x^i)$ is also a bent function.
\end{definition}
\cite{CP} and \cite{AG} proved that
if $f$ is a hyper-bent function, then
$\mathrm{deg}(f)
=\frac{n}{2}$. For a bent function
$f$, $\mathrm{wt}(f)$ is even.  Then
$\epsilon=0$, that is,
$$
f(x)=\sum_{j\in\Gamma_{n}}\mathrm{Tr}_{1}
^{o(j)}(a_{j}x^j).
$$
If  a Boolean function $f$ is defined on
$\mathbb{F}_{2^{\frac{n}{2}}}\times
\mathbb{F}_{2^{\frac{n}{2}}}$, then we have a class of bent functions\cite{JD,RLM}.
\begin{definition}\label{defn2.3}
The Maiorana-McFarland class $\mathcal{M}$
is the set of all the Boolean functions $f$
defined on $\mathbb{F}_{2^{\frac{n}{2}}}\times
\mathbb{F}_{2^{\frac{n}{2}}}$ of the form $
f(x,y)=\langle x,\pi(y)\rangle +g(y)$,
where $x,y\in
\mathbb{F}_{2^{
\frac{n}{2}}}$, $\pi$ is a permutation of $\mathbb{F}_{2^{\frac{n}{2}}}$ and
$g(x)$ is a Boolean function over  $\mathbb{F}_{2^{\frac{n}{2}}}$.
\end{definition}
For Boolean functions over
$\mathbb{F}_{2^{\frac{n}{2}}}\times
\mathbb{F}_{2^{\frac{n}{2}}}$, we have a class of hyper-bent functions $\mathcal{PS}_{ap}$ \cite{CP}.
\begin{definition}\label{defn2.4}
Let $n=2m$, the $\mathcal{PS}_{ap}$ class is the set of all the Boolean functions of the form
$f(x,y)=g(\frac{x}{y})$, where $x,y\in\mathbb{F}_{2^m}$ and
$g$ is a balanced Boolean functions (i.e., $\mathrm{wt}(f)=2^{m-1}$) and $g(0)=0$.
When $y=0$, let $\frac{x}{y}
=xy^{2^{n}-2}=0$.
\end{definition}
Each Boolean function $f$ in $\mathcal{PS}_{ap}$
satisfies $f(\beta z)=f(z)$ and
$f(0)=0$, where $\beta\in\mathbb{F}_{m}^{*}$ and
$z\in\mathbb{F}_{m}\times\mathbb{F}_{m}$.
Youssef and Gong \cite{AG} studied these
functions over $\mathbb{F}_{2^n}$ and gave the following property.
\begin{proposition}\label{prop2.1}
Let $n=2m$, $\alpha$ be a primitive element in $\mathbb{F}_{2^n}$ and  $f$ be a Boolean function over $\mathbb{F}_{2^n}$ such that $f(\alpha^{2^m+1} x)
=f(x) (\forall x\in\mathbb{F}_{2^n})$
and $f(0)=0$, then $f$ is a hyper-bent function if and only if the weight of $(f(1)$,$f(\alpha)$, $f(\alpha^2)$,$\cdots$,
$f(\alpha^{2^m}))$ is $2^{m-1}$.
\end{proposition}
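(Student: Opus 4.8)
The plan is to compute the Walsh--Hadamard transform $\widehat{\chi}_{f}$ in closed form, exploiting that the hypothesis $f(\alpha^{2^m+1}x)=f(x)$ forces $f$ to be constant on each coset of the subgroup $\mathbb{F}_{2^m}^{*}=\langle\alpha^{2^m+1}\rangle$ of $\mathbb{F}_{2^n}^{*}$, and then to upgrade the resulting bentness criterion to hyper-bentness by a symmetry argument. Set $S:=\sum_{i=0}^{2^m}(-1)^{f(\alpha^{i})}$; if the tuple $(f(1),f(\alpha),\dots,f(\alpha^{2^m}))$ has weight $t$, then $S=2^m+1-2t$, so the weight equals $2^{m-1}$ precisely when $S=1$.

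First I would partition $\mathbb{F}_{2^n}^{*}$ into the $2^m+1$ cosets $\alpha^{i}\mathbb{F}_{2^m}^{*}$ ($0\le i\le 2^m$) and, using $f(\alpha^{i}y)=f(\alpha^{i})$ for $y\in\mathbb{F}_{2^m}^{*}$, write for every $w\in\mathbb{F}_{2^n}$
\[
\widehat{\chi}_{f}(w)=1+\sum_{i=0}^{2^m}(-1)^{f(\alpha^{i})}\sum_{y\in\mathbb{F}_{2^m}^{*}}(-1)^{\mathrm{Tr}_{1}^{n}(w\alpha^{i}y)}.
\]
Since $\mathrm{Tr}_{1}^{n}(w\alpha^{i}y)=\mathrm{Tr}_{1}^{m}\!\big(y\,\mathrm{Tr}_{m}^{n}(w\alpha^{i})\big)$ for $y\in\mathbb{F}_{2^m}$, orthogonality of additive characters on $\mathbb{F}_{2^m}$ makes the inner sum equal to $2^m-1$ when $\mathrm{Tr}_{m}^{n}(w\alpha^{i})=0$ and $-1$ otherwise. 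As $\ker\mathrm{Tr}_{m}^{n}=\mathbb{F}_{2^m}$, for $w=0$ every term contributes $2^m-1$, so $\widehat{\chi}_{f}(0)=1+(2^m-1)S$; for $w\neq0$ there is exactly one index $i(w)$ with $w\alpha^{i(w)}\in\mathbb{F}_{2^m}^{*}$ (namely the one whose coset is $w^{-1}\mathbb{F}_{2^m}^{*}$), whence $\widehat{\chi}_{f}(w)=1-S+2^m(-1)^{f(\alpha^{i(w)})}$. Therefore $f$ is bent iff $\widehat{\chi}_{f}(0)=\pm2^m$, i.e. $(2^m-1)S=2^m-1$ (the alternative $(2^m-1)S=-(2^m+1)$ is impossible for $m\ge2$ since $2^m-1\nmid 2^m+1$), i.e. $S=1$; and once $S=1$ the formula for $w\neq0$ automatically yields $\pm2^m$. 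This establishes the equivalence ``$f$ bent $\iff$ weight $=2^{m-1}$''.

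To pass from bent to hyper-bent, fix $i$ with $(i,2^n-1)=1$ and put $g(x):=f(x^{i})$. Then $g(0)=f(0)=0$, and because $(\alpha^{2^m+1})^{i}\in\mathbb{F}_{2^m}^{*}$ while $f$ is invariant under the whole group $\mathbb{F}_{2^m}^{*}=\langle\alpha^{2^m+1}\rangle$, we get $g(\alpha^{2^m+1}x)=f\big((\alpha^{2^m+1})^{i}x^{i}\big)=f(x^{i})=g(x)$; thus $g$ satisfies the hypotheses of the proposition. Moreover $2^m+1\mid 2^n-1$ forces $(i,2^m+1)=1$, so $j\mapsto ij\bmod(2^m+1)$ permutes $\mathbb{Z}/(2^m+1)\mathbb{Z}$; since the coset of $\mathbb{F}_{2^m}^{*}$ containing $\alpha^{ij}$ depends only on $ij\bmod(2^m+1)$ and $g(\alpha^{j})=f(\alpha^{ij})$, the tuple $(g(1),g(\alpha),\dots,g(\alpha^{2^m}))$ is merely a rearrangement of $(f(1),f(\alpha),\dots,f(\alpha^{2^m}))$ and so has the same weight. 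Applying the bentness equivalence to each $g$ shows that every $f(x^{i})$ is bent exactly when $f$ has weight $2^{m-1}$, which, together with the case $i=1$, is the asserted characterization of hyper-bentness.

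The character-sum evaluation of $\widehat{\chi}_{f}$ is routine; the step that genuinely carries the word ``hyper'' is checking that precomposition with $x\mapsto x^{i}$ preserves the coset-invariance hypothesis and merely permutes the defining value tuple, where the arithmetic facts $2^m+1\mid 2^n-1$ and $(i,2^m+1)=1$ are exactly what is needed. I do not anticipate any real difficulty beyond careful bookkeeping with the coset index modulo $2^m+1$.
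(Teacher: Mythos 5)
Your proof is correct. Note that the paper does not actually prove Proposition~\ref{prop2.1}: it is quoted as a known result of Youssef and Gong \cite{AG} (and is essentially Dillon's partial-spread criterion, cf.\ Theorem~\ref{thm2.1} and Proposition~\ref{prop2.3}), so there is no in-paper argument to compare against. Your route --- decomposing $\mathbb{F}_{2^n}^{*}$ into the $2^m+1$ cosets of $\mathbb{F}_{2^m}^{*}=\langle\alpha^{2^m+1}\rangle$, evaluating $\widehat{\chi}_{f}(w)$ by character orthogonality on each coset to get $\widehat{\chi}_{f}(0)=1+(2^m-1)S$ and $\widehat{\chi}_{f}(w)=1-S+2^m(-1)^{f(\alpha^{i(w)})}$ for $w\neq 0$, and then observing that precomposition with $x\mapsto x^{i}$ (with $(i,2^n-1)=1$, hence $(i,2^m+1)=1$) preserves the coset-invariance and merely permutes the value tuple --- is the standard proof and every step checks out. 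The one caveat you already flag is the degenerate case $m=1$, where $(2^m-1)S=-(2^m+1)$ is not excluded and the statement genuinely fails (take $f\equiv 1$ on $\mathbb{F}_{4}^{*}$); this is harmless here since the paper only uses the proposition for $m\equiv 2\pmod 4$, $m\geq 6$.
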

Further, \cite{CP} proved the following result.
\begin{proposition}\label{prop2.2}
Let $f$ be a Boolean function defined in Proposition \ref{prop2.1}. If $f(1)=0$, then $f$ is in $\mathcal{PS}_{ap}$. If $f(1)=1$, then there exists a Boolean function $g$ in $\mathcal{PS}_{ap}$ and
$\delta\in \mathbb{F}_{2^n}^{*}$ satisfying
$f(x)=g(\delta x)$.
\end{proposition}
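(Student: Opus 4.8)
The plan is to reinterpret the functional equation as an invariance under the multiplicative group of the subfield $\mathbb{F}_{2^m}$, to describe such functions by their (constant) values on the $2^m+1$ one-dimensional $\mathbb{F}_{2^m}$-subspaces of $\mathbb{F}_{2^n}$, and then to read off the $\mathcal{PS}_{ap}$ form after a suitable choice of basis. Throughout I use that $f$ is hyper-bent, so that by Proposition~\ref{prop2.1} the weight of $(f(1),f(\alpha),\dots,f(\alpha^{2^m}))$ equals $2^{m-1}$.

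First I would note that, since $2^n-1=(2^m-1)(2^m+1)$ with $\gcd(2^m-1,2^m+1)=1$, the element $\alpha^{2^m+1}$ has order $2^m-1$ and hence generates the cyclic group $\mathbb{F}_{2^m}^{*}\subseteq\mathbb{F}_{2^n}^{*}$. Thus $f(\alpha^{2^m+1}x)=f(x)$ is equivalent to $f(\lambda x)=f(x)$ for all $\lambda\in\mathbb{F}_{2^m}^{*}$, so $f$ is constant on each orbit of the multiplicative action of $\mathbb{F}_{2^m}^{*}$ on $\mathbb{F}_{2^n}$. Apart from $\{0\}$, these orbits are the $2^m+1$ punctured lines $(\mathbb{F}_{2^m}z)\setminus\{0\}$, and since $1,\alpha,\dots,\alpha^{2^m}$ is a transversal of $\mathbb{F}_{2^m}^{*}$ in $\mathbb{F}_{2^n}^{*}$, the entries of $(f(1),\dots,f(\alpha^{2^m}))$ are exactly the values of $f$ on these lines (together with $f(0)=0$). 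By Proposition~\ref{prop2.1}, $f$ equals $1$ on exactly $2^{m-1}$ of the lines and $0$ on the remaining $2^{m-1}+1\geq 2$ of them.

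Assume $f(1)=0$. Then the subfield line $\mathbb{F}_{2^m}$ is a zero-line, and by the count above I can choose $\omega$ on a second zero-line, so $\{1,\omega\}$ is an $\mathbb{F}_{2^m}$-basis with $f(\omega)=0$; identify $\mathbb{F}_{2^n}$ with $\mathbb{F}_{2^m}\times\mathbb{F}_{2^m}$ via $x+y\omega\leftrightarrow(x,y)$. I would then define $g\colon\mathbb{F}_{2^m}\to\mathbb{F}_2$ by $g(0)=0$ and $g(c)=f(c+\omega)$ for $c\neq 0$, and verify $f(x,y)=g(x/y)$: for $y=0$ the point $x+y\omega=x$ lies on $\mathbb{F}_{2^m}$, so $f=0$, matching $g(0)$ since $x/y=0$ by convention; for $y\neq0$ one has $x+y\omega=y\bigl((x/y)+\omega\bigr)$, hence $f(x,y)=f\bigl((x/y)+\omega\bigr)=g(x/y)$ by $\mathbb{F}_{2^m}^{*}$-invariance, the value $x/y=0$ being covered by $f(\omega)=0=g(0)$. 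Moreover $g$ is balanced: its support is in bijection with the set of one-lines of $f$ among $\{\mathbb{F}_{2^m}(c+\omega):c\in\mathbb{F}_{2^m}\}$, and the only line outside this family, namely $\mathbb{F}_{2^m}$, is a zero-line, so $|g^{-1}(1)|$ equals the total number of one-lines, which is $2^{m-1}$. Hence $f\in\mathcal{PS}_{ap}$.

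For $f(1)=1$ I would reduce to the previous case: pick $\delta\in\mathbb{F}_{2^n}^{*}$ lying on one of the (at least two) zero-lines of $f$, so $f(\delta)=0$; then $h(x):=f(\delta x)$ is $\mathbb{F}_{2^m}^{*}$-invariant, satisfies $h(0)=0$ and $h(1)=f(\delta)=0$, and has the same weight as $f$, so $h\in\mathcal{PS}_{ap}$ by the first case, and $f(x)=h(\delta^{-1}x)$ gives the claim with $g:=h$ and $\delta$ replaced by $\delta^{-1}$. The only genuinely delicate point is in the $f(1)=0$ case: a single coordinate axis being a zero-line does not suffice to exhibit the $\mathcal{PS}_{ap}$ shape — a second zero-line is needed to host $\omega$ — and its existence is precisely what Proposition~\ref{prop2.1} (the weight being exactly $2^{m-1}$ rather than larger) guarantees; the rest is routine bookkeeping around the convention $x/y=0$ for $y=0$.
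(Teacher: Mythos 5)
The paper does not actually prove this proposition: it is quoted from Carlet and Gaborit \cite{CP}, so there is no internal proof to compare against. Your argument is correct and is essentially the standard proof of that cited result: the invariance under $\alpha^{2^m+1}$ is exactly invariance under $\mathbb{F}_{2^m}^{*}$, the $2^m+1$ multiplicative cosets are the punctured lines of the subfield spread, Proposition~\ref{prop2.1} gives $2^{m-1}$ one-lines and hence $2^{m-1}+1\geq 2$ zero-lines, and choosing the basis $\{1,\omega\}$ with both $1$ and $\omega$ on zero-lines produces coordinates in which $f(x,y)=g(x/y)$ with $g$ balanced and $g(0)=0$. The points that genuinely need checking --- the existence of a second zero-line to host $\omega$, the fact that the lines $\mathbb{F}_{2^m}(c+\omega)$ together with $\mathbb{F}_{2^m}$ exhaust the spread (so that $\lvert g^{-1}(1)\rvert$ equals the full count of one-lines), and the fact that $h(x)=f(\delta x)$ inherits the invariance, the weight $2^{m-1}$ on a transversal, and hence hyper-bentness --- are all handled correctly.
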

Let $\mathcal{PS}_{ap}^{\#}$ be the set of hyper-bent functions in the form of
$g(\delta x)$, where
$g(x)\in\mathcal{PS}_{ap}$,
$\delta\in\mathbb{F}_{2^n}^*$ and
$g(\delta)=1$.
Charpin and Gong expressed Proposition \ref{prop2.2} in a different version below.
\begin{proposition}\label{prop2.3}
Let $n=2m$, $\alpha$ be a primitive element of
$\mathbb{F}_{2^n}$ and $f$ be a Boolean function
over  $\mathbb{F}_{2^n}$ satisfying $f(\alpha^{2^{m+1}}x)
=f(x)~(\forall x\in\mathbb{F}_{2^n})$ and
$f(0)=0$. Let $\xi$ be a primitive
$2^m+1$-th root in $\mathbb{F}_{2^n}^*$.
Then $f$ is a hyper-bent function if and only if the cardinality of the set
~$\{i|f(\xi^i)=1,
0\leq i\leq 2^m\}$ is $2^{m-1}$.
\end{proposition}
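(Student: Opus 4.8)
The plan is to derive Proposition~\ref{prop2.3} from Proposition~\ref{prop2.1} by a change of transversal. Both statements concern a Boolean function $f$ on $\mathbb{F}_{2^n}$ with $f(0)=0$ that is invariant under multiplication by $\alpha^{2^m+1}$; the only difference between them is that Proposition~\ref{prop2.1} samples $f$ at the powers $1,\alpha,\dots,\alpha^{2^m}$, while Proposition~\ref{prop2.3} samples it at $\xi^0,\xi^1,\dots,\xi^{2^m}$. I would show that both families of sample points are systems of coset representatives for the same subgroup, so that $f$ realizes the same multiset of values on each; the two weight conditions are then literally the same condition, and the equivalence drops out of Proposition~\ref{prop2.1}.

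First I would set up the group theory. Since $2^n-1=(2^m-1)(2^m+1)$, the element $\alpha^{2^m+1}$ has multiplicative order $2^m-1$, hence generates $\mathbb{F}_{2^m}^{*}$; therefore the hypothesis $f(\alpha^{2^m+1}x)=f(x)$ is equivalent to $f(\beta x)=f(x)$ for all $\beta\in\mathbb{F}_{2^m}^{*}$, i.e.\ $f$ is constant on every coset of $\mathbb{F}_{2^m}^{*}$ in $\mathbb{F}_{2^n}^{*}$ (and $f(0)=0$). Next, because $\gcd(2^m-1,2^m+1)=1$, the group $\mu=\langle\xi\rangle$ of $(2^m+1)$-th roots of unity meets $\mathbb{F}_{2^m}^{*}$ only in $1$, and $|\mu|\cdot|\mathbb{F}_{2^m}^{*}|=(2^m+1)(2^m-1)=|\mathbb{F}_{2^n}^{*}|$, so $\mathbb{F}_{2^n}^{*}=\mu\times\mathbb{F}_{2^m}^{*}$. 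Thus $\xi^0,\xi^1,\dots,\xi^{2^m}$ lie in pairwise distinct cosets of $\mathbb{F}_{2^m}^{*}$, and being $2^m+1$ in number they form a complete transversal. Likewise $1,\alpha,\dots,\alpha^{2^m}$ is a complete transversal, since $\alpha^i$ and $\alpha^j$ lie in the same coset exactly when $(2^m+1)\mid(i-j)$.

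The proposition now follows immediately: as $f$ is constant on cosets, both $\{f(\alpha^i):0\le i\le 2^m\}$ and $\{f(\xi^i):0\le i\le 2^m\}$ equal, as multisets, the family of values of $f$ on the coset space $\mathbb{F}_{2^n}^{*}/\mathbb{F}_{2^m}^{*}$, so that
$$\#\{i:f(\alpha^i)=1,\ 0\le i\le 2^m\}=\#\{i:f(\xi^i)=1,\ 0\le i\le 2^m\}.$$
By Proposition~\ref{prop2.1}, $f$ is hyper-bent if and only if the left-hand side equals $2^{m-1}$; hence $f$ is hyper-bent if and only if the right-hand side equals $2^{m-1}$, which is exactly the assertion.

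There is no real obstacle; this is a reindexing argument. The only points requiring attention are the verification that $\{\xi^i:0\le i\le 2^m\}$ is a genuine transversal (which rests on $\gcd(2^m-1,2^m+1)=1$ together with the order count) and the trivial bookkeeping around $0$, which lies in neither sampling family and at which $f$ vanishes by hypothesis. One should also confirm that the exponent in the invariance hypothesis matches the one in Proposition~\ref{prop2.1} so that the latter applies verbatim.
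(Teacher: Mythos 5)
Your argument is correct, and it is worth noting that the paper itself supplies no proof of Proposition~\ref{prop2.3} at all: it is simply announced as Charpin and Gong's reformulation of Propositions~\ref{prop2.1} and~\ref{prop2.2}. Your transversal argument is the natural way to make that reformulation rigorous, and every step checks out: $\alpha^{2^m+1}$ generates the unique subgroup of order $2^m-1$ of the cyclic group $\mathbb{F}_{2^n}^{*}$, namely $\mathbb{F}_{2^m}^{*}$, so the invariance hypothesis says $f$ is constant on cosets of $\mathbb{F}_{2^m}^{*}$; since $\gcd(2^m-1,2^m+1)=1$, both $\{\xi^i\}_{0\le i\le 2^m}$ and $\{\alpha^i\}_{0\le i\le 2^m}$ are complete transversals of those cosets, so the two weight counts coincide and Proposition~\ref{prop2.1} transfers verbatim. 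You were also right to flag the exponent: the hypothesis in the statement reads $f(\alpha^{2^{m+1}}x)=f(x)$, which is a typo for $f(\alpha^{2^m+1}x)=f(x)$ (the form used in Proposition~\ref{prop2.1} and again in Section~3.1); with the exponent $2^{m+1}$ the invariance group would be $\langle\alpha^{2^{m+1}}\rangle=\mathbb{F}_{2^n}^{*}$ (as $\gcd(2^{m+1},2^n-1)=1$) and the statement would be vacuous, so the correction is forced. The only cosmetic remark is that one could equally well cite Proposition~\ref{prop2.2} (membership in $\mathcal{PS}_{ap}$ or $\mathcal{PS}_{ap}^{\#}$), but deriving the claim directly from Proposition~\ref{prop2.1} as you do is cleaner and loses nothing.
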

In fact, Dillon \cite{JD} introduced a
bigger class of bent functions the
Partial Spreads class $\mathcal{PS}^{-}$ than
$\mathcal{PS}_{ap}$ and $\mathcal{PS}_{ap}^{\#}$.
\begin{theorem}\label{thm2.1}
Let $E_{i}(i=1,2,\cdots,N)$ be $N$ subspaces in  $\mathbb{F}_{2^n}$ of dimension $m$ such that  $E_{i}\cap E_{j}=\{0\}$ for all
$i,j\in\{1,\cdots,N\}$ with
$i\neq j$. Let $f$ be a Boolean function over  $\mathbb{F}_{2^n}$. If the support of $f$ is given by $supp(f)=\bigcup\limits_{i=1}^{N}E_{i}^{*}$,
where $E_{i}^{*}=E_{i}\backslash \{0\}$, then
$f$ is a bent function if and only if $N=2^{m-1}$.
\end{theorem}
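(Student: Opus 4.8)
The plan is to read off the Walsh--Hadamard transform of $f$ directly from the description of its support. The crucial point is that the hypothesis $E_i\cap E_j=\{0\}$ for $i\neq j$ makes $\mathrm{supp}(f)=\bigcup_{i=1}^N E_i^{*}$ a \emph{disjoint} union, so that for every $w\in\mathbb{F}_{2^n}$,
\[
\widehat{\chi}_{f}(w)=\sum_{x\in\mathbb{F}_{2^n}}(-1)^{\mathrm{Tr}_{1}^{n}(wx)}-2\sum_{i=1}^{N}\sum_{x\in E_i^{*}}(-1)^{\mathrm{Tr}_{1}^{n}(wx)}.
\]
First I would use the elementary fact that $\sum_{x\in E_i}(-1)^{\mathrm{Tr}_{1}^{n}(wx)}$ equals $2^{m}$ when $w$ lies in the dual subspace $E_i^{\perp}=\{y\in\mathbb{F}_{2^n}:\mathrm{Tr}_{1}^{n}(xy)=0\ \forall x\in E_i\}$ and vanishes otherwise; since $\dim E_i=m$ and $n=2m$, each $E_i^{\perp}$ again has dimension $m$. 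Writing $k(w):=\#\{\,i:w\in E_i^{\perp}\,\}$, this yields $\widehat{\chi}_{f}(0)=2^{n}-2N(2^{m}-1)$ and, for $w\neq 0$, $\widehat{\chi}_{f}(w)=2N-2^{m+1}k(w)$.

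For the ``only if'' direction, bentness forces in particular $\widehat{\chi}_{f}(0)=\pm 2^{m}$; substituting the formula above gives $2N(2^{m}-1)=2^{2m}\mp 2^{m}$, and since $2^{m}-1$ is odd and coprime to the power of two $2^{m-1}$ (for $m\geq 2$), the minus sign admits no integer solution, leaving only $N=2^{m-1}$.

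For the ``if'' direction, assume $N=2^{m-1}$. Then $\widehat{\chi}_{f}(0)=2^{2m}-2^{m}(2^{m}-1)=2^{m}$, and for $w\neq 0$ we get $\widehat{\chi}_{f}(w)=2^{m}(1-2k(w))$, so it remains to show $k(w)\in\{0,1\}$ for every nonzero $w$. Here I would invoke Parseval's identity $\sum_{w\in\mathbb{F}_{2^n}}\widehat{\chi}_{f}(w)^{2}=2^{2n}$ together with the straightforward double count $\sum_{w\neq 0}k(w)=\sum_{i=1}^{N}(\#E_i^{\perp}-1)=2^{m-1}(2^{m}-1)$. Removing the $w=0$ term (whose square is $2^{2m}$) and dividing by $2^{2m}$, Parseval becomes $\sum_{w\neq 0}(1-2k(w))^{2}=2^{2m}-1$; expanding the square and inserting the count collapses this to $\sum_{w\neq 0}k(w)^{2}=\sum_{w\neq 0}k(w)$. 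Since the $k(w)$ are nonnegative integers, this equality forces $k(w)(k(w)-1)=0$, i.e.\ $k(w)\in\{0,1\}$, whence $\widehat{\chi}_{f}(w)=\pm 2^{m}$ for all $w$ and $f$ is bent.

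The hard part is precisely this last step, $k(w)\leq 1$ for $w\neq 0$; the rest is bookkeeping with the Walsh transform. The Parseval argument above is the slick way to obtain it without analyzing the mutual positions of the dual subspaces $E_i^{\perp}$ directly. I would also flag one degenerate corner: for $n=2$ ($m=1$) the minus sign in the $w=0$ equation can reappear and the statement would need care there, but this is immaterial in the regime $m\equiv 2\pmod 4$ relevant to the paper.
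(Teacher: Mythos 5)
The paper does not actually prove this statement: Theorem~2.1 is quoted verbatim from Dillon's thesis \cite{JD} as background for the $\mathcal{PS}^{-}$ class, so there is no in-paper argument to compare against. Your proof is correct and self-contained. The Walsh computation, the evaluation at $w=0$, and the Parseval bookkeeping all check out, and your caveat about $m=1$ is warranted (in $\mathbb{F}_4$ one can take $N=3$ lines and still get a bent function, so the ``only if'' genuinely needs $m\geq 2$). Two remarks. First, the step you single out as ``the hard part,'' namely $k(w)\leq 1$ for $w\neq 0$, has a one-line direct proof that avoids Parseval entirely: since $E_i\cap E_j=\{0\}$ and $\dim E_i=\dim E_j=m=n/2$, we get $E_i+E_j=\mathbb{F}_{2^n}$, hence $E_i^{\perp}\cap E_j^{\perp}=(E_i+E_j)^{\perp}=\{0\}$, so a nonzero $w$ lies in at most one $E_i^{\perp}$. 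This is the classical route; your Parseval argument is a valid (and rather elegant) substitute, but it is doing more work than necessary, and the direct version also makes the ``if'' direction independent of any counting identity. Second, a small expository gap: to rule out $\widehat{\chi}_f(0)=-2^{m}$ you need not only that $2^{m}-1$ is odd but also that $\gcd(2^{m}-1,\,2^{m}+1)=1$ (which holds since their difference is $2$ and both are odd), forcing $2^{m}-1\mid 2$; as written, coprimality to $2^{m-1}$ alone does not finish the divisibility argument. Neither point affects the validity of the proof.
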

The set of all the functions in Theorem \ref{thm2.1} is defined by $\mathcal{PS}^{-}$.

\subsection{Dickson polynomials}
Now we recall the knowledge of Dickson polynomials over  $\mathbb{F}_{2}$ \cite{GRG}. For $r>0$, Dickson
polynomials are given by
$$
D_{r}(x)=\sum_{i=0}^{\lfloor \frac{r}{2}\rfloor}
\frac{r}{r-i}{{r-i \choose i}}x^{r-2i}, r=2,3,\cdots .
$$
Further, Dickson polynomials can be also defined by the following recurrence relation.
$$
D_{i+2}(x)=xD_{i+1}+D_{i}(x)
$$
with initial values
$$
D_{0}(x)=0, D_{1}(x)=x.
$$
Some properties of Dickson polynomials are given below.
\begin{itemize}
\item $\mathrm{deg}(D_{r}(x))=r$.
\item $D_{rp}(x)=D_{r}(D_{p}(x))$.
\item $D_{r}(x+x^{-1})=x^{r}+x^{-r}$.
\end{itemize}
The first few Dickson polynomials with odd $r$ are
\begin{align*}
&D_{1}(x)=x,\\
&D_{3}(x)=x+x^3,\\
&D_{5}(x)=x+x^3+x^5,\\
&D_{7}(x)=x+x^5+x^7,\\
&D_{9}(x)=x+x^5+x^7+x^9,\\
&D_{11}(x)=x+x^3+x^5+x^9+x^{11}.
\end{align*}
The following proposition gives
some properties on Dickson polynomials
\cite{JH,GRG}.
\begin{proposition}\label{prop2.4}
Let $m$ and  $k$ be two positive integers.
Let $x_{0}, y_{0}\in\mathbb{F}_{2^m}$ and $y_{0}=D_{k}(x_{0})$, then
$$
\#\{x\in\mathbb{F}_{2^m}|D_{k}(x)=y_{0}\}=
\left\{
  \begin{array}{l}
    d_{1}\quad \text{If} ~ x^2+x_{0}x+1~\text{is irreducible over}~\mathbb{F}_{2^m}~\text{and}~
y_{0}\neq 0\\
    d_{2}\quad \text{If} ~ x^2+x_{0}x+1~\text{is reducible over}~\mathbb{F}_{2^m}~\text{and}~
y_{0}\neq 0 \\
\frac{d_{1}+d_{2}}{2}\quad \text{If}~y_{0}=0
  \end{array}
\right.
$$
where $d_{1}=(k,2^m-1)$ and $d_{2}=(k,2^m+1)$.
\end{proposition}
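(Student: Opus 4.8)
The plan is to linearise the Dickson polynomial via the substitution $x=t+t^{-1}$ and reduce the count to counting roots of unity in two cyclic subgroups of $\mathbb{F}_{2^{2m}}^{*}$. That group is cyclic of order $(2^m-1)(2^m+1)$; let $G_{-}$ and $G_{+}$ be its subgroups of orders $2^m-1$ and $2^m+1$, so that $G_{-}\cap G_{+}=\{1\}$ since $2^m-1$ is odd. An element $t$ satisfies $t+t^{-1}\in\mathbb{F}_{2^m}$ precisely when $t\in G_{-}\cup G_{+}$ (since $(t+t^{-1})^{2^m}=t+t^{-1}$ forces $t^{2^m}=t$ or $t^{2^m}=t^{-1}$), and every $x\in\mathbb{F}_{2^m}$ is of the form $x=t+t^{-1}$ with $t$ a root of $T^{2}+xT+1$, the two roots being $t$ and $t^{-1}$. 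Writing $x_{0}=\zeta+\zeta^{-1}$ and using the identity $D_{k}(t+t^{-1})=t^{k}+t^{-k}$, we obtain $y_{0}=D_{k}(x_{0})=\zeta^{k}+\zeta^{-k}$. The reducibility of $x^{2}+x_{0}x+1$ over $\mathbb{F}_{2^m}$ is exactly the information of which of $G_{-},G_{+}$ contains $\zeta$, and this dichotomy drives the three cases.

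First I would rewrite the equation $D_{k}(x)=y_{0}$. With $x=t+t^{-1}$ and $t\in G_{-}\cup G_{+}$, it reads $t^{k}+t^{-k}=\zeta^{k}+\zeta^{-k}$, i.e. $t^{k}$ is a root of $Z^{2}+y_{0}Z+1$, whose roots are $\zeta^{k}$ and $\zeta^{-k}$; hence $t^{k}\in\{\zeta^{k},\zeta^{-k}\}$. Put $S=\{t\in G_{-}\cup G_{+}:t^{k}\in\{\zeta^{k},\zeta^{-k}\}\}$. The map $\phi(t)=t+t^{-1}$ sends $S$ onto the solution set; it is two-to-one, the fibre over a point being $\{t,t^{-1}\}$, except at the single fixed point $t=1$ of $t\mapsto t^{-1}$ in characteristic two, which maps to $x=0$. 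Because $S$ is stable under inversion, the number of preimages is $N=|S|/2$ when $1\notin S$ and $N=(|S|+1)/2$ when $1\in S$. It then remains to compute $|S|$ by counting $k$-th roots of unity in the appropriate subgroup.

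Next I would evaluate $|S|$. When $y_{0}\neq0$ we have $\zeta^{2k}\neq1$, so $\zeta^{k}\neq\zeta^{-k}$ and $\zeta^{k}\neq1$; thus the sets $\{t^{k}=\zeta^{k}\}$ and $\{t^{k}=\zeta^{-k}\}$ are disjoint, all solutions lie in the one subgroup already containing $\zeta$ (a solution in the other would force $\zeta^{k}\in G_{-}\cap G_{+}=\{1\}$, impossible), and $t=1\notin S$. In the irreducible case $\zeta$ lies in the subgroup of order $2^m-1$, so each equation $t^{k}=\zeta^{\pm k}$ (already solved by $t=\zeta^{\pm1}$) has $(k,2^m-1)=d_{1}$ solutions there, giving $|S|=2d_{1}$ and $N=d_{1}$. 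In the reducible case $\zeta$ lies in the subgroup of order $2^m+1$, giving $|S|=2d_{2}$ and $N=d_{2}$. When $y_{0}=0$ the equation collapses to $t^{2k}=1$; since $2^m-1$ and $2^m+1$ are odd we have $(2k,2^m-1)=(k,2^m-1)=d_{1}$ and $(2k,2^m+1)=(k,2^m+1)=d_{2}$, so $G_{-}$ contributes $d_{1}$ solutions and $G_{+}$ contributes $d_{2}$, overlapping only at $t=1$; hence $|S|=d_{1}+d_{2}-1$, and since $1\in S$ with $\phi(1)=0$ we get $N=(|S|+1)/2=(d_{1}+d_{2})/2$, an integer as $d_{1},d_{2}$ are both odd.

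The delicate point I expect is the bookkeeping at the boundary element $t=1$ together with the separation of the two subgroups. In the nonzero cases one must check both that no solution spills into the other subgroup and that $t=1$ is excluded, whereas for $y_{0}=0$ it is exactly the shared fixed point $t=1$ that causes the overlap and yields the average $(d_{1}+d_{2})/2$. Keeping these two regimes apart is precisely where the hypothesis $y_{0}=0$ versus $y_{0}\neq0$ enters, and it is the only step requiring genuine care.
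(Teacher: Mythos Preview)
The paper does not supply a proof of this proposition; it is quoted from the literature (Dillon--Dobbertin and the Dickson polynomial monograph), so there is nothing to compare against. Your parametrisation $x=t+t^{-1}$ with $t\in G_{-}\cup G_{+}$ and the reduction to counting solutions of $t^{k}=\zeta^{\pm k}$ in the relevant cyclic subgroup is exactly the standard argument, and your bookkeeping at the fixed point $t=1$ and in the $y_{0}=0$ case is correct.

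There is, however, a genuine slip in your case assignment. You set up $G_{-}=\{t:t^{2^m}=t\}=\mathbb{F}_{2^m}^{*}$ and $G_{+}=\{t:t^{2^m}=t^{-1}\}$ correctly, but then write ``in the irreducible case $\zeta$ lies in the subgroup of order $2^m-1$''. This is backwards: if $T^{2}+x_{0}T+1$ is \emph{irreducible} over $\mathbb{F}_{2^m}$, its root $\zeta$ is \emph{not} in $\mathbb{F}_{2^m}^{*}=G_{-}$; rather $\zeta^{2^m}=\zeta^{-1}$, so $\zeta\in G_{+}$, the group of order $2^m+1$, and the count is $(k,2^m+1)=d_{2}$. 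The reducible case gives $\zeta\in G_{-}$ and count $d_{1}$. A sanity check: with $m=2$, $k=3$, $x_{0}=\omega\in\mathbb{F}_{4}$, one has $\mathrm{Tr}_{1}^{2}(\omega^{-1})=1$, so by Proposition~2.5 the quadratic is irreducible, while $D_{3}^{-1}(\omega^{2})=\{\omega\}$ has size $1=(3,5)=d_{2}$, not $(3,3)=d_{1}=3$.

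In fact the proposition as printed in the paper carries the same transposition of $d_{1}$ and $d_{2}$ in its first two lines; the paper's own use of the result in the proof of Proposition~3.7(2)---that $D_{5}$ permutes $\{x:\mathrm{Tr}_{1}^{m}(x^{-1})=0\}$, which by Proposition~2.5 is the \emph{reducible} locus---only works because $(5,2^m-1)=1$ for $m\equiv2\pmod4$, i.e.\ with the corrected assignment. Once you swap the two labels, your proof is complete.
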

The reducibility of the polynomial $x^2+x_{0}x+1$ can be determined by the following proposition.
\begin{proposition}\label{prop2.5}
Let $m$ be a positive integer and  $x_{0}\in\mathbb{F}_{2^m}$. Then $x^2+x_{0}x+1$ is reducible over $\mathbb{F}_{2^m}$ if and only if $\mathrm{Tr}_{1}^{m}(\frac{1}{x_{0}})=0$.
\end{proposition}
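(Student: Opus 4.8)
The plan is to reduce the reducibility question to the standard additive form of Hilbert's Theorem~90 in characteristic~$2$. Since a polynomial of degree~$2$ over a field is reducible precisely when it has a root in that field, it suffices to decide when $x^2+x_{0}x+1$ has a root in $\mathbb{F}_{2^m}$. The case $x_{0}=0$ is degenerate, since then $x^2+1=(x+1)^2$ is reducible while $1/x_{0}$ is not defined, so I assume throughout that $x_{0}\in\mathbb{F}_{2^m}^{*}$.

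First I would perform the change of variable $x=x_{0}y$. This turns $x^2+x_{0}x+1=0$ into $x_{0}^2y^2+x_{0}^2y+1=0$, and dividing by $x_{0}^2$ gives the normalized equation $y^2+y=x_{0}^{-2}$. As $x_{0}y$ runs over all of $\mathbb{F}_{2^m}$ when $y$ does, the polynomial $x^2+x_{0}x+1$ has a root in $\mathbb{F}_{2^m}$ if and only if $y^2+y=x_{0}^{-2}$ is solvable in $\mathbb{F}_{2^m}$.

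Next I would invoke the well-known fact that, for $\delta\in\mathbb{F}_{2^m}$, the equation $y^2+y=\delta$ has a solution $y\in\mathbb{F}_{2^m}$ if and only if $\mathrm{Tr}_{1}^{m}(\delta)=0$. One direction follows at once from additivity of the trace together with $\mathrm{Tr}_{1}^{m}(y^2)=\mathrm{Tr}_{1}^{m}(y)$, which forces $\mathrm{Tr}_{1}^{m}(\delta)=0$; the converse holds because the $\mathbb{F}_{2}$-linear map $y\mapsto y^2+y$ has kernel $\mathbb{F}_{2}$, hence image of $\mathbb{F}_{2}$-dimension $m-1$, and this image must then coincide with the hyperplane $\ker\mathrm{Tr}_{1}^{m}$. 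Applying this with $\delta=x_{0}^{-2}$ shows that $x^2+x_{0}x+1$ is reducible over $\mathbb{F}_{2^m}$ if and only if $\mathrm{Tr}_{1}^{m}(x_{0}^{-2})=0$.

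Finally I would use the Frobenius-invariance of the absolute trace recorded in Section~2, namely $\mathrm{Tr}_{1}^{m}(x_{0}^{-2})=\mathrm{Tr}_{1}^{m}\bigl((x_{0}^{-1})^2\bigr)=\mathrm{Tr}_{1}^{m}(x_{0}^{-1})$, which gives exactly the stated criterion $\mathrm{Tr}_{1}^{m}(1/x_{0})=0$. There is no genuine obstacle in this argument: the only piece of real content is the additive Hilbert~90 statement, and the two things to be careful about are excluding the degenerate value $x_{0}=0$ and correctly passing between $x_{0}^{-2}$ and $x_{0}^{-1}$ via the identity $\mathrm{Tr}_{1}^{m}(z^2)=\mathrm{Tr}_{1}^{m}(z)$.
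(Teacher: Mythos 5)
Your proof is correct: the substitution $x=x_{0}y$ reducing the question to the solvability of $y^2+y=x_{0}^{-2}$, the standard trace criterion for Artin--Schreier equations, and the identity $\mathrm{Tr}_{1}^{m}(z^2)=\mathrm{Tr}_{1}^{m}(z)$ together give exactly the stated criterion, and your handling of the degenerate case $x_{0}=0$ is appropriate. The paper states Proposition~2.5 without proof, treating it as a known fact, and your argument is precisely the standard one it implicitly relies on, so there is nothing to compare beyond noting that you have supplied the omitted details correctly.
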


\subsection{Kloosterman sums and Weil sums}
\label{subsec:2.4}
The Kloosterman sums on $\mathbb{F}_{2^n}$ are:
$$
K_m(a):=\sum_{x \in \mathbb{F}_{2^m}}
\chi(\mathrm{Tr}^m_1(ax+\frac{1}{x})), \quad a\in \mathbb{F}_{2^m}.
$$
Some properties of Kloosterman sums are given by
the following proposition\cite{Hel,LWo}.
\begin{proposition}\label{prop2.6}
Let $a\in \mathbb{F}_{2^m}$.  Then $K_m(a)\in [1-2^{(m+2)/2},1+2^{(m+2)/2}]$ and $4\mid K_m(a)$.
\end{proposition}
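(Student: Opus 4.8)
\medskip
\noindent\textbf{Proof proposal.}\quad The plan is to establish the two assertions separately. Throughout, write $g_a(x)=ax+x^{-1}$ with the convention $x^{-1}=x^{2^m-2}$, so that $g_a(0)=0$ and the term $x=0$ contributes $\chi(\mathrm{Tr}_1^m(0))=1$ to $K_m(a)$; thus $K_m(a)=1+S_a$ with $S_a:=\sum_{x\in\mathbb{F}_{2^m}^*}\chi(\mathrm{Tr}_1^m(g_a(x)))$. First I would dispose of $a=0$: then $S_0=\sum_{y\in\mathbb{F}_{2^m}^*}\chi(\mathrm{Tr}_1^m(y))=-1$, so $K_m(0)=0$, which lies in the asserted interval and is divisible by $4$. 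From now on assume $a\neq 0$.

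For the estimate $|K_m(a)-1|=|S_a|\le 2^{(m+2)/2}$, I would realise $S_a$ geometrically. Since $y\mapsto y^2+y$ is $\mathbb{F}_2$-linear with image $\ker\mathrm{Tr}_1^m$, for every $x\in\mathbb{F}_{2^m}^*$ the equation $y^2+y=g_a(x)$ has exactly $1+\chi(\mathrm{Tr}_1^m(g_a(x)))$ solutions $y\in\mathbb{F}_{2^m}$; hence the affine plane curve $C:\ xy^2+xy+ax^2+1=0$ has exactly $(2^m-1)+S_a$ rational points with $x\neq 0$ and none with $x=0$ (which would force $1=0$). Homogenising gives $\bar C:\ xy^2+xyz+ax^2z+z^3=0$ in $\mathbb{P}^2$; inspecting partial derivatives shows $\bar C$ is smooth in the affine chart and at both of its points at infinity $[1:0:0]$ and $[0:1:0]$ (this is exactly where $a\neq 0$ is used), so $\bar C$ is a smooth plane cubic, of genus $1$. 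Counting, $\#\bar C(\mathbb{F}_{2^m})=\bigl((2^m-1)+S_a\bigr)+2=2^m+1+S_a$, so the Hasse--Weil bound $|\#\bar C(\mathbb{F}_{2^m})-(2^m+1)|\le 2\sqrt{2^m}$ gives precisely $|S_a|\le 2\sqrt{2^m}=2^{(m+2)/2}$, i.e. $K_m(a)\in[1-2^{(m+2)/2},\,1+2^{(m+2)/2}]$. (Alternatively I could just quote Weil's bound $\bigl|\sum_{x\in\mathbb{F}_q^*}\psi(\alpha x+\beta x^{-1})\bigr|\le 2\sqrt{q}$ for a nontrivial additive character $\psi$ and $\alpha,\beta\in\mathbb{F}_q^*$, applied with $q=2^m$, $\psi=\chi\circ\mathrm{Tr}_1^m$, $\alpha=a$, $\beta=1$.)

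For the congruence $4\mid K_m(a)$ I would argue combinatorially. Put $N_\varepsilon=\#\{x\in\mathbb{F}_{2^m}:\mathrm{Tr}_1^m(g_a(x))=\varepsilon\}$, so $N_0+N_1=2^m$ and $K_m(a)=N_0-N_1=2N_0-2^m$; since $m\ge 2$ in the setting of the paper ($m\equiv 2\pmod 4$) we have $4\mid 2^m$, so it suffices to show that $N_0$ is even. The key observation is that $\iota:x\mapsto(ax)^{-1}$ is an involution of $\mathbb{F}_{2^m}^*$ with $g_a(\iota(x))=g_a(x)$, because $a(ax)^{-1}+\bigl((ax)^{-1}\bigr)^{-1}=x^{-1}+ax$; hence $\iota$ permutes $A:=\{x\in\mathbb{F}_{2^m}^*:\mathrm{Tr}_1^m(g_a(x))=0\}$. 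Its unique fixed point in $\mathbb{F}_{2^m}^*$ is $x_0=a^{-1/2}$ (the square root being unique in characteristic $2$), and $g_a(x_0)=a^{1/2}+a^{1/2}=0$, so $x_0\in A$; every other element of $A$ lies in an $\iota$-orbit of size $2$. Together with the fact that $x=0$ also satisfies $\mathrm{Tr}_1^m(g_a(0))=0$, this gives $N_0=|A|+1=(1+2k)+1=2+2k$ for some $k\ge 0$. Hence $N_0$ is even, so $K_m(a)=2N_0-2^m\equiv 0\pmod 4$.

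The only genuinely non-elementary ingredient I expect to need is the Hasse--Weil bound (the Riemann hypothesis for curves over finite fields), which I would simply invoke; with it, the Weil estimate drops out with the constant $1$ exactly, because the two points at infinity of $\bar C$ compensate precisely for the omitted term $x=0$ in $K_m(a)$, so the main work there is just checking smoothness of $\bar C$ at infinity. The divisibility part is entirely elementary once the involution $x\mapsto(ax)^{-1}$ and its single fixed point $a^{-1/2}$ are spotted; the sharper description of exactly which multiples of $4$ arise as Kloosterman values (due to Lachaud and Wolfmann) is not required here.
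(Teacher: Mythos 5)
Your proposal is correct, but it is worth noting that the paper does not actually prove Proposition~\ref{prop2.6} at all: it is quoted from the literature (the references [Hel] and [LWo], i.e.\ Helleseth--Zinoviev and Lachaud--Wolfmann), so any comparison is with those sources rather than with an argument in the text. Your treatment of the bound is essentially the classical one: the substitution $y^2+y=ax+x^{-1}$, the count $\#\bar C(\mathbb{F}_{2^m})=2^m+1+S_a$ after checking smoothness at the two points at infinity (your partial-derivative computation is right: $F_z=a\neq 0$ at $[1:0:0]$ and $F_x=1$ at $[0:1:0]$), and Hasse--Weil for genus $1$ giving exactly $|S_a|\le 2^{(m+2)/2}$. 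Amusingly, this is the very curve $xy^2+xy=ax^2+1$ that the authors themselves introduce later in the proof of Lemma~\ref{lem3.3}, so your route is consistent with the paper's own toolkit. Your proof of $4\mid K_m(a)$ via the involution $x\mapsto(ax)^{-1}$, which preserves $g_a$ and has the single fixed point $a^{-1/2}$ lying in the fibre over trace $0$, is a genuinely more elementary argument than Lachaud--Wolfmann's original derivation (which goes through weights of the dual of the extended Goppa/Melas code); it buys a short self-contained proof at the cost of not recovering their stronger statement that \emph{every} multiple of $4$ in the interval is attained. Two small points deserve attention: (i) the divisibility claim as baldly stated is false for $m=1$ (where $K_1(1)=2$), and you correctly flag that $m\ge 2$ is needed and is guaranteed by the paper's standing hypothesis $m\equiv 2\pmod 4$; (ii) your reduction ``$4\mid K_m(a)$ iff $N_0$ is even'' uses $4\mid 2^m$, i.e.\ again $m\ge 2$, which you state explicitly. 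With those caveats the argument is complete and correct.
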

Weil sums of degree 5 on $\mathbb{F}_{2^m}$ are:
$$
Q_m(a):=\sum_{x \in \mathbb{F}_{2^m}}
\chi(\mathrm{Tr}^m_1(a(x^5+x^3+x))),
\quad a\in \mathbb{F}_{2^m}.
$$
To determine the value of $Q_{m}(a)$, we introduce an affine Artin-Schreier model of the form \cite{MNa,vv}:
$$
{C}: y^2+y=a(x^5+x^3+x),
a\in\mathbb{F}_{2^m}^{*}.
$$
This curve is a smooth supersingular curve
of genus 2 over $\mathbb{F}_{2^m}$ and
it has only one point at infinity.
Let $J(C)$ be the Jacobian of $C$ over
$\mathbb{F}_{2^m}$.
Let $f_{J(C)}(x)$ be the Weil polynomial for $J(C)$ of the form:
$$
f_{J(C)}(x)=x^4+rx^3+sx^2+2^mrx+2^n,
$$
where $(r,s)$ is determined by the
irreducible factors of the polynomial  $P(x)=x^5+x+a^{-1}$.
We write that $P(x)
=(n_{1})^{r_{1}}(n_{2})^{r_{2}}
\cdots(n_{t})^{r_{t}}$ to indicate that
$r_{i}$ of the irreducible factors of
$P(x)$ have degree $n_{i}$.
Some results on the Weil polynomial of $C$ are
given in the following proposition \cite{CNa}.
\begin{proposition}\label{prop2.7}
Let $P(x)$, $r$ and $s$ be defined above, then

{\rm (1)} If $m$ is even, $(r,s)$ is determined by Table $1$.
\begin{table}[htbp]
\caption{$P(x)$ and $(r,s)$ for even $m$}
\label{tab:1}       
\begin{tabular}{ll}
\hline\noalign{\smallskip}
$P(x)$ & $(r,s)$   \\
\noalign{\smallskip}\hline\noalign{\smallskip}
$(5)$ & $(\pm 2^{m/2}, 2^{m/2})$\\
$(1)^2(3)$ & $(\pm 2\cdot2^{m/2}, 3\cdot2^{m})$\\
$(1)(2)^2$ & $(0, 2\cdot2^{m})$\\
$(1)^5$ & $(\pm 4\cdot2^{m/2}, 6\cdot2^{m})$\\
\noalign{\smallskip}\hline
\end{tabular}
\end{table}

{\rm (2)} If $m$ is odd, $(r,s)$ is determined by Table $2$.
\begin{table}[htbp]
\caption{$P(x)$ and $(r,s)$ for odd $m$}
\label{tab:2}       
\begin{tabular}{ll}
\hline\noalign{\smallskip}
$P(x)$ & $(r,s)$ \\
\noalign{\smallskip}\hline\noalign{\smallskip}
$(1)(4)$ & $(0, 0)$\\
$(2)(3)$ & $(0, 2^{m})$\\
$(1)^3(2)$ & $(0, -2\cdot2^{m})$\\
\noalign{\smallskip}\hline
\end{tabular}
\end{table}
\end{proposition}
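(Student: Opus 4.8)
The plan is to read off $(r,s)$ from the two point counts $N_1=\#C(\mathbb{F}_{2^m})$ and $N_2=\#C(\mathbb{F}_{2^{2m}})$, and to compute these by recognizing $\mathrm{Tr}_{1}^{m}(a(x^5+x^3+x))$ as a \emph{quadratic} Boolean function. Homogenizing $C\colon y^2+y=a(x^5+x^3+x)$ shows that the odd pole order $5$ at $x=\infty$ forces a single (ramified) point above infinity, so $C$ is smooth of genus $2$ with exactly one point at infinity; hence $Z_C(T)=L(T)/((1-T)(1-2^mT))$ with $L(T)=\prod_{i=1}^{4}(1-\alpha_iT)$, and the functional equation forces $L(T)=1+rT+sT^2+2^mrT^3+2^{2m}T^4$ with $r,s\in\mathbb{Z}$. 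This is exactly the shape $f_{J(C)}(x)=x^4+rx^3+sx^2+2^mrx+2^n$ of the statement, since $f_{J(C)}(x)=x^4L(1/x)$. A fibrewise count of the affine points gives $2^m+Q_m(a)$, so $N_1=2^m+1+Q_m(a)$, while $N_1=2^m+1-\sum_i\alpha_i=2^m+1+r$; hence $r=Q_m(a)$. Likewise $N_2=2^{2m}+1-\sum_i\alpha_i^2=2^{2m}+1-r^2+2s$, so once $N_1,N_2$ are known, so are $r$ and $s$.

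Since $5=2^2+1$ and $3=2+1$, the function $x\mapsto\mathrm{Tr}_{1}^{m}(ax^{2^2+1})+\mathrm{Tr}_{1}^{m}(ax^{2+1})+\mathrm{Tr}_{1}^{m}(ax)$ has algebraic degree at most $2$, so I would treat it as a quadratic form $Q$ plus a linear term $\ell$ on the $\mathbb{F}_2$-space $\mathbb{F}_{2^m}$; this already gives, for free, that $J(C)$ is supersingular (all $\alpha_i$ are $2^{m/2}$ times roots of unity) and that $Q_m(a)\in\{0,\pm2^{m-h}\}$ with $2h$ the rank of the associated symplectic form $B$. Here $B(x,z)=\mathrm{Tr}_{1}^{m}(z\,L_a(x))$ with $L_a(x)=ax^4+ax^2+a^{1/2}x^{1/2}+a^{1/4}x^{1/4}$, and, raising to the fourth power, the radical of $B$ equals $\ker\tilde L_a\cap\mathbb{F}_{2^m}$ where $\tilde L_a(x)=a^4x^{16}+a^4x^8+a^2x^2+ax=ax\bigl(a^3x^{15}+a^3x^7+ax+1\bigr)$ is $\mathbb{F}_2$-linear. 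A short computation (substituting $ax^5=ax+1$) shows that every root of $P(x)=x^5+x+a^{-1}$ lies in $\ker\tilde L_a$; hence the rank $2h$, and with it $|Q_m(a)|=2^{m-h}$ or $0$, is controlled by $\dim_{\mathbb{F}_2}(\ker\tilde L_a\cap\mathbb{F}_{2^m})$, which is read off from the number of roots of $P$ in $\mathbb{F}_{2^m}$, i.e.\ from the degrees of its irreducible factors. The sign of $Q_m(a)=r$ is the Arf invariant of the nondegenerate part of $Q$, and the case $r=0$ occurs precisely when $\ell$ restricts to a nonzero functional on the radical; both conditions are again expressible through the factorization of $P$. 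This also explains the parity split: only certain factor-degree patterns of $x^5+x+a^{-1}$ occur over $\mathbb{F}_{2^m}$ according as $m$ is even or odd.

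The coefficient $s$ I would obtain by running the same analysis one level up: $N_2=2^{2m}+1+\widetilde Q(a)$ with $\widetilde Q(a)=\sum_{x\in\mathbb{F}_{2^{2m}}}\chi(\mathrm{Tr}_{1}^{2m}(a(x^5+x^3+x)))$, again a quadratic-form Weil sum, now governed by the factorization of $P$ over $\mathbb{F}_{2^{2m}}$ — which is determined by its factorization over $\mathbb{F}_{2^m}$. Combining $s=(\widetilde Q(a)+r^2)/2$ with the value of $r$ from the previous paragraph and running through the finitely many admissible factor-degree patterns (those compatible with $\deg P=5$, with separability of $P$ since $\gcd(P,P')=1$, and with the parity of $m$) produces Tables $1$ and $2$.

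I expect the last two steps to be the genuine obstacle. First, one must make precise \emph{which} quintic governs the radical and exactly how: the inclusion $\{P\text{-roots}\}\subseteq\ker\tilde L_a$ is easy, but $\ker\tilde L_a$ is a $4$-dimensional $\mathbb{F}_2$-space whereas $P$ has only $5$ roots over $\overline{\mathbb{F}}_2$, so in the split cases the kernel is strictly larger, and one has to argue that its intersection with $\mathbb{F}_{2^m}$ is nevertheless pinned down by the factor pattern of $P$. Second, the determination of the sign of $r$ (the Arf invariant), the value of $s$, and the borderline analysis when $Q_m(a)=0$ require careful bookkeeping over both $\mathbb{F}_{2^m}$ and $\mathbb{F}_{2^{2m}}$; this, together with the parity constraint on the factor patterns of $x^5+x+a^{-1}$, is precisely the content of \cite{CNa}, whose treatment of the curve $C$ via its Weil polynomial I would follow. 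The supersingularity of $J(C)$ and the functional-equation normalization are then routine consequences of the quadratic-form description.
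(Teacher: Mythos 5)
The paper does not prove this proposition at all: it is quoted directly from \cite{CNa}, and the surrounding text only uses it as an imported black box (the only piece the paper does prove itself is Proposition \ref{prop2.8}, i.e.\ $Q_m(a)=r$, via the fibrewise point count you also describe). Your sketch is therefore not in conflict with anything in the paper, and its checkable ingredients are correct: the one point at infinity and genus $2$, the functional-equation shape of $f_{J(C)}$, the identities $N(\mathbb{F}_{2^m})=2^m+1+r$ and $N(\mathbb{F}_{2^{2m}})=2^{2m}+1+2s-r^2$ matching~(\ref{equation}), the identification of the exponent as a quadratic form with symplectic radical $\ker\tilde L_a\cap\mathbb{F}_{2^m}$, and the verification (via $ax^5=ax+1$) that every root of $P$ lies in $\ker\tilde L_a$. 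You have also correctly located the real difficulty: in the pattern $(1)(2)^2$, for instance, $P$ has a single root in $\mathbb{F}_{2^m}$ while the radical must have even dimension, so the radical is strictly larger than the span of the $P$-roots and must be pinned down by a separate argument; likewise the Arf-invariant/sign determination, the computation of $s$ over $\mathbb{F}_{2^{2m}}$, and the parity restriction on admissible factor patterns all remain to be done. Since you defer exactly these steps to \cite{CNa}, your proposal is a faithful and well-organized outline of that reference's argument rather than an independent proof --- which puts it on the same footing as the paper itself, where the statement is simply cited. If you wanted a self-contained proof you would need to carry out the radical/Arf bookkeeping explicitly; as a justification for using the proposition in this paper, citing \cite{CNa} as both you and the authors do is adequate.
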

The number of rational points on $C$ (including the infinite point) can be determined by  the Weil polynomial $f_{J(C)}(x)$. We write
$N(\mathbb{F}_{2^m})=\#(C(\mathbb{F}_{2^m}))$.
Then \cite{MNa}
\begin{equation}\label{equation}
N(\mathbb{F}_{2^m})=2^m+1+r,
N(\mathbb{F}_{2^{2m}})=2^{2m}+1
+2s-r^2.
\end{equation}
Further, we have the following result.
\begin{proposition}\label{prop2.8}
Let $Q_m(a)$ and $r$ be defined above, then
$$
Q_m(a)=r.
$$
\end{proposition}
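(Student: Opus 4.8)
The plan is to recognize $Q_m(a)$ as (essentially) the number of $\mathbb{F}_{2^m}$-rational points of the Artin--Schreier curve $C\colon y^2+y=a(x^5+x^3+x)$ and then to read off the value of $Q_m(a)$ from the first relation in \eqref{equation}. Throughout we assume $a\neq 0$, so that $C$ is the smooth supersingular genus-$2$ curve described in Section~\ref{subsec:2.4} and Proposition~\ref{prop2.7} and \eqref{equation} apply.

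First I would describe the fibres of the projection $C\to\mathbb{A}^1$, $(x,y)\mapsto x$. In characteristic $2$, for any $c\in\mathbb{F}_{2^m}$ the equation $y^2+y=c$ has exactly $1+\chi(\mathrm{Tr}^m_1(c))$ solutions $y\in\mathbb{F}_{2^m}$: two solutions when $\mathrm{Tr}^m_1(c)=0$ and none when $\mathrm{Tr}^m_1(c)=1$. Taking $c=a(x^5+x^3+x)$ and summing over all $x\in\mathbb{F}_{2^m}$, the number of affine $\mathbb{F}_{2^m}$-points of $C$ equals
$$
\sum_{x\in\mathbb{F}_{2^m}}\Bigl(1+\chi\bigl(\mathrm{Tr}^m_1(a(x^5+x^3+x))\bigr)\Bigr)=2^m+Q_m(a).
$$
Since $C$ has a single point at infinity, this gives $N(\mathbb{F}_{2^m})=\#(C(\mathbb{F}_{2^m}))=2^m+1+Q_m(a)$.

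On the other hand, the Weil polynomial $f_{J(C)}(x)=x^4+rx^3+sx^2+2^mrx+2^n$ is the characteristic polynomial of the Frobenius endomorphism on $J(C)$; writing its roots as $\alpha_1,\dots,\alpha_4$ one has $\alpha_1+\cdots+\alpha_4=-r$ and $\#(C(\mathbb{F}_{2^m}))=2^m+1-(\alpha_1+\cdots+\alpha_4)$, which is exactly the first identity of \eqref{equation}, namely $N(\mathbb{F}_{2^m})=2^m+1+r$. Equating the two expressions for $N(\mathbb{F}_{2^m})$ yields $Q_m(a)=r$, as claimed.

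I do not expect a genuine obstacle: once the point-counting formula is in place, the proof is a one-line identification. The only points requiring care are that $a\neq0$ (so that the genus-$2$ machinery and \eqref{equation} are legitimately available) and that the sign convention fixed for $f_{J(C)}$ is the one under which the coefficient of $x^3$, and not its negative, appears in $N(\mathbb{F}_{2^m})$; this convention is already that of \cite{MNa,CNa} and hence of \eqref{equation}. As an optional consistency check one can count $\mathbb{F}_{2^{2m}}$-points of $C$ in the same way and verify agreement with the second identity $N(\mathbb{F}_{2^{2m}})=2^{2m}+1+2s-r^2$ of \eqref{equation}, but this is not needed for the proof.
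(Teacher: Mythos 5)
Your proposal is correct and follows essentially the same route as the paper: count the affine $\mathbb{F}_{2^m}$-points of $C$ fibrewise over $x$ to get $\#(C(\mathbb{F}_{2^m}))=2^m+1+Q_m(a)$, then compare with $N(\mathbb{F}_{2^m})=2^m+1+r$ from \eqref{equation}. Your added remarks on the sign convention for $f_{J(C)}$ and the restriction to $a\neq 0$ are sensible but do not change the argument.
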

\begin{proof}
To prove $Q_m(a)=r$, we just prove that
$$Q_{m}(a)=N(F_{2^m})-(2^m+1)=\#(C(\mathbb{F}_{2^{m}}))
-(2^m+1).$$
$C(\mathbb{F}_{2^m})$ has only an infinite point.
A point $(x,y)$ is in $C(\mathbb{F}_{2^{m}})$ if and only if $x\in \mathbb{F}_{2^m}$ and $\mathrm{Tr}_{1}^{m}(a(x^5+x^3+x))=0$. Then
\begin{align*}
\#(C(\mathbb{F}_{2^{m}}))=&1+
2\cdot \#\{x\in \mathbb{F}_{2^m}|
\mathrm{Tr}_{1}^{m}(a(x^5+x^3+x))=0\}\\
=& 2\cdot \#\{x\in \mathbb{F}_{2^m}|
\mathrm{Tr}_{1}^{m}(a(x^5+x^3+x))=0\}
-2^{m}+(2^m+1)\\
=& \sum_{x\in\mathbb{F}_{2^m}}
\chi(\mathrm{Tr}_{1}^{m}(a(x^5+x^3+x)))
+(2^m+1).
\end{align*}
From the definition of $Q_{m}(a)$, $$\#(C(\mathbb{F}_{2^{m}}))
=2^m+1+Q_{m}(a).$$
Hence, $Q_{m}(a)=r$.
\end{proof}
From Proposition \ref{prop2.7} and Proposition \ref{prop2.8}, we can easily obtain two following
corollaries.
\begin{corollary}\label{cor2.1}
Let $m$ be even and  $a\in\mathbb{F}_{2^m}$, then $Q_m(a)\in\{0,\pm 2^{m/2},\pm 2\cdot2^{m/2},\pm 4\cdot2^{m/2}\}$.
\end{corollary}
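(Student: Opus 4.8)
The plan is to obtain the statement directly from the two results just established: Proposition~\ref{prop2.8}, which identifies $Q_m(a)$ with the coefficient $r$ appearing in the Weil polynomial $f_{J(C)}(x)=x^4+rx^3+sx^2+2^mrx+2^n$ of the supersingular genus-$2$ curve $C\colon y^2+y=a(x^5+x^3+x)$, and Proposition~\ref{prop2.7}(1), which lists all possible pairs $(r,s)$ for even $m$ in terms of the factorization type of $P(x)=x^5+x+a^{-1}$ over $\mathbb{F}_{2^m}$. (Here $a$ is tacitly taken in $\mathbb{F}_{2^m}^*$, as is needed for $C$ and for $a^{-1}$ to make sense; this is the regime in which the corollary is meaningful.)

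Concretely, I would argue as follows. Fix $a\in\mathbb{F}_{2^m}^*$ and set $P(x)=x^5+x+a^{-1}\in\mathbb{F}_{2^m}[x]$. Being a quintic, $P$ factors into monic irreducibles over $\mathbb{F}_{2^m}$, and Proposition~\ref{prop2.7}(1) asserts that for even $m$ the multiset of irreducible-factor degrees (counted with multiplicity) is one of the four types recorded in Table~\ref{tab:1}, namely $(5)$, $(1)^2(3)$, $(1)(2)^2$, or $(1)^5$. For each of these, the table gives the corresponding value of $r$: respectively $\pm 2^{m/2}$, $\pm 2\cdot 2^{m/2}$, $0$, and $\pm 4\cdot 2^{m/2}$. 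Taking the union over the four cases yields $r\in\{0,\pm 2^{m/2},\pm 2\cdot 2^{m/2},\pm 4\cdot 2^{m/2}\}$, and since $Q_m(a)=r$ by Proposition~\ref{prop2.8}, the claimed list of values for $Q_m(a)$ follows at once.

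There is essentially no hard step here: the real work has been front-loaded into Propositions~\ref{prop2.7} and \ref{prop2.8}, which I am entitled to invoke. The one point deserving a word of care, rather than a genuine obstacle, is the exhaustiveness of Table~\ref{tab:1}: that no other factorization pattern of $x^5+x+a^{-1}$ (such as $(1)(4)$, $(2)(3)$, or $(1)^3(2)$, which occur only for odd $m$) can arise when $m$ is even. This is precisely what Proposition~\ref{prop2.7}(1) encodes, reflecting the supersingularity of $C$ together with the resulting divisibility constraints on the Frobenius eigenvalues that force $2^{m/2}\mid r$; I would cite it rather than reprove it. Finally, if one insisted on literally allowing $a=0$, one checks directly that $Q_m(0)=2^m$, which is not in the stated set, so the corollary is to be read with the standing hypothesis $a\neq 0$ consistent with Propositions~\ref{prop2.7}--\ref{prop2.8}.
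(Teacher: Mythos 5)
Your proposal is correct and follows exactly the paper's route: the paper derives this corollary in one line from Proposition~\ref{prop2.8} ($Q_m(a)=r$) together with Table~\ref{tab:1} of Proposition~\ref{prop2.7}, which is precisely your argument. Your added observation that the statement implicitly requires $a\neq 0$ (since $Q_m(0)=2^m$ lies outside the listed set and $a^{-1}$ must make sense) is a fair point of care that the paper glosses over.
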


\begin{corollary}\label{cor2.2}
Let $m$ be even and $a\in\mathbb{F}_{2^m}^{*}$,
then

{\rm (1)} $Q_{m}(a)=0$ if and only if
~$P(x)=x^5+x+a^{-1}=(1)(2)^2$.

{\rm (2)} $Q_{m}(a)=\pm 2^{m/2}$ if and only if $P(x)=x^5+x+a^{-1}$ is irreducible over $\mathbb{F}_{2^m}$.
\end{corollary}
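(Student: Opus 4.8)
The plan is to read the statement off Proposition~\ref{prop2.7} and Proposition~\ref{prop2.8}; for even $m$ the corollary is essentially a translation of Table~\ref{tab:1}. First I would use Proposition~\ref{prop2.8} to replace $Q_m(a)$ by the integer $r$ occurring in the Weil polynomial $f_{J(C)}(x)$ of the curve $C\colon y^2+y=a(x^5+x^3+x)$. Since $m$ is even and $a\in\mathbb{F}_{2^m}^*$, Proposition~\ref{prop2.7}(1) then says that $r$ is completely determined by the factorization type of $P(x)=x^5+x+a^{-1}$ over $\mathbb{F}_{2^m}$, and that only the four types $(5)$, $(1)^2(3)$, $(1)(2)^2$, $(1)^5$ occur, with $r$ equal to $\pm 2^{m/2}$, $\pm 2\cdot 2^{m/2}$, $0$, $\pm 4\cdot 2^{m/2}$ respectively. (This list of values is exactly Corollary~\ref{cor2.1}, and it is what I would work from.)

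For part~(1): since $m\ge 2$ we have $2^{m/2}\ne 0$, so $\pm 2^{m/2}$, $\pm 2\cdot 2^{m/2}$ and $\pm 4\cdot 2^{m/2}$ are all nonzero; hence among the four admissible types, $r=0$ holds precisely for $(1)(2)^2$. Together with $Q_m(a)=r$ this gives $Q_m(a)=0$ iff $P(x)=(1)(2)^2$. For part~(2): the numbers $2\cdot 2^{m/2}$, $0$, $4\cdot 2^{m/2}$ attached to the remaining three types are all different from $2^{m/2}$, so $r=\pm 2^{m/2}$ is possible only for the type $(5)$, i.e. exactly when $P(x)$ is irreducible over $\mathbb{F}_{2^m}$; conversely, irreducibility forces type $(5)$ and so $r=\pm 2^{m/2}$. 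Again using $Q_m(a)=r$ gives the stated equivalence.

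There is no genuine difficulty, only two bookkeeping remarks I would make explicit. The first is that $r$ pins down the factorization type among the admissible ones; this is just the fact that the (absolute values of the) entries in the second column of Table~\ref{tab:1} are pairwise distinct, which uses $2^{m/2}\ge 2$. The second, mildly less obvious, point is the exhaustiveness of those four types: an arbitrary separable quintic over $\mathbb{F}_{2^m}$ could also split as $(1)(4)$, $(2)(3)$ or $(1)^3(2)$, and it is the content of Proposition~\ref{prop2.7} that these ``odd'' patterns cannot occur for even $m$. If one wanted this step to be self-contained one would first note $P'(x)=(x+1)^4$ and $P(1)=a^{-1}\ne 0$, so $P$ is separable, and then appeal to the classification of $C$; but since Proposition~\ref{prop2.7} is already available, the corollary follows at once.
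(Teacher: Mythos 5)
Your proposal is correct and follows exactly the route the paper intends: the paper derives Corollary~\ref{cor2.2} directly from Proposition~\ref{prop2.7} (Table~\ref{tab:1}) and Proposition~\ref{prop2.8} without further comment, and your argument simply makes explicit the two bookkeeping points (distinctness of the $r$-values and exhaustiveness of the four factorization types for even $m$) that the paper leaves implicit.
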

\begin{remark}
When $P(x)=x^5+x+a^{-1}$ is irreducible over $\mathbb{F}_{2^m}$, the sign of $Q_{m}(a)$ is related to the parity of the quadratic form
$\mathfrak{q}(x)=\mathrm{Tr}_{1}^{m}
(x(ax^4+ax^2+a^2x))$.
$\mathfrak{q}(x)$ is the quadratic form associated to the simplectic form:
$$
<x,y>_{\mathfrak{q}}:=
\mathrm{Tr}_{1}^{m}(x(ay^4+ay^2+a^2y)
+y(ax^4+ax^2+a^2x)),
$$
which is non-degenerate. Then there exists
a normal simplectic basis $e_1$, $e_{m_1+1}$, $\cdots$, $e_{m_1}$, $e_{2m_1}$
$(2m_{1}=m)$.
If $i \not\equiv j (\mod m_1)$,
$<e_i,e_j>_{\mathfrak{q}}=0$. For any
$i~(1\leq i \leq m_1)$, $
<e_i,e_{m_1+i}>_{\mathfrak{q}}=1$.
If $\#\{i|\mathfrak{q}(e_i)=\mathfrak{q}(e_{m_1+i})=1, 1\leq i \leq m_1\}$ is even, then the quadratic form $\mathfrak{q}(x)$ is even and $Q_m(a)=2^{m_1}$. If $\#\{i|\mathfrak{q}(e_i)=
\mathfrak{q}(e_{m_1+i})=1,1\leq i \leq m_1\}$
is odd, then the quadratic form $\mathfrak{q}(x)$ is odd and $Q_m(a)=-2^{m_1}$.
In fact, the parity of $\mathfrak{q}(x)$
can be determined by the point multiplication
of a random element in $J(C)(\mathbb{F}_{2^m})$.
Consequently, the sign of $Q_{m}(a)$ is
determined \cite{CNa}.
\end{remark}

\section{A class of hyper-bent functions
in binomial forms}
\label{sec:3}
\subsection{Boolean functions in $\mathcal{H}_{n}$}
\label{subsec:3.1}

Through the paper, we assume that
$n=2m$ and $m\equiv 2\pmod 4$.
Let $\mathcal{H}_{n}$ be the set of Boolean
functions on $\mathbb{F}_{2^n}$ of the form:
\begin{equation}\label{equation1}
f_{a,b}(x):=\mathrm{Tr}_{1}^{n}(ax^{2^m-1})
+\mathrm{Tr}_{1}^{4}(bx^{\frac{2^n-1}{5}}),
\end{equation}
where $a\in \mathbb{F}_{2^m}$ and $b\in \mathbb{F}_{16}$.

Note that the cyclotomic coset of 2 module $2^n-1$  containing $\frac{2^n-1}{5}$ is $\{\frac{2^n-1}{5}, 2\cdot
\frac{2^n-1}{5}, 2^2\cdot \frac{2^n-1}{5},
2^3\cdot \frac{2^n-1}{5}
\}$. Then its size
is 4, that is, $o(\frac{2^n-1}{5})=4$. Hence, the Boolean function $f_{a,b}$ is
not in the class considered by Charpin and Gong [6]. Further, it does not lie in
the class of Boolean functions studied by
Mesnager \cite{SM,SMT}.

Since $m\equiv 2\pmod 4$ and $2^m+1\equiv 0
\pmod 5$, any Boolean function $f_{a,b}$ in $\mathcal{H}_{n}$ satisfies
$$
f_{a,b}(\alpha^{2^m+1}x)=f_{a,b}(x), \forall x\in
\mathbb{F}_{2^n},
$$
where $\alpha$ is a primitive element
in $\mathbb{F}_{2^n}$. Note that $f_{a,b}(0)
=0$. Then the hyper-bentness $f_{a,b}$ can
be characterized by the following proposition.
\begin{proposition}\label{prop3.1}
Let $f_{a,b}\in\mathcal{H}_{n}$.  Set the following sum:
\begin{equation}\label{equation2}
\Lambda(a,b):=\sum\limits_{u\in
U}\chi(f_{a,b}(u))
\end{equation}
where $U$ is the group of all $2^m+1$-th
roots of unity in $\mathbb{F}_{2^n}$, that is,
$U=\{x\in\mathbb{F}_{2^n}|x^{2^m+1}=1\}$. Then $f_{a,b}$ is a hyper-bent function if and only if $\Lambda(a,b)=1$. Further, a hyper-bent function $f_{a,b}$ lies in $\mathcal{PS}_{ap}$ if and only if $\mathrm{Tr}_{1}^{4}(b)=0$.
\end{proposition}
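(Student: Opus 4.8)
The plan is to obtain both statements by specializing the general criteria for hyper-bentness recalled in Section~\ref{subsec:2.2} to $f_{a,b}$, together with the evaluation of $f_{a,b}$ at the single point $1$.

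For the first statement, I would start from the two facts noted just above the proposition: $f_{a,b}(\alpha^{2^m+1}x)=f_{a,b}(x)$ (a consequence of $m\equiv 2\pmod 4$ and $5\mid 2^m+1$) and $f_{a,b}(0)=0$ (immediate, since the exponents $2^m-1$ and $\frac{2^n-1}{5}$ are positive). These are precisely the hypotheses of Proposition~\ref{prop2.3}. Since $\xi$ has order $2^m+1$, the map $i\mapsto\xi^i$ is a bijection of $\{0,1,\dots,2^m\}$ onto $U$, so the cardinality appearing in Proposition~\ref{prop2.3} equals $w:=\#\{u\in U:f_{a,b}(u)=1\}$, and $f_{a,b}$ is hyper-bent if and only if $w=2^{m-1}$. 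On the other hand, grouping the terms of \eqref{equation2} according to the value of $f_{a,b}(u)$ gives $\Lambda(a,b)=(|U|-w)-w=2^m+1-2w$, so $\Lambda(a,b)=1$ is equivalent to $w=2^{m-1}$. Comparing the two equivalences yields the first claim.

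For the second statement, assume $f_{a,b}$ is hyper-bent and evaluate it at $1$. Since $a\in\mathbb{F}_{2^m}$ and $n=2m$, transitivity of the trace gives $\mathrm{Tr}_1^n(a)=\mathrm{Tr}_1^m(\mathrm{Tr}_m^n(a))=\mathrm{Tr}_1^m(a+a^{2^m})=0$ (using $a^{2^m}=a$), so $f_{a,b}(1)=\mathrm{Tr}_1^4(b)$. If $\mathrm{Tr}_1^4(b)=0$, then $f_{a,b}(1)=0$, and Proposition~\ref{prop2.2} puts $f_{a,b}$ in $\mathcal{PS}_{ap}$. Conversely, every function in $\mathcal{PS}_{ap}$ takes the value $0$ on the whole subfield $\mathbb{F}_{2^m}\subset\mathbb{F}_{2^n}$ — in the notation of Definition~\ref{defn2.4} this is the locus $y=0$, where $f=g(0)=0$ — hence $f_{a,b}\in\mathcal{PS}_{ap}$ forces $f_{a,b}(1)=0$, i.e.\ $\mathrm{Tr}_1^4(b)=0$. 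Thus a hyper-bent $f_{a,b}$ lies in $\mathcal{PS}_{ap}$ precisely when $\mathrm{Tr}_1^4(b)=0$.

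The individual steps are short, and the one place needing care is the converse half of the $\mathcal{PS}_{ap}$ assertion, namely that $f_{a,b}(1)=1$ excludes membership in $\mathcal{PS}_{ap}$; this rests on the normalization $g(0)=0$ built into Definition~\ref{defn2.4} (equivalently, on the disjointness of $\mathcal{PS}_{ap}$ and $\mathcal{PS}_{ap}^{\#}$). One should also check, when invoking Proposition~\ref{prop2.3}, that the indexing genuinely makes $\{\xi^i:0\le i\le 2^m\}$ enumerate $U$ without repetition, so that $w$ is exactly the cardinality occurring there; both points are routine but worth making explicit.
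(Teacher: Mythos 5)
Your proposal is correct and follows essentially the same route as the paper: the first claim is reduced to Proposition~\ref{prop2.3} via the identity $\Lambda(a,b)=2^m+1-2\,\#\{u\in U: f_{a,b}(u)=1\}$, and the second via the computation $f_{a,b}(1)=\mathrm{Tr}_1^n(a)+\mathrm{Tr}_1^4(b)=\mathrm{Tr}_1^4(b)$ combined with Proposition~\ref{prop2.2}. The only difference is that you spell out the ``only if'' half of the $\mathcal{PS}_{ap}$ assertion (that $f_{a,b}(1)=1$ excludes membership, via the normalization $g(0)=0$), a point the paper leaves implicit.
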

\begin{proof}
From Proposition \ref{prop2.3},
$f_{a,b}$ is a hyper-bent function if and only if
its restriction to $U$ has Hamming weight
$2^{m-1}$. Note that
\begin{align*}
\Lambda(a,b)&=\sum_{x\in U}\chi(f_{a,b}(u))\\
&=\#\{u\in U|f_{a,b}(u)=0\}-
\#\{u|f_{a,b}(u)=1\}\\
&=\#U-2\#\{u|f_{a,b}(u)=1\}\\
&=2^m+1-2\#\{u|f_{a,b}(u)=1\}.
\end{align*}
Then the restriction of $f_{a,b}$ to
$U$ has Hamming weight $2^{m-1}$ if
and only if $\Lambda(f_{a,b})=1$.
Hence, $f_{a,b}$ is a hyper-bent if and only if
$\Lambda(f_{a,b})=1$.
As for the second part of this proposition,
we get that
\begin{align*}
f_{a,b}(1)=&\mathrm{Tr}_{1}^{n}(a)
+\mathrm{Tr}_{1}^{4}(b)\\
=&\mathrm{Tr}_{1}^{m}(a+a^{2^m})+
\mathrm{Tr}_{1}^{4}(b)\\
=&\mathrm{Tr}_{1}^{4}(b).
\end{align*}
Then $f_{a,b}(1)=0$ if and only if
$\mathrm{Tr}_{1}^{4}(b)=0$. Hence, from Proposition \ref{prop2.3},  $f_{a,b}$ lies in $\mathcal{PS}_{ap}$ if and only if $\mathrm{Tr}_{1}^{4}(b)=0$.
\end{proof}

\subsection{Character sums on Boolean functions in $\mathcal{H}_{n}$}
\label{subsec:3.2}

Dillon \cite{JD} presented the characterization of the hyper-bentness of $f_{a,0}$. In this paper,
we consider the hyper-bentness of
$f_{a,b}(b\neq 0)$ in $\mathcal{H}_{n}$.
We first give some notations and present
some properties of character sums.

Let $\alpha$ be a primitive element in
$\mathcal{H}_{n}$. Then $\beta=\alpha^{\frac{2^n-1}{5}}$ is a primitive
5-th root of unity in $U$, where
$U$ is the cyclic group generated by
$\xi=\alpha^{2^m-1}$. Let
$V$  be the cyclic group generated by
$\alpha^{5(2^m-1)}$. Then  we have
$$
U=\cup_{i=0}^{4}\xi^iV,\quad \mathbb{F}_{2^{n}}^{*}
=\mathbb{F}_{2^m}^{*}\times U.
$$
We introduce the character sums:
$$
S_{i}=\sum_{v\in V}\chi(\mathrm{Tr}_{1}^{n}(
a(\xi^iv)^{2^m-1})).
$$
Then, we have
\begin{equation}\label{equation3}
S_{0}+S_{1}+S_{2}
+S_{3}+S_{4}=\sum_{u\in U}\chi(\mathrm{Tr}_{1}^{n}(au^{2^m-1}))
=\Lambda(a,0).
\end{equation}
Obviously, for any integer
$i$, $S_{i}
=S_{i\pmod 5}$. Further, we have the following
lemma on $S_{i}$.
\begin{lemma}\label{lem3.1}
$S_{1}=S_{4}$, $S_{2}=S_{3}$.
\end{lemma}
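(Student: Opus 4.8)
The plan is to exploit the conjugation (Frobenius-type) symmetry of the sum $S_i$ together with the arithmetic of the exponent $2^m-1$ modulo $2^m+1$. First I would observe that, since $V$ is the cyclic group generated by $\alpha^{5(2^m-1)}$ and $\xi=\alpha^{2^m-1}$, the map $v\mapsto v^{2^m}$ permutes $V$: indeed for $v\in V\subseteq U$ we have $v^{2^m+1}=1$, so $v^{2^m}=v^{-1}$, and $V$ is closed under inversion. Hence summing over $v\in V$ is the same as summing over $v^{-1}\in V$. Applying this substitution inside $S_i$ and using that $\mathrm{Tr}_1^n$ is invariant under raising its argument to the power $2^m$ (it is a coordinate of a power of the Frobenius), I would rewrite
\begin{equation*}
S_i=\sum_{v\in V}\chi\bigl(\mathrm{Tr}_1^n(a(\xi^i v)^{2^m-1})\bigr)
=\sum_{v\in V}\chi\bigl(\mathrm{Tr}_1^n(a(\xi^i v^{-1})^{2^m-1})\bigr).
\end{equation*}

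Next I would compute $(\xi^i v^{-1})^{2^m-1}$ in terms of the original variable. Writing $w=\xi^i v$, which ranges over the coset $\xi^i V$ as $v$ ranges over $V$, the element $(\xi^i v^{-1})^{2^m-1}$ should be matched against $(\xi^{-i} v')^{2^m-1}$ for a suitable $v'\in V$; since $\xi^{-i}V=\xi^{-i}V$ and $-i\equiv -i\pmod 5$, this identifies the transformed sum with $S_{-i}=S_{5-i}$. Concretely, the substitution $v\mapsto v^{-1}$ turns $S_i$ into the sum over the coset $\xi^{-i}V$, i.e.\ $S_i=S_{-i}$. Taking $i=1$ gives $S_1=S_{-1}=S_4$, and taking $i=2$ gives $S_2=S_{-2}=S_3$, which is exactly the claim. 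The key algebraic point I must verify carefully is that raising to the power $2^m$ (equivalently, inverting, on $U$) sends the coset $\xi^i V$ to the coset $\xi^{-i}V$ and fixes $V$ setwise; this follows because $\xi^{2^m}=\xi^{-1}$ and $V$, being generated by $\xi^5$, is stable under inversion, while the index $i$ only matters modulo $5$.

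I expect the main obstacle to be bookkeeping rather than anything deep: one must be scrupulous that the exponent $2^m-1$ interacts correctly with the Frobenius twist, so that $\mathrm{Tr}_1^n(a\,z^{2^m-1})$ with $z$ replaced by $z^{2^m}$ genuinely returns $\mathrm{Tr}_1^n(a\,z^{-(2^m-1)})=\mathrm{Tr}_1^n(a\,z^{1-2^m})$ — here one uses $z^{(2^m-1)\cdot 2^m}=z^{2^{2m}-2^m}=z^{(1-2^m)}\cdot z^{2^{2m}-1}=z^{1-2^m}$ for $z\in U$ — and then that $\mathrm{Tr}_1^n$ absorbs the outer Frobenius so the whole expression equals $\mathrm{Tr}_1^n(a^{2^{?}}z^{-(2^m-1)})$ with $a\in\mathbb{F}_{2^m}$ unaffected. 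Once these identifications are pinned down, the conclusion $S_1=S_4$ and $S_2=S_3$ is immediate, and no estimation or case analysis is needed.
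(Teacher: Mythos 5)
Your argument is correct and is essentially the paper's own proof: both apply the Frobenius $x\mapsto x^{2^m}$ inside $\mathrm{Tr}_1^n$, use $a^{2^m}=a$, and observe that on $U$ this acts as inversion, sending the coset $\xi^iV$ to $\xi^{-i}V$ while permuting $V$, whence $S_i=S_{-i}$. Only a small bookkeeping caution: your middle display (replacing $v$ by $v^{-1}$) is by itself just a reindexing that stays in the coset $\xi^iV$; it is the combination with the Frobenius twist, as you spell out in your final paragraph, that actually moves the sum to $\xi^{-i}V$.
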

\begin{proof}
Note that $\mathrm{Tr}_{1}^{n}
(x^{2^m})=\mathrm{Tr}_{1}^{n}(x)$, then
\begin{align*}
S_{i}=\sum_{v\in V}\chi(\mathrm{Tr}_{1}^{n}(a
(\xi^i v)^{(2^m-1)}))
=\sum_{v\in V}\chi(
\mathrm{Tr}_{1}^{n}(a^{2^m}
(\xi^{i2^m}v^{2^{m}})^{(2^m-1)})).
\end{align*}
From $a\in \mathbb{F}_{2^m}$, $a^{2^m}=a$.
Since $m\equiv 2\pmod 4$ and $2^m\equiv -1\pmod 5$, hence $i2^m\equiv -i \pmod 5$ and $\xi^{i2^m}
v^{2^m}=\xi^{-i}(\xi^{i(2^m+1)}v^{2^m})$, where $
\xi^{i(2^m+1)}\in V$. The map
$
v\longmapsto \xi^{i(2^m+1)}v^{2^m}
$
is a permutation of $V$. Consequently,
$$
S_{i}=\sum_{v\in V}\chi(\mathrm{Tr}_{1}^{n}
(a(\xi^{-i} v)^{(2^m-1)}))=S_{-i}.
$$
We just take $i=1,2$. Then this lemma follows.
\end{proof}
From (\ref{equation3}) and Lemma \ref{lem3.1}, we
can get the following corollary.
\begin{corollary}\label{cor3.1}
$S_{0}+2(S_{1}+S_{2})=\Lambda(a,0)$.
\end{corollary}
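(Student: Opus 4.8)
The plan is to obtain the identity by a direct substitution, combining the two results already established just before the corollary: equation~(\ref{equation3}), which expresses the full sum $S_{0}+S_{1}+S_{2}+S_{3}+S_{4}$ as $\Lambda(a,0)$, and Lemma~\ref{lem3.1}, which gives the coincidences $S_{1}=S_{4}$ and $S_{2}=S_{3}$ coming from the Frobenius symmetry $\mathrm{Tr}_{1}^{n}(x^{2^{m}})=\mathrm{Tr}_{1}^{n}(x)$ together with the fact that $a\in\mathbb{F}_{2^{m}}$ and $2^{m}\equiv-1\pmod 5$.

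Concretely, I would start from
\[
\Lambda(a,0)=S_{0}+S_{1}+S_{2}+S_{3}+S_{4},
\]
replace $S_{4}$ by $S_{1}$ and $S_{3}$ by $S_{2}$ using Lemma~\ref{lem3.1}, and collect terms to get $\Lambda(a,0)=S_{0}+2S_{1}+2S_{2}=S_{0}+2(S_{1}+S_{2})$. That is the entire argument; no further case analysis, no estimates, and no additional properties of the sums $S_{i}$ are needed.

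Since the corollary is an immediate algebraic consequence of two already-proved facts, there is no real obstacle here — the only thing to be careful about is making sure the indices are handled correctly, i.e.\ that the reindexing $S_{i}=S_{i\bmod 5}$ used implicitly in~(\ref{equation3}) is consistent with the pairing $\{1,4\}$, $\{2,3\}$ of nonzero residues modulo $5$ supplied by Lemma~\ref{lem3.1}. Once that bookkeeping is checked, the proof is a one-line substitution.
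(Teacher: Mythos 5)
Your proposal is correct and matches the paper's argument exactly: the paper also derives the corollary by substituting $S_{4}=S_{1}$ and $S_{3}=S_{2}$ from Lemma~\ref{lem3.1} into the decomposition $(\ref{equation3})$ and collecting terms. Nothing further is needed.
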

Further, $\Lambda(a,b)$ is a linear combination of
$S_{0}$, $S_{1}$ and $S_{2}$.
\begin{proposition}\label{prop3.2}
$\Lambda(a,b)=\chi(\mathrm{Tr}_{1}^{4}(b))S_{0}
+(\chi(\mathrm{Tr}_{1}^{4}(b\beta^2))
+\chi(\mathrm{Tr}_{1}^{4}(b\beta^3)))S_{1}
+(\chi(\mathrm{Tr}_{1}^{4}(b\beta))
+\chi(\mathrm{Tr}_{1}^{4}(b\beta^4)))S_{2}$.
\end{proposition}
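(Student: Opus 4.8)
The plan is to partition the summation group as $U=\bigcup_{i=0}^{4}\xi^iV$ and evaluate $\Lambda(a,b)$ coset by coset, matching each piece against the definition of $S_i$. Set $d=\frac{2^m+1}{5}$. I first record two elementary identities. Since $\xi=\alpha^{2^m-1}$, we have $\xi^{d}=\alpha^{(2^m-1)d}=\alpha^{(2^n-1)/5}=\beta$; and for $v\in V$, writing $v$ as a power of $\alpha^{5(2^m-1)}$ and using $5(2^m-1)d=(2^m-1)(2^m+1)=2^n-1$, we get $v^{d}=1$. Hence for $z=\xi^iv\in\xi^iV$ we obtain $z^{d}=\xi^{id}v^{d}=\beta^{i}$.

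Next I would simplify the exponent $\frac{2^n-1}{5}$ on elements of $U$. Every $z\in U$ satisfies $z^{2^m+1}=z^{5d}=1$, while $\frac{2^n-1}{5}=(2^m-1)d$ and $2^m-1\equiv 3\pmod 5$ (because $5\mid 2^m+1$), so $(2^m-1)d\equiv 3d\pmod{5d}$ and therefore $z^{(2^n-1)/5}=(z^{d})^{3}=\beta^{3i}$ for $z\in\xi^iV$. In particular $\mathrm{Tr}_{1}^{4}\big(b\,z^{(2^n-1)/5}\big)=\mathrm{Tr}_{1}^{4}(b\beta^{3i})$ is independent of $v$, so the corresponding sign factors out of the inner sum:
\[
\sum_{v\in V}\chi\big(f_{a,b}(\xi^iv)\big)=\chi\big(\mathrm{Tr}_{1}^{4}(b\beta^{3i})\big)\sum_{v\in V}\chi\big(\mathrm{Tr}_{1}^{n}(a(\xi^iv)^{2^m-1})\big)=\chi\big(\mathrm{Tr}_{1}^{4}(b\beta^{3i})\big)\,S_{i},
\]
the last equality being precisely the definition of $S_i$.

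Summing over $i=0,1,2,3,4$ yields $\Lambda(a,b)=\sum_{i=0}^{4}\chi\big(\mathrm{Tr}_{1}^{4}(b\beta^{3i})\big)S_{i}$. Reducing $3i$ modulo $5$ turns the list of exponents $(0,1,2,3,4)$ into $(0,3,1,4,2)$, so $S_0,S_1,S_2,S_3,S_4$ are weighted by $\chi(\mathrm{Tr}_{1}^{4}(b))$, $\chi(\mathrm{Tr}_{1}^{4}(b\beta^{3}))$, $\chi(\mathrm{Tr}_{1}^{4}(b\beta))$, $\chi(\mathrm{Tr}_{1}^{4}(b\beta^{4}))$, $\chi(\mathrm{Tr}_{1}^{4}(b\beta^{2}))$ respectively. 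Finally I apply Lemma~\ref{lem3.1}, which gives $S_3=S_2$ and $S_4=S_1$: absorbing the $i=3$ term into the $S_2$-coefficient and the $i=4$ term into the $S_1$-coefficient produces exactly the asserted expression for $\Lambda(a,b)$. I do not anticipate any genuine obstacle; the one place requiring care is the modular bookkeeping that yields the exponent $3$ in $z^{(2^n-1)/5}=\beta^{3i}$, since this is what couples $\{\beta^{2},\beta^{3}\}$ to $S_1$ and $\{\beta,\beta^{4}\}$ to $S_2$.
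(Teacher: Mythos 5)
Your proposal is correct and follows essentially the same route as the paper: decompose $U=\bigcup_{i=0}^{4}\xi^iV$, show $z^{(2^n-1)/5}=\beta^{3i}$ on the coset $\xi^iV$ so the character factor pulls out of the inner sum to give $\Lambda(a,b)=\sum_{i=0}^{4}\chi(\mathrm{Tr}_{1}^{4}(b\beta^{3i}))S_{i}$, and then merge terms via Lemma~\ref{lem3.1}. Your bookkeeping of the exponent $3i\bmod 5$ and the resulting pairing of $\{\beta^2,\beta^3\}$ with $S_1$ and $\{\beta,\beta^4\}$ with $S_2$ is exactly right.
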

\begin{proof}
From $(\ref{equation2})$, we have
\begin{align*}
\Lambda(a,b)=&\sum_{u\in U}\chi(f_{a,0}(u)
+\mathrm{Tr}_{1}^{4}(bu^{\frac{2^n-1}{5}}))\\
=& \sum_{u\in U}\chi(\mathrm{Tr}_{1}^{4}(bu^{
\frac{2^n-1}{5}}))\chi(f_{a,0}(u))\\
=& \sum_{i=0}^{4}\sum_{v\in V}
\chi(\mathrm{Tr}_{1}^{4}(b(\xi^i v)^{\frac{2^n-1}{5}}))\chi(f_{a,0}(\xi^i v))
\quad (\text{From $(\ref{equation3})$})\\
=& \sum_{i=0}^{4}\sum_{v\in V}
\chi(\mathrm{Tr}_{1}^{4}(b(\alpha^{i(2^{m}-1)})^{\frac{2^n-1}{5}}))\chi(f_{a,0}(\xi^i v))\quad (\xi=\alpha^{2^m-1})\\
=& \sum_{i=0}^{4}\sum_{v\in V}
\chi(\mathrm{Tr}_{1}^{4}(b\beta^{i(2^m-1)}))
\chi(f_{a,0}(\xi^i v))\quad (\beta=\alpha^{
\frac{2^n-1}{5}}).
\end{align*}
Since $2^m+1\equiv 0\pmod 5$, $2^m-1\equiv 3\pmod 5$ and
\begin{align*}
\Lambda(a,b)= \sum_{i=0}^{4}\sum_{v\in V}
\chi(\mathrm{Tr}_{1}^{4}(b\beta^{3i}))
\chi(f_{a,0}(\xi^i v))
=\sum_{i=0}^{4}\chi(\mathrm{Tr}_{1}^{4}
(b\beta^{3i}))\sum_{v\in V}\chi(f_{a,0}(\xi^i v)).
\end{align*}
From the definition of $S_{i}$,
\begin{equation*}
\Lambda(a,b)=\sum_{i=0}^{4}\chi(\mathrm{Tr}_{1}^{4}
(b\beta^{3i}))S_{i}.
\end{equation*}
From Lemma $\ref{lem3.1}$, this proposition
follows.
\end{proof}
Further, if $a\in\mathbb{F}_{2^{m_{1}}}$, where $m_{1}=m/2$, we
have the following proposition.
\begin{proposition}\label{prop3.3}
If $a\in \mathbb{F}_{2^{m_{1}}}$, where  $m_{1}=m/2$, then
$$
S_{1}=S_{2}, S_{0}+4S_{1}=\Lambda(a,0).
$$
\end{proposition}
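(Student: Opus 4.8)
The plan is to reuse the Frobenius-substitution trick from the proof of Lemma \ref{lem3.1}, but now exploiting the stronger hypothesis $a\in\mathbb{F}_{2^{m_1}}$ (so that $a^{2^{m_1}}=a$): this lets me apply the $2^{m_1}$-th power map, rather than the $2^m$-th power map, inside the character sums $S_i$, and the resulting index identity forces $S_1=S_2$.

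First I would collect the arithmetic facts about $m_1$ that drive the argument. Since $m\equiv 2\pmod 4$, writing $m=4k+2$ gives $m_1=m/2=2k+1$ odd; hence $2^m+1$ and $|V|=(2^m+1)/5$ are odd, so $\gcd(2^{m_1},|V|)=1$ and the map $v\mapsto v^{2^{m_1}}$ is a permutation of the cyclic group $V$. Moreover $(2^{m_1})^2=2^m\equiv-1\pmod 5$, so the residue $c:=2^{m_1}\bmod 5$ satisfies $c^2\equiv-1\pmod 5$, i.e. $c\in\{2,3\}$.

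Next, using the Frobenius invariance $\mathrm{Tr}_1^n(y)=\mathrm{Tr}_1^n(y^{2^{m_1}})$ and then $a^{2^{m_1}}=a$, I rewrite
$$S_i=\sum_{v\in V}\chi\big(\mathrm{Tr}_1^n(a(\xi^i v)^{2^m-1})\big)=\sum_{v\in V}\chi\big(\mathrm{Tr}_1^n(a(\xi^{i2^{m_1}}v^{2^{m_1}})^{2^m-1})\big).$$
Writing $2^{m_1}=5t+c$ and reducing $ic$ modulo $5$ as $ic=r+5s$ with $0\le r\le4$, and using $\xi^5\in V$, I factor $\xi^{i2^{m_1}}=\xi^{r}\cdot\xi^{5(it+s)}$ with $\xi^{5(it+s)}\in V$. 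Then $w:=\xi^{5(it+s)}v^{2^{m_1}}$ runs over all of $V$ as $v$ does (it is the composition of the permutation $v\mapsto v^{2^{m_1}}$ of $V$ with a left translation by an element of $V$), so
$$S_i=\sum_{w\in V}\chi\big(\mathrm{Tr}_1^n(a(\xi^{r}w)^{2^m-1})\big)=S_r=S_{ic\bmod 5}.$$

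Finally, specialising $i=1$ gives $S_1=S_c$: if $c=2$ this is exactly $S_1=S_2$, and if $c=3$ then $S_1=S_3=S_2$ by Lemma \ref{lem3.1}; either way $S_1=S_2$ (indeed $S_1=S_2=S_3=S_4$). Plugging $S_1=S_2$ into Corollary \ref{cor3.1}, namely $S_0+2(S_1+S_2)=\Lambda(a,0)$, yields $S_0+4S_1=\Lambda(a,0)$, which is the claim. The one step that needs care — the main obstacle — is the change of summation variable: one must verify simultaneously that $v\mapsto v^{2^{m_1}}$ genuinely permutes $V$ (this is where $m_1$ odd, hence $|V|$ odd, enters) and that the leftover power of $\xi$ collapses modulo $V$ to $\xi^{ic\bmod 5}$ (this uses that $V$ is generated by $\xi^5=\alpha^{5(2^m-1)}$); both are immediate once the arithmetic of $m_1$ is recorded.
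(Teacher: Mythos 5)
Your proposal is correct and follows essentially the same route as the paper: the Frobenius substitution $y\mapsto y^{2^{m_1}}$ using $a^{2^{m_1}}=a$, the observation $2^{m_1}\equiv\pm2\pmod 5$, the change of variable via the permutation $v\mapsto\xi^{5(\cdot)}v^{2^{m_1}}$ of $V$, the reduction of the case $S_1=S_3$ to $S_1=S_2$ via Lemma \ref{lem3.1}, and finally Corollary \ref{cor3.1}. (Only a cosmetic remark: $|V|$ is odd because $2^m+1$ is odd for any $m$, not because $m_1$ is odd.)
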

\begin{proof}
From $a\in \mathbb{F}_{2^{m_{1}}}$,
$\mathrm{Tr}_{1}^{n}(ax^{(2^m-1)})
=\mathrm{Tr}_{1}^{n}(ax^{2^{m_{1}}
(2^m-1)})$. Then
\begin{align*}
S_{i}=\sum_{v\in V}\chi(
\mathrm{Tr}_{1}^{n}(a(\xi^i v)^{(2^m-1)}))
=\sum_{v\in V}\chi(
\mathrm{Tr}_{1}^{n}(a(\xi^{2^{m_{1}}i} v^{2^{m_{1}}})^{(2^m-1)})).
\end{align*}
We take $i=1$. Then
$$
S_{1}=\sum_{v\in V}\chi(
\mathrm{Tr}_{1}^{n}(a(\xi^{2^{m_{1}}} v^{2^{m_{1}}})^{(2^m-1)})).
$$

Since $2^m+1\equiv 0\pmod 5$, $(2^{m_{1}})^2\equiv -1\pmod 5$ and $2^{m_{1}}\equiv \pm 2\pmod 5$.

When $2^{m_{1}}\equiv 2\pmod 5$,
$$
S_{1}=\sum_{v\in V}\chi(
\mathrm{Tr}_{1}^{n}(a(\xi^{2}\xi^{2^{m_{1}}-2} v^{2^{m_{1}}})^{(2^m-1)})).
$$
The map $
v\longmapsto \xi^{2^{m_{1}}-2}v^{2^{m_{1}}}
$ is a permutation of $V$. Consequently,
$$
S_{1}=\sum_{v\in V}\chi(
\mathrm{Tr}_{1}^{n}(a(\xi^{2}v)^{(2^m-1)}))
=S_{2}.
$$

When $2^{m_{1}}\equiv -2\pmod 5$, we can similarly have $S_{1}=S_{3}$.

Above all, $S_{1}=S_{2}$. From Corollary \ref{cor3.1}, $S_{0}+4S_{1}=\Lambda(a,0)$.
\end{proof}
For $\Lambda(a,b)$, the following proposition
gives some properties.
\begin{proposition}\label{prop3.4}
$\Lambda(a,b)$ satisfies

{\rm (1)} $\Lambda(a,b^4)=\Lambda(a,b)$.

{\rm (2)} If $b$ is a primitive element of $\mathbb{F}_{2^{16}}$ and $\mathrm{Tr}_{1}^{4}(b)=0$,
$\Lambda(a,b^2)=\Lambda(a,b)=S_{0}$.
\end{proposition}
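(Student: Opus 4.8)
The plan is to deduce both parts from the identity
$\Lambda(a,b)=\sum_{i=0}^{4}\chi(\mathrm{Tr}_{1}^{4}(b\beta^{3i}))S_{i}$
established inside the proof of Proposition~\ref{prop3.2}, combined with Lemma~\ref{lem3.1} ($S_{1}=S_{4}$, $S_{2}=S_{3}$) and the Frobenius-invariance $\mathrm{Tr}_{1}^{4}(z^{2})=\mathrm{Tr}_{1}^{4}(z)$. For part~(1), I first record that $\beta$, being a $5$-th root of unity, lies in $\mathbb{F}_{16}$; hence $(b\beta^{2i})^{4}=b^{4}\beta^{8i}=b^{4}\beta^{3i}$ (exponents of $\beta$ read modulo $5$), so $\mathrm{Tr}_{1}^{4}(b^{4}\beta^{3i})=\mathrm{Tr}_{1}^{4}(b\beta^{2i})$. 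Substituting $b^{4}$ for $b$ in the identity above gives $\Lambda(a,b^{4})=\sum_{i=0}^{4}\chi(\mathrm{Tr}_{1}^{4}(b\beta^{2i}))S_{i}$. Writing $c_{j}:=\chi(\mathrm{Tr}_{1}^{4}(b\beta^{j}))$ (again $j$ modulo $5$) and using $S_{4}=S_{1}$, $S_{3}=S_{2}$, both $\Lambda(a,b)$ and $\Lambda(a,b^{4})$ collapse to $c_{0}S_{0}+(c_{2}+c_{3})S_{1}+(c_{1}+c_{4})S_{2}$, which proves~(1).

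For part~(2) I would first observe that it is enough to prove $\Lambda(a,b)=S_{0}$ for every $b\in\mathbb{F}_{16}\setminus\mathbb{F}_{4}$ with $\mathrm{Tr}_{1}^{4}(b)=0$: a primitive element $b$ of $\mathbb{F}_{16}$ lies outside $\mathbb{F}_{4}$, and so does $b^{2}$, while $\mathrm{Tr}_{1}^{4}(b^{2})=\mathrm{Tr}_{1}^{4}(b)=0$; applying the claim to both $b$ and $b^{2}$ then yields $\Lambda(a,b^{2})=S_{0}=\Lambda(a,b)$. By Proposition~\ref{prop3.2}, the hypothesis $\mathrm{Tr}_{1}^{4}(b)=0$ makes the coefficient of $S_{0}$ equal to $1$, so $\Lambda(a,b)=S_{0}$ is equivalent to the vanishing of the coefficients of $S_{1}$ and $S_{2}$, i.e.\ to $\mathrm{Tr}_{1}^{4}(b(\beta+\beta^{4}))=\mathrm{Tr}_{1}^{4}(b(\beta^{2}+\beta^{3}))=1$. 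The crux is that $s:=\beta+\beta^{4}$ and $t:=\beta^{2}+\beta^{3}$ satisfy $s+t=1$ and $st=1$ --- a one-line computation from $1+\beta+\beta^{2}+\beta^{3}+\beta^{4}=0$ --- so $\{s,t\}=\{\omega,\omega^{2}\}$ is the pair of primitive cube roots of unity in $\mathbb{F}_{4}$. Now, using the transitivity $\mathrm{Tr}_{1}^{4}=\mathrm{Tr}_{1}^{2}\circ\mathrm{Tr}_{2}^{4}$ and the identity $\mathrm{Tr}_{2}^{4}(b\omega^{j})=\omega^{j}\mathrm{Tr}_{2}^{4}(b)$ (valid since $\omega\in\mathbb{F}_{4}$), put $c:=\mathrm{Tr}_{2}^{4}(b)=b+b^{4}\in\mathbb{F}_{4}$: the hypothesis $\mathrm{Tr}_{1}^{2}(c)=\mathrm{Tr}_{1}^{4}(b)=0$ forces $c\in\mathbb{F}_{2}$, whereas $b\notin\mathbb{F}_{4}$ forces $c\neq 0$, hence $c=1$; therefore $\mathrm{Tr}_{1}^{4}(b\omega^{j})=\mathrm{Tr}_{1}^{2}(\omega^{j})=1$ for $j=1,2$, which is precisely what was required. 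Hence $\Lambda(a,b)=S_{0}$, and the reduction above completes part~(2).

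The one genuinely delicate point is this last step --- the vanishing of the $S_{1}$- and $S_{2}$-coefficients of $\Lambda(a,b)$. It rests on the two observations that $\beta+\beta^{4}$ and $\beta^{2}+\beta^{3}$ are exactly the nontrivial cube roots of unity and that the primitivity of $b$ (through $b\notin\mathbb{F}_{4}$) is precisely what excludes the degenerate case $\mathrm{Tr}_{2}^{4}(b)=0$; pushing the computation down the tower $\mathbb{F}_{2}\subset\mathbb{F}_{4}\subset\mathbb{F}_{16}$ is what keeps it short. Part~(1) and the reduction of part~(2) to the trace-zero case are routine once the expansion of $\Lambda(a,b)$ from Proposition~\ref{prop3.2} is in hand.
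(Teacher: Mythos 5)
Your proof is correct and takes essentially the same route as the paper's: expand $\Lambda(a,b)$ via Proposition~\ref{prop3.2}, use Frobenius-invariance of $\mathrm{Tr}_{1}^{4}$ for part (1), and for part (2) reduce to $\mathrm{Tr}_{1}^{4}(b(\beta+\beta^{4}))=\mathrm{Tr}_{1}^{4}(b(\beta^{2}+\beta^{3}))=1$ by descending to $\mathbb{F}_{4}$ via $b+b^{4}=1$ (the paper derives this from $b^{4}+b+1=0$, you from $b\notin\mathbb{F}_{4}$ together with $\mathrm{Tr}_{1}^{4}(b)=0$). If anything, your part (1) is slightly more careful than the paper's, since you track how Frobenius permutes the individual terms $\chi(\mathrm{Tr}_{1}^{4}(b\beta^{j}))$ rather than only comparing sums of traces.
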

\begin{proof}
{\rm (1)}  For any $b\in \mathbb{F}_{16}$, $
\mathrm{Tr}_{1}^{4}(b^4)=\mathrm{Tr}_{1}^{4}(b)$.
Then
\begin{align*}
\mathrm{Tr}_{1}^{4}(b(\beta^2+\beta^3))
=\mathrm{Tr}_{1}^{4}(b^4(\beta^{8}+\beta^{12}))
=\mathrm{Tr}_{1}^{4}(b^4(\beta^2+\beta^3))\\
\mathrm{Tr}_{1}^{4}(b(\beta+\beta^4))
=\mathrm{Tr}_{1}^{4}(b^4(\beta^4+\beta^{16}))
=\mathrm{Tr}_{1}^{4}(b^4(\beta+\beta^4)).
\end{align*}
From Proposition \ref{prop3.2}, $\Lambda(a,b^4)=\Lambda(a,b)$.

{\rm (2)} If $b$ is a primitive element in
$\mathbb{F}_{16}$ such that $\mathrm{Tr}_{1}^{4}(b)=0$, then
$b^4+b+1=0$. Further,
\begin{align*}
\mathrm{Tr}_{1}^{4}(b(\beta^2+\beta^3))=&
\mathrm{Tr}_{1}^{2}(b^4(\beta^2+\beta^3)
+b(\beta^2+\beta^3))\\
=&\mathrm{Tr}_{1}^{2}((b+b^4)(\beta^2+\beta^3))\\
=&\mathrm{Tr}_{1}^{2}(\beta^2+\beta^3).
\end{align*}
The minimal polynomial of $\beta$ in $\mathbb{F}_{2}$ is $\beta^4+\beta^3+\beta^2+\beta+1=0$. Then
$$
\mathrm{Tr}_{1}^{2}(\beta^2+\beta^3)
=\beta^2+\beta^3+\beta^4+\beta^6=1.
$$
Consequently, $
\mathrm{Tr}_{1}^{4}(b(\beta^2+\beta^3))=1$.
Similarly, $\mathrm{Tr}_{1}^{4}(b(\beta+\beta^4))=1$.
Therefore,
$$
\chi(\mathrm{Tr}_{1}^{4}(b\beta^2))
+\chi(\mathrm{Tr}_{1}^{4}(b\beta^3))=0, \quad
\chi(\mathrm{Tr}_{1}^{4}(b\beta))+
\chi(\mathrm{Tr}_{1}^{4}(b\beta^4))=0.
$$
From Proposition \ref{prop3.2}, $\Lambda(a,b)=S_{0}$.

Since $b$ is a primitive element such that
$\mathrm{Tr}_{1}^{4}(b)=0$, $b^2$ is also a
primitive element such that $\mathrm{Tr}_{1}^{4}(b^2)=0$. Consequently,
$\Lambda(a,b^2)
=\Lambda(a,b)=S_{0}$.
\end{proof}

Explicit results on $\Lambda(a,b)$ are given in
the following proposition.
\begin{proposition}\label{prop3.5}
Let $b\in\mathbb{F}_{16}^{*}$, then

{\rm (1)} If $b=1$,  $\Lambda(a,b)=S_{0}-2(S_{1}+S_{2})=2S_{0}-
\Lambda(a,0)$.

{\rm (2)} If $b\in\{\beta+\beta^2,
\beta+\beta^3, \beta^2+\beta^4,\beta^3+
\beta^4\}$, that is, $b$ is a primitive element satisfies $\mathrm{Tr}_{1}^{4}
(b)=0$, then $\Lambda(a,b)
=S_{0}$.

{\rm (3)} If $b=\beta$ or $\beta^4$, then $\Lambda(a,b)=-S_{0}-2S_{1}$.

{\rm (4)} If $b=\beta^2$ or $\beta^3$, then $\Lambda(a,b)=-S_{0}-2S_{2}$.

{\rm (5)} If $b=1+\beta$ or $1+\beta^4$, then
$\Lambda(a,b)=-S_{0}+2S_{1}$.

{\rm (6)} If $b=1+\beta^2$ or $1+\beta^3$, $\Lambda(a,b)=-S_{0}+2S_{2}$.

{\rm (7)} If $b=\beta+\beta^4$, then
$\Lambda(a,b)=S_{0}+2S_{1}-2S_{2}$.

{\rm (8)} If $b=\beta^2+\beta^3$, then $\Lambda(a,b)=S_{0}-2S_{1}+2S_{2}$.
\end{proposition}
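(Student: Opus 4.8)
The plan is to reduce the whole proposition to Proposition~\ref{prop3.2}, which already writes $\Lambda(a,b)$ as an explicit integer combination of $S_{0}$, $S_{1}$, $S_{2}$ whose coefficients are $\chi(\mathrm{Tr}_{1}^{4}(b))$, $\chi(\mathrm{Tr}_{1}^{4}(b\beta^2))+\chi(\mathrm{Tr}_{1}^{4}(b\beta^3))$ and $\chi(\mathrm{Tr}_{1}^{4}(b\beta))+\chi(\mathrm{Tr}_{1}^{4}(b\beta^4))$. Since these depend only on the arithmetic of $\mathbb{F}_{16}$, the statement becomes a finite verification, carried out value by value of $b$.

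First I would record $\mathrm{Tr}_{1}^{4}$ on $\mathbb{F}_{16}$. As $\beta$ is a primitive $5$-th root of unity, $1+\beta+\beta^2+\beta^3+\beta^4=0$, so $\{1,\beta,\beta^2,\beta^3\}$ is an $\mathbb{F}_{2}$-basis of $\mathbb{F}_{16}$; moreover $\beta,\beta^2,\beta^4,\beta^8=\beta^3$ form a single Galois conjugacy class, whence $\mathrm{Tr}_{1}^{4}(\beta^i)=1$ for $1\le i\le 4$, while $\mathrm{Tr}_{1}^{4}(1)=0$. By $\mathbb{F}_{2}$-linearity this gives $\mathrm{Tr}_{1}^{4}(c_{0}+c_{1}\beta+c_{2}\beta^2+c_{3}\beta^3)=c_{1}+c_{2}+c_{3}$ for all $c_{i}\in\mathbb{F}_{2}$, which is all that is needed.

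Then, for each $b$ in the statement I would form the four products $b\beta,b\beta^2,b\beta^3,b\beta^4$, reduce them using $\beta^5=1$ and $\beta^4=1+\beta+\beta^2+\beta^3$, read off their $\mathrm{Tr}_{1}^{4}$-values from the formula above, and substitute into Proposition~\ref{prop3.2}. For $b=1$ this yields $\Lambda(a,1)=S_{0}-2(S_{1}+S_{2})$, and Corollary~\ref{cor3.1}, i.e.\ $2(S_{1}+S_{2})=\Lambda(a,0)-S_{0}$, converts this into $2S_{0}-\Lambda(a,0)$, giving part (1). For part (2) I would observe that the four listed elements are exactly the primitive elements of $\mathbb{F}_{16}$ of trace $0$, namely the roots of $x^4+x+1$, so $\Lambda(a,b)=S_{0}$ follows at once from Proposition~\ref{prop3.4}(2) --- or, if one prefers, from the same substitution, in which both the $S_{1}$- and the $S_{2}$-coefficient turn out to vanish. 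Parts (3)--(8) are each a single substitution of this kind; here I would also invoke Proposition~\ref{prop3.4}(1), $\Lambda(a,b^4)=\Lambda(a,b)$, which pairs $\beta$ with $\beta^4$, $\beta^2$ with $\beta^3$, $1+\beta$ with $1+\beta^4$, and $1+\beta^2$ with $1+\beta^3$, so that only one representative in each of parts (3)--(6) actually needs to be computed.

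There is no genuine conceptual obstacle here: everything is forced once Proposition~\ref{prop3.2} is in hand, and the only ``hard'' part is clerical --- keeping the reductions of powers of $\beta$ straight and checking that the eight cases exhaust $\mathbb{F}_{16}^{*}$. The latter is a short count: the element $1$, the four powers $\beta,\beta^2,\beta^3,\beta^4$, the four elements $1+\beta,1+\beta^2,1+\beta^3,1+\beta^4$, and the six elements $\beta^i+\beta^j$ with $1\le i<j\le 4$ total $1+4+4+6=15$, with parts (7) and (8) isolating $\beta+\beta^4$ and $\beta^2+\beta^3$ from the last family and part (2) absorbing the remaining four. A final consistency check is that the eight resulting expressions agree with Corollary~\ref{cor3.1} and with the symmetries in Proposition~\ref{prop3.4}.
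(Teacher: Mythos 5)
Your proposal is correct and follows exactly the paper's route: the paper's entire proof is ``From Proposition~\ref{prop3.2}, this proposition follows,'' i.e.\ the same case-by-case substitution of the trace values $\mathrm{Tr}_{1}^{4}(b\beta^{i})$ into Proposition~\ref{prop3.2}, with Corollary~\ref{cor3.1} supplying the identity $2(S_{1}+S_{2})=\Lambda(a,0)-S_{0}$ needed in part (1). Your write-up merely makes explicit the finite verification the paper leaves to the reader.
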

\begin{proof}
From Proposition \ref{prop3.2}, this proposition
follows.
\end{proof}
Further, if $a\in\mathbb{F}_{2^{m_{1}}}$,
we have the following proposition.
\begin{proposition}\label{prop3.6}
If $a\in\mathbb{F}_{2^{m_{1}}}$, then

{\rm (1)} If $b=1$, then $\Lambda(a,b)=
2S_{0}-\Lambda(a,0)$.

{\rm (2)} If $b\in\{\beta,\beta^2,\beta^3,
\beta^4\}$, then $\Lambda(a,b)=-S_{0}-2S_{1}
=-\frac{S_{0}+\Lambda(a,0)}{2}$.

{\rm (3)} If $b\in\{1+\beta,1+\beta^2,1+\beta^3,
1+\beta^4\}$, then $\Lambda(a,b)
=-S_{0}+2S_{1}=-\frac{3S_{0}-\Lambda(a,0)}{2}$.

{\rm (4)} If $b\in\{
\beta+\beta^2, \beta+\beta^3, \beta^2+\beta^4,
\beta^3+\beta^4,
\beta+\beta^4,\beta^2+\beta^3
\}$, then $\Lambda(a,b)=S_{0}$.
\end{proposition}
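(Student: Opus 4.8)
The plan is to obtain Proposition \ref{prop3.6} directly from the general formulas of Proposition \ref{prop3.5} by substituting in the two extra identities that hold when $a\in\mathbb{F}_{2^{m_1}}$, namely $S_1=S_2$ and $S_0+4S_1=\Lambda(a,0)$, both established in Proposition \ref{prop3.3}. The second identity is used in the form $S_1=S_2=\frac{\Lambda(a,0)-S_0}{4}$, which is exactly what turns every expression in Proposition \ref{prop3.5} into one involving only $S_0$ and $\Lambda(a,0)$.

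For part (1), the formula in Proposition \ref{prop3.5}(1) already reads $\Lambda(a,b)=2S_0-\Lambda(a,0)$ and mentions neither $S_1$ nor $S_2$, so it is unchanged. For part (2), the values $b=\beta,\beta^4$ are covered by Proposition \ref{prop3.5}(3), giving $-S_0-2S_1$, while $b=\beta^2,\beta^3$ are covered by Proposition \ref{prop3.5}(4), giving $-S_0-2S_2$; since $S_1=S_2$ these coincide, and substituting $S_1=\frac{\Lambda(a,0)-S_0}{4}$ gives $-S_0-2S_1=-\frac{S_0+\Lambda(a,0)}{2}$. Part (3) is handled the same way: $b=1+\beta,1+\beta^4$ fall under Proposition \ref{prop3.5}(5) and $b=1+\beta^2,1+\beta^3$ under Proposition \ref{prop3.5}(6), both collapsing to $-S_0+2S_1$, which equals $-\frac{3S_0-\Lambda(a,0)}{2}$ after the substitution. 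For part (4), the four primitive trace-zero values are already sent to $S_0$ by Proposition \ref{prop3.5}(2), while $b=\beta+\beta^4$ and $b=\beta^2+\beta^3$ are governed by Proposition \ref{prop3.5}(7)–(8), whose $2S_1-2S_2$ and $-2S_1+2S_2$ terms vanish once $S_1=S_2$, again leaving $S_0$.

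There is no real obstacle: the statement is a bookkeeping corollary recording what Proposition \ref{prop3.5} becomes under the symmetry $S_1=S_2$. The only point needing care is to match each listed value of $b$ to the correct item of Proposition \ref{prop3.5}—in particular that $\{\beta,\beta^4\}$ and $\{\beta^2,\beta^3\}$ are split between items (3) and (4), and likewise for the $1+\beta^i$—and then to carry out the elementary arithmetic with $\Lambda(a,0)=S_0+4S_1$ consistently. Accordingly, I would present the short case check above and conclude with "From Proposition \ref{prop3.5} and Proposition \ref{prop3.3}, this proposition follows."
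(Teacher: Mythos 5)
Your proposal is correct and is exactly the paper's argument: the paper's proof consists of the single line ``From Proposition \ref{prop3.3} and Proposition \ref{prop3.5}, this proposition follows,'' and you have simply made explicit the case matching and the substitution $S_1=S_2=\frac{\Lambda(a,0)-S_0}{4}$ that this invokes. The arithmetic in each case checks out.
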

\begin{proof}
From Proposition \ref{prop3.3} and
Proposition \ref{prop3.5}, this proposition follows.
\end{proof}
\begin{corollary}\label{cor3.2}
For $a\in\mathbb{F}_{2^{m_{1}}}$,  $\Lambda(a,b^2)=\Lambda(a,b)$.
\end{corollary}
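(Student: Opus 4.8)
The plan is to read the statement off Proposition \ref{prop3.6}. Because $a\in\mathbb{F}_{2^{m_1}}$, that proposition already records the exact value of $\Lambda(a,b)$ for every $b\in\mathbb{F}_{16}$, and it does so by sorting the fifteen nonzero elements of $\mathbb{F}_{16}$ into four classes on each of which $\Lambda(a,\cdot)$ is constant: the singleton $\{1\}$; the primitive fifth roots of unity $\{\beta,\beta^2,\beta^3,\beta^4\}$; the set $\{1+\beta,1+\beta^2,1+\beta^3,1+\beta^4\}$; and the six-element set $\{\beta+\beta^2,\beta+\beta^3,\beta^2+\beta^4,\beta^3+\beta^4,\beta+\beta^4,\beta^2+\beta^3\}$. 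Hence it suffices to show that the squaring map $b\mapsto b^2$, which is a permutation of $\mathbb{F}_{16}^{*}$ since $(2,15)=1$, sends each of these four classes onto itself: then $b$ and $b^2$ lie in the same class and therefore share the same value of $\Lambda(a,\cdot)$. The degenerate case $b=0$ is trivial, as $0^2=0$.

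The verification of closure under squaring is the only computation, and it is short. The singleton is obviously fixed. For $\{\beta^i:1\le i\le 4\}$ one has $(\beta^i)^2=\beta^{2i\bmod 5}$ and $i\mapsto 2i$ permutes $\{1,2,3,4\}$ modulo $5$; the identity $(1+\beta^i)^2=1+\beta^{2i\bmod 5}$ then handles $\{1+\beta^i:1\le i\le 4\}$ in the same way. For the six-element set one squares its elements directly using $\beta^5=1$, for instance $(\beta+\beta^2)^2=\beta^2+\beta^4$, $(\beta+\beta^3)^2=\beta^2+\beta^6=\beta+\beta^2$, and $(\beta+\beta^4)^2=\beta^2+\beta^3$, and checks that all six squares again lie in the set (squaring acts on it as a $4$-cycle together with a transposition). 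Since these four classes exhaust $\mathbb{F}_{16}^{*}$, the corollary follows.

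I do not anticipate a genuine obstacle; the point to watch is bookkeeping — confirming that the four classes of Proposition \ref{prop3.6} really do partition $\mathbb{F}_{16}^{*}$ so that the case analysis is exhaustive, and that squaring maps the six-element class \emph{onto} itself rather than merely into a larger set. A more structural variant would start from Proposition \ref{prop3.4}(1), which gives $\Lambda(a,b^4)=\Lambda(a,b)$ with no hypothesis on $a$, and then use the symmetry $S_1=S_2$ of Proposition \ref{prop3.3} (which is precisely where $a\in\mathbb{F}_{2^{m_1}}$ enters) to upgrade invariance under $b\mapsto b^4$ to invariance under $b\mapsto b^2$; but this reduces to the same finite check, so the direct reading of Proposition \ref{prop3.6} is the most economical route.
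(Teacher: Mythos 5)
Your proposal is correct and is exactly the argument the paper intends: the paper's proof is the one-line remark that the corollary follows from Proposition \ref{prop3.6}, and you have simply filled in the routine verification that the four classes there partition $\mathbb{F}_{16}^{*}$ and are each stable under $b\mapsto b^{2}$ (your cycle computations on the six-element class check out). No gap.
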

\begin{proof}
This corollary can be obtained by Proposition \ref{prop3.6}.
\end{proof}

We now introduce some results on character sums.
\begin{lemma}\label{lem3.2}
Let $U$ be the group of $2^m+1$-th roots of unity in $\mathbb{F}_{2^n}^{*}$, then for any positive integer $p$,
$$
\sum_{u\in U}\chi(\mathrm{Tr}(ax^{p(2^m-1)}))
=1+2\sum_{x\in \mathbb{F}_{2^m}^{*},
\mathrm{Tr}_{1}^{m}(x^{-1})=1}
\chi(\mathrm{Tr}(aD_{p}(x))).
$$
\end{lemma}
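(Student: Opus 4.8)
The plan is to convert the character sum over the unit circle $U$ into a sum over $\mathbb{F}_{2^m}$ using the classical two-to-one parametrization $u\mapsto u+u^{-1}$, and to feed in Dickson polynomials through the identity $D_p(y+y^{-1})=y^p+y^{-p}$.

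First I would do the trace computation. For $u\in U$ one has $u^{2^m}=u^{-1}$, hence $u^{2^m-1}=u^{-2}$; combined with $u^{2^{2m}-1}=1$ this gives $\mathrm{Tr}_m^n\big(u^{p(2^m-1)}\big)=u^{p(2^m-1)}+u^{-p(2^m-1)}=u^{2p}+u^{-2p}=(u^p+u^{-p})^2$. Since $a\in\mathbb{F}_{2^m}$, transitivity of the trace yields $\mathrm{Tr}_1^n\big(au^{p(2^m-1)}\big)=\mathrm{Tr}_1^m\big(a\,\mathrm{Tr}_m^n(u^{p(2^m-1)})\big)=\mathrm{Tr}_1^m\big(aD_p(x)^2\big)$, where $x:=u+u^{-1}=\mathrm{Tr}_m^n(u)\in\mathbb{F}_{2^m}$ and $u^p+u^{-p}=D_p(x)$ by the Dickson identity.

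Next I would analyze the fibers of $u\mapsto x=u+u^{-1}$ on $U$. The value $u=1$ is the unique preimage of $x=0$, and it contributes $\chi(\mathrm{Tr}_1^n(a))=\chi(0)=1$ (because $\mathrm{Tr}_m^n(a)=a+a^{2^m}=0$). For $x\neq0$ the preimages of $x$ are exactly the roots of $X^2+xX+1$ lying in $U$; such a root has norm $1$ over $\mathbb{F}_{2^m}$, so it lies in $U$ precisely when this quadratic is irreducible over $\mathbb{F}_{2^m}$, in which case its two roots $u,u^{-1}$ are distinct and both in $U$, while if the quadratic splits over $\mathbb{F}_{2^m}$ its roots lie in $\mathbb{F}_{2^m}^*$ and (being $\neq1$) are not in $U$. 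By Proposition~\ref{prop2.5}, irreducibility of $X^2+xX+1$ is equivalent to $\mathrm{Tr}_1^m(1/x)=1$. Thus $u\mapsto x$ is two-to-one from $U\setminus\{1\}$ onto $\{x\in\mathbb{F}_{2^m}^*:\mathrm{Tr}_1^m(1/x)=1\}$, and therefore
$$\sum_{u\in U}\chi\big(\mathrm{Tr}_1^n(au^{p(2^m-1)})\big)=1+2\sum_{x\in\mathbb{F}_{2^m}^*,\ \mathrm{Tr}_1^m(1/x)=1}\chi\big(\mathrm{Tr}_1^m(aD_p(x)^2)\big).$$

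The last step is to remove the square, and this is the only point requiring care: one may not replace $D_p(x)^2$ by $D_p(x)$ inside the trace termwise. Instead, since $D_p$ has coefficients in $\mathbb{F}_2$ we have $D_p(x)^2=D_p(x^2)$, and the map $x\mapsto x^2$ permutes the index set $\{x\in\mathbb{F}_{2^m}^*:\mathrm{Tr}_1^m(1/x)=1\}$ because $\mathrm{Tr}_1^m(1/x^2)=\mathrm{Tr}_1^m(1/x)$; re-indexing then turns $\sum\chi(\mathrm{Tr}_1^m(aD_p(x)^2))$ into $\sum\chi(\mathrm{Tr}_1^m(aD_p(x)))$, which is exactly the asserted identity. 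So the main obstacle is really just this interaction of the Frobenius square with the Dickson polynomial together with the verification that $u\mapsto u+u^{-1}$ has the stated fibers; everything else is routine manipulation of traces and a standard count on the unit circle.
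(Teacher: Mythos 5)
Your proof is correct, and it is worth noting that the paper itself does not actually prove this lemma: its entire ``proof'' is the sentence that the statement is a special case of Lemma 12 of Mesnager \cite{SMT}. What you have written is a correct, self-contained reconstruction of the standard argument behind that cited result (going back to Charpin--Gong): the trace computation $\mathrm{Tr}_m^n(u^{p(2^m-1)})=u^{2p}+u^{-2p}=(u^p+u^{-p})^2$ is right, the fiber analysis of $u\mapsto u+u^{-1}$ on $U$ (the $u=1$ fiber over $0$ contributing $\chi(\mathrm{Tr}_1^n(a))=1$, and the two-to-one cover of $\{x\in\mathbb{F}_{2^m}^{*}:\mathrm{Tr}_1^m(1/x)=1\}$ via irreducibility of $X^2+xX+1$ and Proposition~\ref{prop2.5}) is right, and you correctly flag the one genuinely delicate step, namely that one cannot drop the Frobenius square inside $\mathrm{Tr}_1^m(aD_p(x)^2)$ termwise (doing so naively would replace $a$ by $\sqrt{a}$); your fix via $D_p(x)^2=D_p(x^2)$ and the reindexing $x\mapsto x^2$, which permutes the index set since $\mathrm{Tr}_1^m(1/x^2)=\mathrm{Tr}_1^m(1/x)$, is exactly what is needed. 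So your route is more informative than the paper's: the paper buys brevity by outsourcing the argument, while you supply the actual proof; the underlying mechanism is the same as in the cited source, so there is no mathematical divergence, only a difference in self-containedness.
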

\begin{proof}
This lemma is a special case of
Lemma 12 by Mesnager \cite{SMT}.
\end{proof}
\begin{proposition}\label{prop3.7}
Let $S_{0}$ and $\Lambda(a,0)$ be defined
above, then

{\rm (1)} $\Lambda(a,0)=1-K_{m}(a)$.

{\rm (2)} $S_{0}=\frac{1}{5}[1-K_{m}(a)
+2Q_{m}(a)]$.
\end{proposition}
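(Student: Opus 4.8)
The plan is to reduce both identities to character sums over $\mathbb{F}_{2^m}$ by applying Lemma~\ref{lem3.2}, and then to evaluate the resulting restricted sums using Proposition~\ref{prop2.5} together with the arithmetic fact that $\gcd(5,2^m-1)=1$ whenever $m\equiv2\pmod4$. Throughout I take $a\neq0$; the case $a=0$ is excluded since both displayed identities fail there (for instance $\Lambda(0,0)=2^m+1$).

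For part (1): from $(\ref{equation1})$ we have $f_{a,0}(u)=\mathrm{Tr}_1^n(au^{2^m-1})$, so $\Lambda(a,0)=\sum_{u\in U}\chi(\mathrm{Tr}_1^n(au^{2^m-1}))$, and Lemma~\ref{lem3.2} with $p=1$ and $D_1(x)=x$ gives $\Lambda(a,0)=1+2\sum_{x\in\mathbb{F}_{2^m}^{*},\ \mathrm{Tr}_1^m(x^{-1})=1}\chi(\mathrm{Tr}_1^m(ax))$. Writing $\mathbf{1}[\mathrm{Tr}_1^m(x^{-1})=1]=\tfrac12\bigl(1-\chi(\mathrm{Tr}_1^m(x^{-1}))\bigr)$ splits the inner sum as $\tfrac12\bigl(\sum_{x\neq0}\chi(\mathrm{Tr}_1^m(ax))-\sum_{x\neq0}\chi(\mathrm{Tr}_1^m(ax+x^{-1}))\bigr)$; since $a\neq0$ the first sum equals $-1$ and the second equals $K_m(a)-1$ (the $x=0$ term of the Kloosterman sum contributes $1$). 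Hence $\Lambda(a,0)=1+\bigl(-1-(K_m(a)-1)\bigr)=1-K_m(a)$.

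For part (2): since $5\mid2^m+1$, the map $u\mapsto u^5$ is $5$-to-$1$ from $U$ onto $V$, so $5S_0=\sum_{u\in U}\chi(\mathrm{Tr}_1^n(au^{5(2^m-1)}))$, and Lemma~\ref{lem3.2} with $p=5$ and $D_5(x)=x+x^3+x^5$ gives $5S_0=1+2\sum_{x\in\mathbb{F}_{2^m}^{*},\ \mathrm{Tr}_1^m(x^{-1})=1}\chi(\mathrm{Tr}_1^m(aD_5(x)))$. I would then write the restricted sum as $\sum_{x\neq0}\chi(\mathrm{Tr}_1^m(aD_5(x)))-\sum_{x\neq0,\ \mathrm{Tr}_1^m(x^{-1})=0}\chi(\mathrm{Tr}_1^m(aD_5(x)))$: the first term is $Q_m(a)-1$ because $D_5(0)=0$ and $Q_m(a)=\sum_{x\in\mathbb{F}_{2^m}}\chi(\mathrm{Tr}_1^m(a(x^5+x^3+x)))$. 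For the second term, Proposition~\ref{prop2.5} identifies $\{x\in\mathbb{F}_{2^m}^{*}:\mathrm{Tr}_1^m(x^{-1})=0\}$ with the image of the $2$-to-$1$ map $w\mapsto w+w^{-1}$ on $\mathbb{F}_{2^m}^{*}\setminus\{1\}$; applying $D_5(w+w^{-1})=w^5+w^{-5}$ and then the permutation $w\mapsto w^5$ of $\mathbb{F}_{2^m}^{*}$ (valid because $m\equiv2\pmod4$ gives $2^m\equiv4\pmod5$, so $\gcd(5,2^m-1)=1$), that term becomes $\tfrac12\sum_{v\in\mathbb{F}_{2^m}^{*}\setminus\{1\}}\chi(\mathrm{Tr}_1^m(a(v+v^{-1})))$. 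A direct count of the solutions of $v+v^{-1}=t$ (again via Proposition~\ref{prop2.5}) evaluates $\sum_{v\neq0}\chi(\mathrm{Tr}_1^m(a(v+v^{-1})))$ to $K_m(a)-1$, so the second term is $\tfrac12\bigl((K_m(a)-1)-1\bigr)=\tfrac12 K_m(a)-1$. Putting the pieces together, $5S_0=1+2\bigl((Q_m(a)-1)-(\tfrac12 K_m(a)-1)\bigr)=1-K_m(a)+2Q_m(a)$, which is part (2).

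The step I expect to be the main obstacle is the evaluation of $\sum_{x\neq0,\ \mathrm{Tr}_1^m(x^{-1})=0}\chi(\mathrm{Tr}_1^m(aD_5(x)))$: a priori this is a genuine degree-$5$ Weil-type sum with no visible link to Kloosterman sums, and the crux is that the Dickson substitution $x=v+v^{-1}$ turns $D_5(x)$ into $v^5+v^{-5}$, after which the hypothesis $m\equiv2\pmod4$ (equivalently, $\gcd(5,2^m-1)=1$) makes fifth powers a bijection of $\mathbb{F}_{2^m}^{*}$ and collapses the sum back to an ordinary Kloosterman sum. Minor points to watch are the bookkeeping of the convention for $K_m$ (the $x=0$ term contributing $1$, consistent with $4\mid K_m(a)$ in Proposition~\ref{prop2.6}) and of the excluded element $v=1$, whose contribution $\chi(0)=1$ must be subtracted at the appropriate places.
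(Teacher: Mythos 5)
Your proof is correct, and part (1) coincides with the paper's argument. For part (2) the overall decomposition is also the same --- apply Lemma~\ref{lem3.2} with $p=5$, split the sum restricted to $\mathrm{Tr}_1^m(x^{-1})=1$ into the full sum $Q_m(a)$ minus the sum over the trace-zero set, and show the latter collapses to $\tfrac12 K_m(a)$ (up to the $x=0$ and $v=1$ bookkeeping, which you handle consistently) --- but the mechanism for the key step differs. The paper proves that $D_5$ \emph{permutes} $\{x:\mathrm{Tr}_1^m(x^{-1})=0\}$, using the partial-fraction identity $1/D_5(x)=1/x+y+y^2$ with $y=x/(x^2+x+1)$ (so that $\mathrm{Tr}_1^m(1/D_5(x))=\mathrm{Tr}_1^m(1/x)$) together with the preimage counts of Proposition~\ref{prop2.4}; this replaces $D_5(x)$ by $x$ inside the trace-zero sum in one stroke. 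You instead parametrize the trace-zero set by $x=w+w^{-1}$ via Proposition~\ref{prop2.5}, use the functional equation $D_5(w+w^{-1})=w^5+w^{-5}$, and invoke the bijectivity of $w\mapsto w^5$ coming from $\gcd(5,2^m-1)=1$. The two arguments rest on the same arithmetic fact $(5,2^m-1)=1$, and your substitution is in effect an explicit proof of the paper's permutation claim; what it buys is independence from the somewhat ad hoc identity for $1/D_5(x)$ and from Proposition~\ref{prop2.4}, at the cost of more careful tracking of the excluded elements. Your remark that $a\neq 0$ must be assumed (both identities fail at $a=0$) is a fair point that the paper leaves implicit.
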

\begin{proof}
{\rm (1)} From Lemma \ref{lem3.2},
\begin{align*}
\Lambda(a,0)=&\sum_{u
\in U}\chi(\mathrm{Tr}_{1}^{n}(au^{2^m-1}))\\
=&1+2\sum_{x\in\mathbb{F}_{2^m}^{*
},\mathrm{Tr}_{1}^{m}(x^{-1})=1}
\chi(\mathrm{Tr}_{1}^{m}(ax))\\
=&1+2\cdot\frac{1}{2}[\sum_{x\in\mathbb{F}_{2^m}}
\chi(\mathrm{Tr}_{1}^{m}(ax))-
\sum_{x\in\mathbb{F}_{2^m}}
\chi(\mathrm{Tr}_{1}^{m}(ax+\frac{1}{x}))].
\end{align*}
Note that $\sum\limits_{x\in\mathbb{F}_{2^m}}
\chi(\mathrm{Tr}_{1}^{m}(ax))=0$, then
$$
\Lambda(a,0)=1-\sum_{x\in\mathbb{F}_{2^m}}
\chi(\mathrm{Tr}_{1}^{m}(ax+\frac{1}{x}))
=1-K_{m}(a).
$$

{\rm (2)} Note that $
S_{0}= \sum\limits_{v\in V}^{}\chi(\mathrm{Tr}_{1}^{n}
(av^{2^m-1}))
=\frac{1}{5}\sum\limits_{u\in U}\chi
(\mathrm{Tr}_{1}^{n}
(au^{5(2^m-1})))$. Then from
Lemma \ref{lem3.2},
\begin{align*}
S_{0}=&\frac{1}{5}[1+2
\sum_{x\in\mathbb{F}_{2^m}^{*
},\mathrm{Tr}_{1}^{m}(x^{-1})=1}
\chi(\mathrm{Tr}_{1}^{m}(aD_{5}(x)))
]\\
=&
\frac{1}{5}[1+2
\sum_{x\in\mathbb{F}_{2^m}}
\chi(\mathrm{Tr}_{1}^{m}(aD_{5}(x)))
-2
\sum_{x\in\mathbb{F}_{2^m},
\mathrm{Tr}_{1}^{m}(x^{-1})=0}
\chi(\mathrm{Tr}_{1}^{m}(aD_{5}(x)))
].
\end{align*}
Since $
\frac{1}{D_{5}(x)}=\frac{1}{x^5+x^3+x}
=\frac{1}{x}+\frac{x}{x^2+x+1}
+(\frac{x}{x^2+x+1})^2$,
$
\mathrm{Tr}_{1}^{m}(\frac{1}{D_{5}(x)})
=\mathrm{Tr}_{1}^{m}(\frac{1}{x})$.
Then $D_{5}(x)$ induces the map
\begin{eqnarray*}
\{x\in\mathbb{F}_{2^m}|\mathrm{Tr}_{1}^{m}(x^{-1})
=0\}&\longrightarrow& \{x\in\mathbb{F}_{2^m}|\mathrm{Tr}_{1}^{m}(x^{-1})
=0\}\\
x&\longmapsto & D_{5}(x).
\end{eqnarray*}
From Proposition \ref{prop2.4},
this map is a permutation of $\{x\in\mathbb{F}_{2^m}|\mathrm{Tr}_{1}^{m}(x^{-1})
=0\}$. Hence,
\begin{align*}
&\sum_{x\in\mathbb{F}_{2^m},
\mathrm{Tr}_{1}^{m}(x^{-1})=0}
\chi(\mathrm{Tr}_{1}^{m}(aD_{5}(x)))\\
=& \sum_{x\in\mathbb{F}_{2^m},
\mathrm{Tr}_{1}^{m}(x^{-1})=0}
\chi(\mathrm{Tr}_{1}^{m}(ax))\\
=& \frac{1}{2} [
\sum_{x\in\mathbb{F}_{2^m}}
\chi(\mathrm{Tr}_{1}^{m}(ax))+
\sum_{x\in\mathbb{F}_{2^m}}
\chi(\mathrm{Tr}_{1}^{m}(ax+\frac{1}{x}))
]\\
=& \frac{1}{2}K_{m}(a).
\end{align*}
Consequently, $S_{0}=\frac{1}{5}[1+2Q_{m}(a)-K_{m}(a)]$.
\end{proof}
The relation between $K_{m}(a)$,
$Q_{m}(a)$ and $\Lambda(a,b)$ is given in the following proposition.
\begin{proposition}\label{prop3.8}
Let $\Lambda(a,b)$ be defined above, where
~$a\in\mathbb{F}_{2^m}$, $b\in\mathbb{F}_{16}^{*}$, then

{\rm (1)} If $b=1$, then
$\Lambda(a,1)=-\frac{1}{5}[3(1-K_{m}(a))-
4Q_{m}(a)]$.

{\rm (2)} If $b$ is a primitive element such that $\mathrm{Tr}_{1}^{4}(b)
=0$, then $\Lambda(a,b)=\frac{1}{5}[1-K_{m}(a)+
2Q_{m}(a)]$.
\end{proposition}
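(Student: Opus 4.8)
The plan is to obtain both identities by simply combining the explicit expansions of $\Lambda(a,b)$ in terms of $S_{0},S_{1},S_{2}$ from Proposition \ref{prop3.5} with the closed forms for $\Lambda(a,0)$ and $S_{0}$ from Proposition \ref{prop3.7}. No fresh character-sum manipulation is required: once those two propositions are in hand, the argument is a substitution followed by elementary algebra, so the whole substantive content has already been done (it lives in Lemma \ref{lem3.2} and Proposition \ref{prop3.7}, which express these sums through the Kloosterman sum $K_{m}(a)$ and the Weil sum $Q_{m}(a)$ via Dickson polynomials).

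For part (1), I would start from Proposition \ref{prop3.5}(1), which gives $\Lambda(a,1)=2S_{0}-\Lambda(a,0)$. Inserting $S_{0}=\frac{1}{5}[1-K_{m}(a)+2Q_{m}(a)]$ and $\Lambda(a,0)=1-K_{m}(a)$ from Proposition \ref{prop3.7} yields
\[\Lambda(a,1)=\frac{2}{5}\bigl[1-K_{m}(a)+2Q_{m}(a)\bigr]-\bigl(1-K_{m}(a)\bigr)=-\frac{3}{5}\bigl(1-K_{m}(a)\bigr)+\frac{4}{5}Q_{m}(a),\]
which is exactly $-\frac{1}{5}\bigl[3(1-K_{m}(a))-4Q_{m}(a)\bigr]$. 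For part (2), if $b$ is a primitive element of $\mathbb{F}_{16}$ with $\mathrm{Tr}_{1}^{4}(b)=0$, then (as already noted in the proof of Proposition \ref{prop3.4}(2)) $b^{4}+b+1=0$, so $b$ belongs to the four-element set appearing in Proposition \ref{prop3.5}(2); hence $\Lambda(a,b)=S_{0}$, and Proposition \ref{prop3.7}(2) gives $\Lambda(a,b)=\frac{1}{5}[1-K_{m}(a)+2Q_{m}(a)]$ immediately.

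The only point that deserves a line of care is bookkeeping: one should confirm that ``primitive element of $\mathbb{F}_{16}$ with $\mathrm{Tr}_{1}^{4}(b)=0$'' really coincides with the list $\{\beta+\beta^{2},\beta+\beta^{3},\beta^{2}+\beta^{4},\beta^{3}+\beta^{4}\}$ used in Proposition \ref{prop3.5}(2), which is precisely the statement that such $b$ are the roots of $b^{4}+b+1$. Apart from this, there is no real obstacle; the proposition is essentially a corollary of Propositions \ref{prop3.5} and \ref{prop3.7}.
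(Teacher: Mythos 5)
Your proposal is correct and follows exactly the paper's own argument: the paper likewise obtains part (1) by substituting $S_{0}=\frac{1}{5}[1-K_{m}(a)+2Q_{m}(a)]$ and $\Lambda(a,0)=1-K_{m}(a)$ from Proposition \ref{prop3.7} into the identity $\Lambda(a,1)=2S_{0}-\Lambda(a,0)$ of Proposition \ref{prop3.5}, and part (2) by combining $\Lambda(a,b)=S_{0}$ with the same formula. Your extra remark identifying the primitive elements of trace zero with the roots of $b^{4}+b+1$ is a sensible bookkeeping check that the paper handles implicitly in Proposition \ref{prop3.4}.
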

\begin{proof}
From Proposition \ref{prop3.5}, $\Lambda(a,1)=2S_{0}-\Lambda(a,0)$. Further,
Proposition \ref{prop3.7},
\begin{align*}
\Lambda(a,1)=&2\cdot\frac{1}{5}[1-K_{m}(a)+2Q_{m}(a)]
-(1-K_{m}(a))\\
=& -\frac{3}{5}(1-K_{m}(a))+\frac{4}{5}Q_{m}(a)\\
=& -\frac{1}{5}[3(1-K_{m}(a))-4Q_{m}(a)]
\end{align*}

{\rm (2)} From Proposition \ref{prop3.5} and
Proposition \ref{prop3.7}, $\Lambda(a,b)=S_{0}=\frac{1}{5}[1-K_{m}(a)+
2Q_{m}(a)]$.
\end{proof}

\subsection{The hyper-bentness of
$f_{a,b} (a \in\mathbb{F}_{2^m},(b+1)(b^4+b+1)=0)$}
\label{subsec:3.3}

\begin{proposition}\label{prop3.9}
Let $n=2m$, $m=2m_{1}$, $m_{1}\equiv 1\pmod 2$  and $m_{1}\geq 3$, the Boolean function $
f_{a,1}$ in $\mathcal{H}_{n}$ of the form
$$
\mathrm{Tr}_{1}^{n}(ax^{2^m-1})+
\mathrm{Tr}_{1}^{4}(x^{\frac{2^n-1}{5}})
$$
is a hyper-bent function if and only if $Q_{m}(a)=2^{m_{1}}$ and
~$K_{m}(a)=\frac{4}{3}(2-2^{m_{1}})$ holds.
\end{proposition}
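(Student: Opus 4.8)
The plan is to convert hyper-bentness into a single linear equation relating the Kloosterman sum $K_m(a)$ and the Weil sum $Q_m(a)$, and then to use the a priori restrictions on these two quantities to solve that equation. First I would apply Proposition~\ref{prop3.1}, by which $f_{a,1}$ is hyper-bent if and only if $\Lambda(a,1)=1$. Combining this with the explicit evaluation in Proposition~\ref{prop3.8}(1), namely $\Lambda(a,1)=-\tfrac15\bigl[3(1-K_m(a))-4Q_m(a)\bigr]$, the hyper-bentness of $f_{a,1}$ is equivalent to $3(1-K_m(a))-4Q_m(a)=-5$, that is, to
\[
3K_m(a)+4Q_m(a)=8,\qquad\text{equivalently}\qquad K_m(a)=\tfrac43\bigl(2-Q_m(a)\bigr).
\]
The implication $(\Leftarrow)$ is then immediate: if $Q_m(a)=2^{m_1}$ and $K_m(a)=\tfrac43(2-2^{m_1})$, then $3K_m(a)+4Q_m(a)=4(2-2^{m_1})+4\cdot2^{m_1}=8$, so $\Lambda(a,1)=1$ and $f_{a,1}$ is hyper-bent.

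For $(\Rightarrow)$ I would argue by elimination on the possible values of $Q_m(a)$. Since $m=2m_1$ is even, Corollary~\ref{cor2.1} gives $Q_m(a)\in\{0,\pm2^{m_1},\pm2^{m_1+1},\pm2^{m_1+2}\}$. As $K_m(a)$ is an integer (indeed $4\mid K_m(a)$ by Proposition~\ref{prop2.6}), the relation $K_m(a)=\tfrac43(2-Q_m(a))$ forces $3\mid 2-Q_m(a)$, i.e.\ $Q_m(a)\equiv2\pmod3$. Here the hypothesis that $m_1$ is odd enters: then $2^{m_1}\equiv2\pmod3$, and checking the candidates modulo $3$ leaves exactly $Q_m(a)\in\{2^{m_1},-2^{m_1+1},2^{m_1+2}\}$. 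Finally I would invoke the Kloosterman bound $|K_m(a)-1|\le2^{m_1+1}$ of Proposition~\ref{prop2.6}: for $Q_m(a)=-2^{m_1+1}$ the relation gives $K_m(a)=\tfrac83(1+2^{m_1})>1+2^{m_1+1}$, and for $Q_m(a)=2^{m_1+2}$ it gives $K_m(a)=\tfrac83-\tfrac{16}{3}2^{m_1}<1-2^{m_1+1}$, both of which are impossible. Hence $Q_m(a)=2^{m_1}$, and $K_m(a)=\tfrac43(2-2^{m_1})$ follows from the displayed relation.

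The substantive point is the last paragraph: the whole argument hinges on intersecting the Diophantine line $3K_m(a)+4Q_m(a)=8$ with the finite value set of $Q_m(a)$ (Corollary~\ref{cor2.1}) and the integrality plus interval constraints on $K_m(a)$ (Proposition~\ref{prop2.6}). I expect the only delicate bookkeeping to be the congruence step, where the parity of $m_1$ must be used correctly, and the two sign inequalities against the Kloosterman range; everything before that is a mechanical substitution through Propositions~\ref{prop3.1} and~\ref{prop3.8}.
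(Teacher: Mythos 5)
Your proof is correct and follows essentially the same route as the paper: reduce hyper-bentness to the linear relation $3K_m(a)+4Q_m(a)=8$ via Propositions~\ref{prop3.1} and~\ref{prop3.8}, then eliminate the candidate values of $Q_m(a)$ from Corollary~\ref{cor2.1} using the integrality of $K_m(a)$ and the Kloosterman bound of Proposition~\ref{prop2.6}. The only difference is the order of elimination (the paper first applies the interval bound to cut the list down to $\{0,\pm2^{m_1}\}$ and then uses divisibility by $3$, whereas you do the converse), which is immaterial.
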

\begin{proof}
From Proposition \ref{prop3.1}, $f_{a,1}$ is a hyper-bent function if and only if $\Lambda(a,1)=1$. From Proposition \ref{prop3.8},
$
\Lambda(a,1)=-\frac{1}{5}[3(1-K_{m}(a))
-4Q_{m}(a)]$. Then
$\Lambda(a,1)=1$ if and only if $K_{m}(a)
=\frac{4}{3}(2-Q_{m}(a))$.
Note that
$$
1-2\cdot 2^{m_{1}}\leq K_{m}(a)\leq
1+2\cdot 2^{m_{1}}.
$$
Further, we have
$$
-\frac{3}{2}\cdot 2^{m_{1}}+\frac{5}{4}\leq Q_{m}(a)\leq
\frac{3}{2}\cdot 2^{m_{1}}+\frac{5}{4}.
$$
From $m_1\geq 3$,
$$-2\cdot
2^{m_{1}}\leq Q_{m}(a)
\leq 2\cdot 2^{m_{1}}.$$
From Corollary \ref{cor2.1},
$Q_{m}(a)
\in \{0,\pm 2^{m_{1}},
\pm 2\cdot 2^{m_{1}},
\pm 4\cdot 2^{m_{1}}\}$. Hence, the value of $Q_{m}(a)$ is $0$ or $\pm2^{m_{1}}$.

If $Q_{m}(a)=0$, then $K_{m}(a)=
\frac{4}{3}(2-0)=\frac{8}{3}$. Since
$K_{m}(a)$ is an integer, $Q_{m}(a)\neq 0$.

If $Q_{m}(a)=-2^{m_{1}}$, then $K_{m}(a)=\frac{4}{3}(2+2^{m_{1}})
=\frac{8}{3}(1+2^{m_{1}-1})$. Since $m_{1}$ is odd, $3\not|(1+2^{m_{1}-1})$. Hence, $Q_{m}(a)\neq -2^{m_{1}}$.

As a result, $f_{a,1}$ is a hyper-bent function if and only if $Q_{m}(a)
=2^{m_{1}}$. Then $K_{m}(a)
=\frac{4}{3}(2-2^{m_{1}})$.  Hence,
this proposition follows.
\end{proof}

\begin{proposition}\label{prop3.10}
Let $n=2m$, $m=2m_{1}$,
$m_{1}\equiv 1\pmod 2$ and
$m_{1}\geq 3$. Let
$b$ be a primitive element
in $\mathbb{F}_{16}^{*}$ such that $\mathrm{Tr}_{1}^{4}(b)=0$.
The Boolean function $f_{a,b}$ in
$\mathcal{H}_{n}$ of the form
$$
\mathrm{Tr}_{1}^{n}(ax^{2^m-1})+
\mathrm{Tr}_{1}^{4}(bx^{\frac{2^n-1}{5}})
$$
is a hyper-bent function if and only if one of the following assertions $(1)$ and $(2)$ holds.

{\rm (1)} $Q_{m}(a)=0$, $K_{m}(a)=-4$.

{\rm (2)} $Q_{m}(a)=2^{m_{1}}$, $K_{m}(a)=2\cdot2^{m_{1}}-4$.
\end{proposition}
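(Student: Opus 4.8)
The plan is to collapse the hyper-bentness of $f_{a,b}$ into a single equation relating the Kloosterman sum $K_m(a)$ and the Weil sum $Q_m(a)$, and then to enumerate the finitely many solutions allowed by the known ranges of these two quantities.

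First I would invoke Proposition \ref{prop3.1}: $f_{a,b}$ is hyper-bent if and only if $\Lambda(a,b)=1$. Because $b$ is a primitive element of $\mathbb{F}_{16}^{*}$ with $\mathrm{Tr}_{1}^{4}(b)=0$, Proposition \ref{prop3.8}(2) — which rests on Propositions \ref{prop3.5} and \ref{prop3.7}, and hence ultimately on the analysis of the supersingular curve $C$ — gives the closed form
$$
\Lambda(a,b)=\tfrac{1}{5}\bigl(1-K_m(a)+2Q_m(a)\bigr).
$$
Therefore $f_{a,b}$ is hyper-bent if and only if $1-K_m(a)+2Q_m(a)=5$, i.e.
$$
K_m(a)=2Q_m(a)-4 .
$$

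Next I would bring in the a priori constraints on the two sums. Writing $m=2m_{1}$, one has $2^{(m+2)/2}=2\cdot 2^{m_{1}}$, so Proposition \ref{prop2.6} gives $1-2\cdot 2^{m_{1}}\le K_m(a)\le 1+2\cdot 2^{m_{1}}$ together with $4\mid K_m(a)$, while Corollary \ref{cor2.1} gives $Q_m(a)\in\{0,\pm 2^{m_{1}},\pm 2\cdot 2^{m_{1}},\pm 4\cdot 2^{m_{1}}\}$. Substituting $K_m(a)=2Q_m(a)-4$ into the Kloosterman bounds yields $\tfrac{5}{2}-2^{m_{1}}\le Q_m(a)\le \tfrac{5}{2}+2^{m_{1}}$; since $m_{1}\ge 3$ forces $\tfrac{5}{2}<2^{m_{1}}$, this gives $-2\cdot 2^{m_{1}}<Q_m(a)<2\cdot 2^{m_{1}}$, so the only surviving possibilities are $Q_m(a)\in\{0,\,2^{m_{1}},\,-2^{m_{1}}\}$. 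I would then check the three cases: if $Q_m(a)=0$ then $K_m(a)=-4$, which lies in the Kloosterman interval (using $m_{1}\ge 2$) and is divisible by $4$ — this is assertion (1); if $Q_m(a)=2^{m_{1}}$ then $K_m(a)=2\cdot 2^{m_{1}}-4$, again admissible and divisible by $4$ — this is assertion (2); and if $Q_m(a)=-2^{m_{1}}$ then $K_m(a)=-2\cdot 2^{m_{1}}-4<1-2\cdot 2^{m_{1}}$, contradicting Proposition \ref{prop2.6}, so this case cannot occur. Conversely, each of (1) and (2) plainly satisfies $K_m(a)=2Q_m(a)-4$, hence $\Lambda(a,b)=1$ and $f_{a,b}$ is hyper-bent, which completes the equivalence.

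I do not anticipate a genuine obstacle here: the substantive content — the identity $\Lambda(a,b)=\tfrac{1}{5}\bigl(1-K_m(a)+2Q_m(a)\bigr)$ — has already been secured in Propositions \ref{prop3.7}--\ref{prop3.8}, and what remains is a bounded case analysis. The only points demanding care are the correct conversion $2^{(m+2)/2}=2\cdot 2^{m_{1}}$, the bookkeeping with the divisibility $4\mid K_m(a)$, and verifying that the hypothesis $m_{1}\ge 3$ is precisely what is needed to discard the outer values $\pm 2\cdot 2^{m_{1}}$ and $\pm 4\cdot 2^{m_{1}}$ of $Q_m(a)$ and to rule out the parasitic case $Q_m(a)=-2^{m_{1}}$.
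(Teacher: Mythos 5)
Your proposal is correct and follows essentially the same route as the paper: reduce hyper-bentness to $\Lambda(a,b)=S_{0}=\tfrac{1}{5}(1-K_{m}(a)+2Q_{m}(a))=1$, i.e.\ $K_{m}(a)=2Q_{m}(a)-4$, and then use Corollary \ref{cor2.1} together with the Kloosterman bound of Proposition \ref{prop2.6} to eliminate all values of $Q_{m}(a)$ except $0$ and $2^{m_{1}}$. The only difference is cosmetic: you spell out the exclusion of $Q_{m}(a)=-2^{m_{1}}$ and the $4\mid K_{m}(a)$ sanity check, which the paper leaves implicit.
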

\begin{proof}
$f_{a,b}$ is a hyper-bent function if and only if $\Lambda(a,b)=1$. From Proposition
\ref{prop3.5}, when $b$ is a primitive element such that $\mathrm{Tr}_{1}^{4}(b)=0$,
$\Lambda(a,b)=S_{0}$.
From Proposition \ref{prop3.7},
$
\Lambda(a,b)=\frac{1}{5}
[1-K_{m}(a)+2Q_{m}(a)]$.
Hence, $\Lambda(a,b)=1$ if and only if $K_{m}(a)=2Q_{m}(a)-4$.

From Corollary \ref{cor2.1},
$Q_{m}(a)\in\{0,\pm2^{m_{1}},\pm2\cdot 2^{m_{1}},
\pm 4\cdot 2^{m_{1}}
\}$.
Further, from Proposition \ref{prop2.6},
$K_{m}(a)\in[1-2\cdot2^{m_{1}}, 1+
2\cdot2^{m_{1}}]$. Hence,
$Q_{m}(a)=0$ or $2^{m_{1}}$.
If $Q_{m}(a)=0$, then $K_{m}(a)=-4$.
If $Q_{m}(a)=2^{m_{1}}$, then $K_{m}(a)
=2\cdot2^{m_{1}}-4$. Therefore,
this proposition follows.
\end{proof}

\begin{theorem}\label{thm3.1}
Let $n=2m$, $m=2m_{1}$,
$m_{1}\equiv 1\pmod 2$ and $m_{1}\geq 3$.
The Boolean function $f_{a,1}$ in
$\mathcal{H}_{n}$ of the form
$$
\mathrm{Tr}_{1}^{n}(ax^{2^m-1})+
\mathrm{Tr}_{1}^{4}(x^{\frac{2^n-1}{5}})
$$
is a hyper-bent function if and only if the following assertions holds.

{\rm (1)} $p(x)=x^5+x+a^{-1}$ is irreducible over  $\mathbb{F}_{2^m}$.

{\rm (2)} The quadratic form $\mathfrak{q}(x)=
\mathrm{Tr}_{1}^{m}(x(ax^4+ax^2+a^2x))$ over
$\mathbb{F}_{2^m}$ is even.

{\rm (3)} $K_{m}(a)=\frac{4}{3}(2-2^{m_{1}})$.
\end{theorem}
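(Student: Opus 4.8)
The plan is to read Theorem~\ref{thm3.1} as a translation of Proposition~\ref{prop3.9} in which the arithmetic condition $Q_{m}(a)=2^{m_{1}}$ is replaced by its curve-theoretic description. By Proposition~\ref{prop3.9}, under the standing hypotheses $n=2m$, $m=2m_{1}$, $m_{1}\equiv 1\pmod 2$, $m_{1}\geq 3$, the function $f_{a,1}$ is hyper-bent if and only if $Q_{m}(a)=2^{m_{1}}$ \emph{and} $K_{m}(a)=\frac{4}{3}(2-2^{m_{1}})$. Condition~(3) of the theorem is exactly this second equality, so the entire task reduces to proving that the single equality $Q_{m}(a)=2^{m_{1}}$ is equivalent to the conjunction of conditions~(1) and~(2).

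I would establish that equivalence in two steps. First, since $m$ is even and $2^{m/2}=2^{m_{1}}$, Corollary~\ref{cor2.2}(2) says that $|Q_{m}(a)|=2^{m_{1}}$ (i.e.\ $Q_{m}(a)\in\{2^{m_{1}},-2^{m_{1}}\}$) if and only if $p(x)=x^{5}+x+a^{-1}$ is irreducible over $\mathbb{F}_{2^{m}}$, which is condition~(1). Note that irreducibility of $p$ already forces $a\neq 0$, so $a^{-1}$ is meaningful; and the value $a=0$ is in any case excluded by~(3), because $K_{m}(0)=0\neq\frac{4}{3}(2-2^{m_{1}})$ when $m_{1}\geq 3$. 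Second, assuming~(1), the Remark following Corollary~\ref{cor2.2} determines the \emph{sign} of $Q_{m}(a)$: writing $\mathfrak{q}(x)=\mathrm{Tr}_{1}^{m}(x(ax^{4}+ax^{2}+a^{2}x))$ for the quadratic form attached to the non-degenerate symplectic form $\langle x,y\rangle_{\mathfrak{q}}$, one has $Q_{m}(a)=2^{m_{1}}$ precisely when $\mathfrak{q}$ is even, and $Q_{m}(a)=-2^{m_{1}}$ when $\mathfrak{q}$ is odd. Thus $Q_{m}(a)=2^{m_{1}}$ if and only if both~(1) and~(2) hold, and chaining this with Proposition~\ref{prop3.9} gives: $f_{a,1}$ is hyper-bent $\iff$ [$Q_{m}(a)=2^{m_{1}}$ and $K_{m}(a)=\frac{4}{3}(2-2^{m_{1}})$] $\iff$ [(1) and (2) and (3)], which is the assertion. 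Both implications go through cleanly, so there is no asymmetry between the ``if'' and ``only if'' parts.

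The only genuinely delicate ingredient is the sign determination used in the second step, namely that the parity of $\mathfrak{q}$ decides whether $Q_{m}(a)$ is $+2^{m_{1}}$ or $-2^{m_{1}}$; this rests on identifying $Q_{m}(a)$ with the trace of Frobenius on the Jacobian of the supersingular genus-$2$ curve $y^{2}+y=a(x^{5}+x^{3}+x)$ and on the classification of its Weil polynomial (Proposition~\ref{prop2.7} and \cite{CNa}). I would simply invoke the Remark rather than reprove it. Everything else is bookkeeping; the integrality and range checks for $\frac{4}{3}(2-2^{m_{1}})$ (that it is an integer in $[1-2\cdot2^{m_{1}},\,1+2\cdot2^{m_{1}}]$ and divisible by $4$) have effectively already been carried out inside the proof of Proposition~\ref{prop3.9} when ruling out $Q_{m}(a)\in\{0,-2^{m_{1}}\}$, so no new work is needed there.
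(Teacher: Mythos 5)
Your proposal is correct and follows essentially the same route as the paper, whose proof is the one-line citation of Proposition~\ref{prop3.9} and Corollary~\ref{cor2.2}; you simply make explicit the sign determination via the Remark on the parity of $\mathfrak{q}$, which the paper leaves implicit.
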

\begin{proof}
From Proposition \ref{prop3.9} and
Corollary \ref{cor2.2}, this theorem follows.
\end{proof}
\begin{theorem}\label{thm3.2}
Let $n=2m$, $m=2m_{1}$,
$m_{1}\equiv 1\pmod 2$ and
$m_{1}\geq 3$.
Let $b$ be  a primitive element in $\mathbb{F}_{16}^{*}$ such that $\mathrm{Tr}_{1}^{4}(b)=0$.
The Boolean function $f_{a,b}$ in
$\mathcal{H}_{n}$ of the form
$$
\mathrm{Tr}_{1}^{n}(ax^{2^m-1})+
\mathrm{Tr}_{1}^{4}(bx^{\frac{2^n-1}{5}})
$$
is a hyper-bent function if and only if one of the assertions $(1)$ and $(2)$ holds.

{\rm (1)} $p(x)=x^5+x+a^{-1}$ over $\mathbb{F}_{2^m}$ is $(1)(2)^2$ and $K_{m}(a)=-4$.

{\rm (2)}
$p(x)=x^5+x+a^{-1}$ is irreducible over $\mathbb{F}_{2^m}$.  The quadratic form $\mathfrak{q}(x)=
\mathrm{Tr}_{1}^{m}(x(ax^4+ax^2+a^2x))$ over $\mathbb{F}_{2^m}$ is even. $K_{m}(a)=2\cdot2^{m_{1}}-4$.
\end{theorem}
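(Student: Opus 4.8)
The plan is to obtain Theorem~\ref{thm3.2} directly from Proposition~\ref{prop3.10} by translating its two numerical alternatives on the Weil sum $Q_m(a)$ into statements about the factorization of $P(x)=x^5+x+a^{-1}$ over $\mathbb{F}_{2^m}$. Recall that Proposition~\ref{prop3.10} already reduces hyper-bentness of $f_{a,b}$, for $b$ a primitive element of $\mathbb{F}_{16}^{*}$ with $\mathrm{Tr}_1^4(b)=0$, to the alternatives $(Q_m(a),K_m(a))=(0,-4)$ or $(Q_m(a),K_m(a))=(2^{m_1},\,2\cdot 2^{m_1}-4)$; the restrictions $m_1\equiv 1\pmod 2$ and $m_1\ge 3$, together with the Kloosterman bound of Proposition~\ref{prop2.6} and Corollary~\ref{cor2.1}, are precisely what force $Q_m(a)$ into $\{0,2^{m_1}\}$ there, so no further work on those hypotheses is needed. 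Thus only the reformulation of the $Q_m(a)$-conditions remains.

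First I would dispose of alternative (1). Since $m$ is even, Corollary~\ref{cor2.2}(1) states that $Q_m(a)=0$ holds if and only if $P(x)$ has factorization type $(1)(2)^2$ over $\mathbb{F}_{2^m}$. Substituting this equivalence into case (1) of Proposition~\ref{prop3.10} reproduces assertion (1) of the theorem verbatim, including the companion equation $K_m(a)=-4$.

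Next I would treat alternative (2), where $Q_m(a)=2^{m_1}=2^{m/2}$. By Corollary~\ref{cor2.2}(2), the condition $Q_m(a)=\pm 2^{m/2}$ is equivalent to $P(x)$ being irreducible over $\mathbb{F}_{2^m}$, so irreducibility pins down the magnitude but not the sign of $Q_m(a)$. To select the positive value I would invoke the Remark following Corollary~\ref{cor2.2}: when $P(x)$ is irreducible, $Q_m(a)=2^{m_1}$ exactly when the quadratic form $\mathfrak{q}(x)=\mathrm{Tr}_1^m(x(ax^4+ax^2+a^2x))$ is even, and $Q_m(a)=-2^{m_1}$ when it is odd. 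Combining the irreducibility of $P(x)$, the evenness of $\mathfrak{q}$, and the accompanying condition $K_m(a)=2\cdot 2^{m_1}-4$ from Proposition~\ref{prop3.10}(2) yields assertion (2). Assembling the two cases, the theorem is just the conjunction of Proposition~\ref{prop3.10}, Corollary~\ref{cor2.2}, and that Remark.

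The only genuinely delicate point is the sign determination of $Q_m(a)$ in the irreducible case: it is a fact about the supersingular curve $C:y^2+y=a(x^5+x^3+x)$ and the point-multiplication structure of its Jacobian over $\mathbb{F}_{2^m}$, rather than something that follows from the elementary character-sum manipulations used elsewhere in the paper. However, this is precisely what the Remark records (following \cite{CNa}), so I would cite it rather than reprove it; with that input the proof is immediate and essentially parallel to the proof of Theorem~\ref{thm3.1}.
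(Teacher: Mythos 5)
Your proposal is correct and follows essentially the same route as the paper, which derives Theorem~\ref{thm3.2} directly from Proposition~\ref{prop3.10} and Corollary~\ref{cor2.2}. Your explicit appeal to the Remark after Corollary~\ref{cor2.2} for the sign of $Q_m(a)$ in the irreducible case is a detail the paper leaves implicit, but it is the intended justification, so the two arguments coincide.
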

\begin{proof}
From \ref{prop3.10} and Corollary \ref{cor2.2},
this theorem follows.
\end{proof}

\subsection{The hyper-bentness of $f_{a,b}~(a\in\mathbb{F}_{2^{\frac{m}{2}}}, b\in\mathbb{F}_{16})$}

To consider the hyper-bentness of $f_{a,b}~(a\in\mathbb{F}_{2^{\frac{m}{2}}}, b\in\mathbb{F}_{16})$, we require more properties of $K_{m}(a)$ and $Q_{m}(a)$.
\begin{lemma}\label{lem3.3}
Let $a\in \mathbb{F}_{2^{m_{1}}}^{*}$, $m=2m_{1}$ and $p(x)=x^5+x+a^{-1}$, then

{\rm (1)} $1-K_{m}(a)=(1-K_{m_{1}}(a))^2-2\cdot 2^{m_{1}}$.

{\rm (2)} If $m_{1}\equiv 1 \pmod 2$, then
$Q_{m}(a)\in \{0, 2\cdot 2^{m_{1}},
-4\cdot 2^{m_{1}}\}$. Further, $Q_{m}(a)=0$ if and only if $p(x)=(1)(4)$.  $Q_{m}(a)=2\cdot 2^{m_{1}}$ if and only if  $p(x)=(2)(3)$.
$Q_{m}(a)=-4\cdot 2^{m_{1}}$ if and only if $p(x)=(1)^3(2)$.
\end{lemma}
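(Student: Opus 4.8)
The plan is to exploit the hypothesis $a\in\mathbb{F}_{2^{m_{1}}}^{*}$ together with $m=2m_{1}$: both the curve $C\colon y^{2}+y=a(x^{5}+x^{3}+x)$ and the polynomial $p(x)=x^{5}+x+a^{-1}$ are then already defined over the subfield $\mathbb{F}_{2^{m_{1}}}$, and $\mathbb{F}_{2^{m}}=\mathbb{F}_{2^{2m_{1}}}$ is a quadratic extension of it, so the Frobenius over $\mathbb{F}_{2^{m}}$ of an object defined over $\mathbb{F}_{2^{m_{1}}}$ is the square of its Frobenius over $\mathbb{F}_{2^{m_{1}}}$. Part (1) is then the classical quadratic lifting identity for Kloosterman sums, while part (2) will follow by running Proposition \ref{prop2.7}, Proposition \ref{prop2.8} and equation (\ref{equation}) at the level of $\mathbb{F}_{2^{m_{1}}}$ rather than $\mathbb{F}_{2^{m}}$; in particular, throughout (2) the factorization type of $p(x)$ is taken over $\mathbb{F}_{2^{m_{1}}}$, which is an odd-degree extension of $\mathbb{F}_{2}$, so Table \ref{tab:2} (and not Table \ref{tab:1}) is the relevant one.

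For (1), I would attach to $a$ the Artin--Schreier curve $E\colon y^{2}+y=ax+x^{-1}$ over $\mathbb{F}_{2^{m_{1}}}$. Since $ax+x^{-1}$ has two simple poles (at $0$ and $\infty$), its smooth projective model $\widetilde{E}$ has genus $1$, with exactly one rational point above $x=0$ and one above $x=\infty$; counting the remaining points exactly as in the proof of Proposition \ref{prop2.8} (the fibre over $x\in\mathbb{F}_{2^{m_{1}}}^{*}$ carries $1+\chi(\mathrm{Tr}_{1}^{m_{1}}(ax+x^{-1}))$ points) gives $\#\widetilde{E}(\mathbb{F}_{2^{m_{1}}})=2+\sum_{x\neq 0}\bigl(1+\chi(\mathrm{Tr}_{1}^{m_{1}}(ax+x^{-1}))\bigr)=2^{m_{1}}+K_{m_{1}}(a)$, so the Frobenius trace of $\widetilde{E}$ over $\mathbb{F}_{2^{m_{1}}}$ equals $t:=2^{m_{1}}+1-\#\widetilde{E}(\mathbb{F}_{2^{m_{1}}})=1-K_{m_{1}}(a)$. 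The identical count over $\mathbb{F}_{2^{m}}$ — legitimate since $a\in\mathbb{F}_{2^{m_{1}}}^{*}\subseteq\mathbb{F}_{2^{m}}^{*}$ — gives $\#\widetilde{E}(\mathbb{F}_{2^{m}})=2^{m}+K_{m}(a)$, hence Frobenius trace $1-K_{m}(a)$ over $\mathbb{F}_{2^{m}}$. As the $\mathbb{F}_{2^{m}}$-Frobenius is the square of the $\mathbb{F}_{2^{m_{1}}}$-Frobenius, the standard recursion $t_{q^{2}}=t_{q}^{2}-2q$ for the Frobenius trace of a genus-$1$ curve (with $q=2^{m_{1}}$) yields $1-K_{m}(a)=t^{2}-2\cdot 2^{m_{1}}=(1-K_{m_{1}}(a))^{2}-2\cdot 2^{m_{1}}$, which is (1).

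For (2), since $m_{1}$ is odd and $p(x)\in\mathbb{F}_{2^{m_{1}}}[x]$, Proposition \ref{prop2.7}(2) applied to $C$ over $\mathbb{F}_{2^{m_{1}}}$ says that the factorization type of $p(x)$ over $\mathbb{F}_{2^{m_{1}}}$ is one of $(1)(4)$, $(2)(3)$, $(1)^{3}(2)$, with corresponding Weil polynomial $x^{4}+r'x^{3}+s'x^{2}+2^{m_{1}}r'x+2^{2m_{1}}$ in which $r'=0$ in every case and $s'=0$, $2^{m_{1}}$, $-2\cdot 2^{m_{1}}$ respectively; in particular $Q_{m_{1}}(a)=r'=0$ by Proposition \ref{prop2.8}. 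Equation (\ref{equation}), read at level $m_{1}$ (using $2m_{1}=m$ and $r'=0$), then gives $\#C(\mathbb{F}_{2^{m}})=2^{2m_{1}}+1+2s'=2^{m}+1+2s'$, and Proposition \ref{prop2.8} at level $m$ gives $Q_{m}(a)=\#C(\mathbb{F}_{2^{m}})-(2^{m}+1)=2s'$. Hence $Q_{m}(a)=2s'\in\{0,\,2\cdot 2^{m_{1}},\,-4\cdot 2^{m_{1}}\}$, with $0$, $2\cdot 2^{m_{1}}$, $-4\cdot 2^{m_{1}}$ attached to the factorization types $(1)(4)$, $(2)(3)$, $(1)^{3}(2)$ exactly as asserted. (Equivalently, if $\delta_{1},\dots,\delta_{4}$ are the Frobenius eigenvalues of $C$ over $\mathbb{F}_{2^{m_{1}}}$, then $\sum_{i}\delta_{i}=-r'=0$, $\sum_{i<j}\delta_{i}\delta_{j}=s'$, and the eigenvalues over $\mathbb{F}_{2^{m}}$ are the $\delta_{i}^{2}$, so $Q_{m}(a)=-\sum_{i}\delta_{i}^{2}=-\bigl((\sum_{i}\delta_{i})^{2}-2\sum_{i<j}\delta_{i}\delta_{j}\bigr)=2s'$.)

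The only ingredient not already present in the excerpt is the elliptic-curve description of the Kloosterman sum used in (1) — equivalently, that $1-K_{m_{1}}(a)$ is the trace of a weight-one Frobenius (Lachaud--Wolfmann), from which the lifting is formal; the genus computation and the point count above make this self-contained, and keeping track of the two points at infinity of $\widetilde{E}$ is the one delicate step. In part (2) there is no genuine difficulty beyond the bookkeeping point already flagged: one must feed $p(x)$ into Table \ref{tab:2} at level $m_{1}$, not Table \ref{tab:1} at level $m$, and it is exactly the hypothesis $m_{1}\equiv 1\pmod 2$ that forces $r'=0$ and hence the clean trichotomy for $Q_{m}(a)$.
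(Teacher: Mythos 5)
Your proposal is correct and follows essentially the same route as the paper: part (1) via the auxiliary genus-one curve $y^{2}+y=ax+x^{-1}$ over $\mathbb{F}_{2^{m_1}}$ and the quadratic lifting of its Frobenius trace, and part (2) by reading Table 2 of Proposition \ref{prop2.7} at level $m_{1}$ (where $r'=0$) and passing to $\mathbb{F}_{2^{m}}$ through equation (\ref{equation}) and Proposition \ref{prop2.8} to get $Q_{m}(a)=2s'$. You merely make explicit what the paper leaves as ``properties of elliptic curves'' (the recursion $t_{q^{2}}=t_{q}^{2}-2q$) and the bookkeeping of the two extra points, so no substantive difference.
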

\begin{proof}
{\rm (1)} We introduce an auxiliary curve
$$
C:y^2+y=ax+\frac{1}{x}, a\in
\mathbb{F}_{2^{m_{1}}},
$$
that is, $xy^2+xy=ax^2+1$. Hence, $C$ is an
elliptic curve. There are two infinite points
on $C$ and the $x$ coordinate of any point
on $C$ is not zero. Therefore,
\begin{align*}
\#(C(\mathbb{F}_{2^{m_{1}}}))=&
2+2\cdot \#\{x\in\mathbb{F}_{2^{m_{1}}}^*|
\mathrm{Tr}_{1}^{m_{1}}(ax+\frac{1}{x})=0
\}\\
=& 2\cdot \#\{x\in\mathbb{F}_{2^{m_{1}}}|
\mathrm{Tr}_{1}^{m_{1}}(ax+\frac{1}{x})=0
\}\\
=& \cdot \#\{x\in\mathbb{F}_{2^{m_{1}}}|
\mathrm{Tr}_{1}^{m_{1}}(ax+\frac{1}{x})=0
\}-2^{m_{1}}+2^{m_{1}}\\
=& \sum_{x\in\mathbb{F}_{2^{m_{1}}}}
\chi(\mathrm{Tr}_{1}^{m_{1}}(ax+\frac{1}{x}))
+2^{m_{1}}\\
=& (1+2^{m_{1}})+K_{m_{1}}(a)-1.
\end{align*}
Further, $
\#(C(\mathbb{F}_{2^{2m_{1}}}))=
(1+2^{2m_{1}})+K_{2m_{1}}(a)-1$.
From properties of elliptic curves, we have
$$
1-K_{m}(a)=1-K_{2m_{1}}(a)=(1-K_{m_{1}})^2-2\cdot 2^{m_{1}}.
$$

{\rm (2)} Consider $p(x)=x^5+x+a^{-1}$ over
$\mathbb{F}_{2^{m_{1}}}$. From Proposition
\ref{prop2.8} and (\ref{equation}),
$$Q_{m}(a)=2s-r^2,$$
where $(r,s)$ is determined by $m_{1}$ and Proposition \ref{prop2.7}.
Therefore, $Q_{m}(a)=0$ if and only if $p(x)=(1)(4)$.  $Q_{m}(a)=2\cdot 2^{m_{1}}$ if and only if  $p(x)=(2)(3)$.
$Q_{m}(a)=-4\cdot 2^{m_{1}}$ if and only if $p(x)=(1)^3(2)$.
\end{proof}
\begin{corollary}\label{cor3.3}
Let $a\in\mathbb{F}_{2^{m_{1}}}$,
$m=2m_{1}$ and  $m_{1}\equiv 1\pmod2$, then $K_{m}(a)\neq -4$.
\end{corollary}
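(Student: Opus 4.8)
The plan is to combine the recursion for Kloosterman sums over the quadratic subfield (part (1) of Lemma~\ref{lem3.3}) with the divisibility property $4 \mid K_{m_1}(a)$ from Proposition~\ref{prop2.6}, and then argue that the equation $K_m(a) = -4$ has no solution modulo a suitable power of $2$. Specifically, from Lemma~\ref{lem3.3}(1) we have $1 - K_m(a) = (1 - K_{m_1}(a))^2 - 2 \cdot 2^{m_1}$, so $K_m(a) = -4$ is equivalent to $(1 - K_{m_1}(a))^2 = 2^{m_1+1} - 5$. First I would set $t = 1 - K_{m_1}(a)$ and observe that by Proposition~\ref{prop2.6}, $4 \mid K_{m_1}(a)$, hence $t \equiv 1 \pmod 4$, so $t$ is odd and $t^2 \equiv 1 \pmod 8$.

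Next I would examine the right-hand side $2^{m_1+1} - 5$ modulo $8$. Since $m_1 \geq 3$ (this hypothesis is available — it is carried through all of Section~3.3 via the standing assumption $m_1 \equiv 1 \pmod 2$ together with the $m_1 \geq 3$ used in the surrounding propositions; I would state it explicitly), we have $m_1 + 1 \geq 4$, so $2^{m_1+1} \equiv 0 \pmod 8$ and therefore $2^{m_1+1} - 5 \equiv -5 \equiv 3 \pmod 8$. But a square of an odd integer is $\equiv 1 \pmod 8$, so $t^2 \equiv 1 \not\equiv 3 \pmod 8$. This contradiction shows $K_m(a) = -4$ is impossible when $a \neq 0$. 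For completeness I would also dispatch the case $a = 0$ separately: $K_m(0) = \sum_{x \in \mathbb{F}_{2^m}} \chi(\mathrm{Tr}_1^m(1/x))$ (interpreting the $x=0$ term appropriately via the standard convention), which is a standard character-sum evaluation giving a value $\neq -4$; alternatively one restricts to $a \in \mathbb{F}_{2^{m_1}}^*$ consistent with how the corollary will be applied, in which case Lemma~\ref{lem3.3}(1) applies directly.

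The argument is short and the only mild subtlety is making sure the congruence bookkeeping is airtight: one must confirm that $m_1 \geq 3$ really forces $2^{m_1+1} \equiv 0 \pmod 8$ (it does, since $m_1 + 1 \geq 4$), and that Proposition~\ref{prop2.6} is being applied over $\mathbb{F}_{2^{m_1}}$ to get $4 \mid K_{m_1}(a)$ — the proposition is stated for general $m$, so it applies with $m$ replaced by $m_1$. I expect the main (very modest) obstacle to be simply verifying there is no edge case at small $m_1$; since $m_1$ is odd and $\geq 3$, the smallest case is $m_1 = 3$, giving $2^4 - 5 = 11 \equiv 3 \pmod 8$, which is consistent with the general computation, so no separate small-case check is needed. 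I would present the proof as: assume for contradiction $K_m(a) = -4$, invoke Lemma~\ref{lem3.3}(1) to get $(1-K_{m_1}(a))^2 = 2^{m_1+1}-5$, reduce mod $8$ using $4 \mid K_{m_1}(a)$ and $m_1 \geq 3$, and conclude $1 \equiv 3 \pmod 8$, a contradiction.
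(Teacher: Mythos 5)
Your strategy is sound and genuinely different from the paper's, but it contains one sign slip worth fixing. From Lemma~\ref{lem3.3}(1), $K_m(a)=-4$ gives $5=(1-K_{m_1}(a))^2-2\cdot 2^{m_1}$, i.e. $(1-K_{m_1}(a))^2=2^{m_1+1}+5$, not $2^{m_1+1}-5$. Fortunately this does not hurt you: for $m_1\ge 3$ the corrected right-hand side is $\equiv 5\pmod 8$, still different from $1$, so your contradiction with ``an odd square is $1 \bmod 8$'' survives intact. The paper argues along a different line: it makes no use of the divisibility $4\mid K_{m_1}(a)$, but instead observes that since $m_1$ is odd, $2^{m_1+1}=(2^{(m_1+1)/2})^2$ is itself a perfect square, and for $m_1\ge 3$ the quantity $2^{m_1+1}+5$ lies strictly between the consecutive squares $(2^{(m_1+1)/2})^2$ and $(2^{(m_1+1)/2}+1)^2$, hence cannot be a square of any integer. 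So the paper's route needs the parity of $m_1$ and a size estimate but no arithmetic information about $K_{m_1}(a)$, while yours needs the evenness of $K_{m_1}(a)$ (via Proposition~\ref{prop2.6} applied over $\mathbb{F}_{2^{m_1}}$) but not the parity of $m_1$; both are equally short. Your extra care about the edge cases is warranted rather than pedantic: the paper's own proof silently assumes $m_1\ge 3$ (not stated in the corollary) and applies Lemma~\ref{lem3.3}, which is stated only for $a\in\mathbb{F}_{2^{m_1}}^{*}$, so the case $a=0$ (where $K_m(0)=0\ne -4$) is indeed best dispatched separately as you propose.
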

\begin{proof}
From Lemma \ref{lem3.3},  if $K_{m}(a)= -4$,
\begin{equation}\label{equ1}
(1-K_{m_{1}})^2=2\cdot 2^{m_{1}}+5.
\end{equation}
Since $m_{1}\geq 3$ and $m_{1}\equiv 1\pmod2$,
$(2^{\frac{m_{1}+1}{2}})^2<
(1-K_{m_{1}}(a))^2< (2^{\frac{m_{1}+1}{2}}+1)^2$.
Then $(\ref{equ1})$ has no integer solution.
Hence $K_{m}(a)\neq -4$.
\end{proof}
\begin{theorem}\label{thm3.3}
Let $n=2m$, $m=2m_{1}$, $m_{1}\equiv 1\pmod 2$ and $m_{1}\geq 3$.
If $b\in\mathbb{F}_{16}\backslash \{\beta^i|
1\leq i\leq 4\}$, then the
Boolean function $f_{a,b}$ in
$\mathcal{H}_{n}$ of the form
$$
\mathrm{Tr}_{1}^{n}(ax^{2^m-1})+
\mathrm{Tr}_{1}^{4}(bx^{\frac{2^n-1}{5}}),
a\in \mathbb{F}_{2^{m_{1}}}
$$
is not a hyper-bent function.
\end{theorem}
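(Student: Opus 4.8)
The plan is to reduce the statement to the explicit formulas for $\Lambda(a,b)$ obtained in Proposition~\ref{prop3.6}, combined with the arithmetic constraints on $K_m(a)$ and $Q_m(a)$ coming from Lemma~\ref{lem3.3} and Corollary~\ref{cor3.3}. Since $a\in\mathbb{F}_{2^{m_1}}$, the case analysis in Proposition~\ref{prop3.6} covers every $b$ in the stated range: $b=1$, the primitive elements $1+\beta^i$ of trace $0$ need not be separated from the others because all six ``sum'' values $\beta+\beta^2,\ldots,\beta^2+\beta^3$ give $\Lambda(a,b)=S_0$, and $b=0$ is handled by Dillon's result (or directly, since then $f_{a,0}$ is hyper-bent iff $\Lambda(a,0)=1$ iff $K_m(a)=0$, which we must also rule out). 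So the first step is to split $b\in\mathbb{F}_{16}\setminus\{\beta^i:1\le i\le4\}$ into the cases $b=0$; $b=1$; $b\in\{1+\beta,1+\beta^2,1+\beta^3,1+\beta^4\}$; and $b$ any of the six two-term sums (together with $b=0$ if one prefers to fold it in via $\Lambda(a,0)$).

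Next, for each case I would write $\Lambda(a,b)=1$ as an equation in $K_m(a)$ and $Q_m(a)$ using Proposition~\ref{prop3.6} and Proposition~\ref{prop3.7}. Concretely: $S_0=\tfrac15[1-K_m(a)+2Q_m(a)]$ and $\Lambda(a,0)=1-K_m(a)$, so the case $b$ a two-term sum gives $\Lambda(a,b)=S_0=1\iff K_m(a)=2Q_m(a)-4$; the case $b=1$ gives $2S_0-\Lambda(a,0)=1\iff 3K_m(a)=4Q_m(a)-8$, i.e. $K_m(a)=\tfrac43(Q_m(a)-2)$; and the case $b\in\{1+\beta^i\}$ gives $-\tfrac{3S_0-\Lambda(a,0)}{2}=1\iff 3S_0-\Lambda(a,0)=-2$, which simplifies to $2K_m(a)=6Q_m(a)$, i.e. $K_m(a)=3Q_m(a)$. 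The point is that in every case $K_m(a)$ is pinned to an affine function of $Q_m(a)$, and $Q_m(a)$ ranges over the short list $\{0,\,2\cdot2^{m_1},\,-4\cdot2^{m_1}\}$ from Lemma~\ref{lem3.3}(2) (using $m_1$ odd).

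The heart of the argument, and the main obstacle, is then to show that none of these finitely many $(Q_m(a),K_m(a))$ pairs is actually attainable. For each admissible value of $Q_m(a)$ I would compute the forced value of $K_m(a)$ and then invoke Lemma~\ref{lem3.3}(1), $1-K_m(a)=(1-K_{m_1}(a))^2-2\cdot2^{m_1}$, to obtain an equation of the form $(1-K_{m_1}(a))^2 = 2\cdot2^{m_1}+c$ for an explicit small constant $c$ (this is exactly the mechanism of Corollary~\ref{cor3.3}, which already kills $K_m(a)=-4$). One then shows $2\cdot2^{m_1}+c$ lies strictly between two consecutive squares near $2^{(m_1+1)/2}$ — for $m_1$ odd, $2\cdot2^{m_1}=(2^{(m_1+1)/2})^2$, so $2\cdot2^{m_1}+c$ is never a perfect square for the relevant $c\in\{1,\ldots\}$ once $m_1\ge3$ — or else that the forced $K_m(a)$ is not an integer or not divisible by $4$ (Proposition~\ref{prop2.6}), as happens e.g. when $3\nmid(Q_m(a)-2)$ with $Q_m(a)\in\{0,2\cdot2^{m_1},-4\cdot2^{m_1}\}$ and $m_1$ odd. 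Assembling these elementary but slightly fiddly divisibility/quadratic-residue contradictions over the handful of cases yields that $\Lambda(a,b)\neq1$ throughout, hence $f_{a,b}$ is not hyper-bent; the bookkeeping of which small constant $c$ arises in which case, and checking each really does fail, is where care is needed.
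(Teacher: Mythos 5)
Your plan coincides with the paper's own proof: the same case split on $b$ (namely $b=0$, $b=1$, $b\in\{1+\beta^i\}$, and the six two-term sums, all of which give $\Lambda(a,b)=S_0$ by Proposition~\ref{prop3.6}), the same translation of $\Lambda(a,b)=1$ into an affine relation between $K_m(a)$ and $Q_m(a)$ via Proposition~\ref{prop3.7}, and the same elimination of the finitely many candidates using $Q_m(a)\in\{0,\,2\cdot2^{m_1},\,-4\cdot2^{m_1}\}$ from Lemma~\ref{lem3.3}, the bound and divisibility of Proposition~\ref{prop2.6}, and the non-square identity $(1-K_{m_1}(a))^2=2\cdot2^{m_1}+(1-K_m(a))$ (Corollary~\ref{cor3.3} being the instance $1-K_m(a)=5$).

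Two of your explicit reductions carry sign errors, and you should fix them before the bookkeeping. For $b=1$, setting $2S_0-\Lambda(a,0)=1$ gives $-3(1-K_m(a))+4Q_m(a)=5$, i.e.\ $K_m(a)=\tfrac43\bigl(2-Q_m(a)\bigr)$, not $\tfrac43\bigl(Q_m(a)-2\bigr)$. For $b\in\{1+\beta^i\}$, the condition $3S_0-\Lambda(a,0)=-2$ simplifies to $1-K_m(a)=3Q_m(a)+5$, i.e.\ $K_m(a)=-3Q_m(a)-4$, not $K_m(a)=3Q_m(a)$. The second slip matters structurally: with the correct equation, the only value of $Q_m(a)$ surviving the range constraint is $0$, which forces $K_m(a)=-4$, and that is exactly what Corollary~\ref{cor3.3} was built to exclude; your version would instead force $K_m(a)=0$, which also happens to be impossible (it is the $b=0$ contradiction, $2\cdot2^{m_1}+1$ not a square), so the argument would still close, but by accident rather than by the intended mechanism. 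Neither error changes the verdict in any case, and once corrected your outline reproduces the paper's proof.
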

\begin{proof}
When $b=0$, $f_{a,0}$ is a hyper-bent function if and only if $\Lambda(a,0)=1$.
From Proposition \ref{prop3.7},
$\Lambda_{a,0}=1-K_{m}(a)$. Hence,
$f_{a,0}$ is a hyper-bent function if and only if $K_{m}(a)=0$.
From Lemma \ref{lem3.3}, $(1-K_{m}(a))^2=2\cdot 2^{m_{1}}
+1$. $m_{1}$ is odd, then
$$
(2^{\frac{m_{1}+1}{2}})^2
< 2\cdot 2^{m_{1}} +1 < (2^{\frac{m_{1}+1}{2}}
+1)^2.
$$
$2\cdot 2^{m_{1}}+1$ is not a sqare. Then
$f_{a,0}$ is not a hyper-bent function.

When $b=1$, from Proposition \ref{prop3.6},
$\Lambda(a,1)
=2S_{0}-\Lambda(a,0)$. From Proposition
\ref{prop3.7},
$$
\Lambda(a,1)=-\frac{3}{5}(1-K_{m}(a))+
\frac{4}{5}Q_{m}(a)]
=-\frac{1}{5}[3(1-K_{m}(a))-4Q_{m}(a)].
$$
From Proposition \ref{prop3.1},
$f_{a,1}$ is a hyper-bent function if and only if  $3(1-K_{m}(a))-4Q_{m}(a)=-5$.
From Lemma \ref{lem3.3},
\begin{equation}\label{equation4}
3(1-K_{m_{1}}(a))^2=6\cdot2^{m_{1}}+4Q_{m}(a)
-5.
\end{equation}
Note that $Q_{m}(a)\in\{
0, 2\cdot 2^{m_{1}}, -4\cdot 2^{m_{1}}
\}$.
If $Q_{m}(a)=0$, $(\ref{equation4})$ does not hold. If $Q_{m}(a)=2\cdot 2^{m_{1}}$, then
\begin{equation}\label{equation5}
3(1-K_{m_{1}}(a))^2=14\cdot2^{m_{1}}
-5.
\end{equation}
From Proposition \ref{prop2.6},
$1-K_{m_{1}}(a)
\in [-2\cdot 2^{m_{1}}, 2\cdot 2^{m_{1}}]$. Since $m_{1}\geq 3$, $(\ref{equation5})$ does not hold.
If $Q_{m}(a)=-4\cdot 2^{m_{1}}$, then
\begin{equation}\label{equation6}
3(1-K_{m_{1}}(a))^2=-10\cdot2^{m_{1}}
-5,
\end{equation}
Obviously, $(\ref{equation6})$ does not hold.
As a result, $f_{a,1}$ is not a hyper-bent function.

When $b\in \{\beta+\beta^2, \beta+
\beta^3, \beta^2+\beta^4, \beta^3+
\beta^4, \beta+\beta^4, \beta^2+\beta^3
\}$, $\Lambda(a,b)=S_{0}$.
From Proposition \ref{prop3.7},
$
\Lambda
(a,b
)=\frac{1}{5}[1-K_{m}(a)+2Q_{m}(a)]$.
If $f_{a,b}$ is a hyper-bent function, then
$\Lambda(a,b)=1$. Hence,
$1-K_{m}(a)=5-2Q_{m}(a)$.
Note that $Q_{m}(a)\in \{0, 2\cdot2^{m_{1}},
-4\cdot 2^{m_{1}}\}$. Since $1-K_{m}(a)\in [-2\cdot2^{m_{1}}, 2\cdot2^{m_{1}}]$,
$Q_{m}(a)=0$ and  $1-K_{m}(a)=5$.
Then $K_{m}(a)=-4$. From Corollary \ref{cor3.3},
$K_{m}(a)\neq -4$, which gives a contradiction.
Hence, $f_{a,b}$ is not a hyper-bent function.

When $b\in\{1+\beta, 1+\beta^2,
1+\beta^3, 1+\beta^4\}$,
$\Lambda(a,b)=-\frac{3S_{0}-\Lambda(a,0)}{2}$.
From Proposition \ref{prop3.7}, $\Lambda(a,b)=\frac{1}{5}[1-K_{m}(a)
-3Q_{m}(a)]$. If $f_{a,b}$ is a hyper-bent function,
$1-K_{m}(a)=3Q_{m}(a)+5$. Further, we have
$Q_{m}(a)=0$ and $1-K_{m}(a)=5$, that is,
$K_{m}(a)=-4$, which contradicts Corollary \ref{cor3.3}. Hence, $f_{a,b}$ is not a hyper-bent function.

Above all, this theorem follows.
\end{proof}

We now introduce some results on generalized
Ramanujan-Nagell equations \cite{BSh,Le1,Le2}.
A generalized Ramanujan-Nagell equation is of the form
$$
D_{1}x^2+D_{2}=\eta^2\cdot p^k,
$$
where $D_{1}$, $D_{2}$ are two positive integers, $p$ is a prime and $\eta\in
\{1,\sqrt{2},2\}$.

Generally, a generalized Ramanujan-Nagell equation have no more that a solution. Some exceptions are listed in \cite{BSh}. In particular, we have the following lemma.
\begin{lemma}\label{lem3.4}
The equation $15x^2+1=2\cdot 2^k$ has only a solution $(x,k)=(1,3)$. The equation $3x^2+5=4\cdot 2^k$ has three solutions $\{(x,k)|(1,1), (3,3), (13,7)\}$.
\end{lemma}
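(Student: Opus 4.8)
The plan is to view both equations as instances of the generalized Ramanujan--Nagell equation $D_{1}x^{2}+D_{2}=\eta^{2}p^{k}$ recalled above, with $(D_{1},D_{2},\eta,p)=(15,1,\sqrt{2},2)$ in the first case and $(3,5,2,2)$ in the second. Since such an equation has at most one solution apart from a finite, explicitly tabulated set of exceptional parameter tuples, one route is simply: (i) to check that the two tuples at hand lie in that exceptional list; (ii) to read off the complete solution sets from the tables of \cite{BSh} (see also \cite{Le1,Le2}); and (iii) to confirm by substitution that $(x,k)=(1,3)$ solves $15x^{2}+1=2\cdot2^{k}$ and that $(1,1)$, $(3,3)$, $(13,7)$ solve $3x^{2}+5=4\cdot2^{k}$.

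For the first equation I would instead give a self-contained argument. From $15x^{2}=2^{k+1}-1$ one gets $2^{k+1}\equiv1\pmod{15}$, hence $4\mid k+1$; writing $k+1=4t$ with $t\ge1$ yields
\[
15x^{2}=2^{4t}-1=(2^{t}-1)(2^{t}+1)(2^{2t}+1),
\]
a product of three pairwise coprime integers. Now $2^{2t}+1\equiv2\pmod 3$, so $3\nmid 2^{2t}+1$, and $5\mid 2^{2t}+1$ exactly when $t$ is odd; hence if $t$ were even, $2^{2t}+1$ would be coprime to $15$ and to the other two factors and so would have to be a perfect square, which is impossible for $t\ge1$. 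Therefore $t$ is odd, and then $2^{t}-1$ is coprime to $3$, to $5$ and to the other two factors, so $2^{t}-1$ must itself be a perfect square; but $2^{t}-1=z^{2}$ forces $t=1$. Thus $(x,k)=(1,3)$ is the only solution.

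The second equation is the genuinely hard case, and here I would either quote \cite{BSh,Le1,Le2} directly or carry out the following descent. Reduction modulo $8$ shows $x$ is odd and reduction modulo $3$ shows $k$ is odd; multiplying through by $3$ gives $(3x)^{2}+15=3\cdot2^{k+2}$, which I factor in the imaginary quadratic field $K=\mathbb{Q}(\sqrt{-15})$ (class number $2$, with $2$ split as $\mathfrak{p}_{2}\bar{\mathfrak{p}}_{2}$, $3$ ramified as $\mathfrak{p}_{3}^{2}$, and $\mathcal{O}_{K}^{*}=\{\pm1\}$). Since $\gcd(3x+\sqrt{-15},\,3x-\sqrt{-15})$ divides $2\sqrt{-15}$, an analysis of the $\mathfrak{p}_{2}$- and $\mathfrak{p}_{3}$-valuations, together with the facts that $\mathfrak{p}_{3}\mathfrak{p}_{2}$ and $\mathfrak{p}_{2}^{2}$ are principal, lets one write $3x+\sqrt{-15}=\pm\,\eta\,\theta^{m}$ with $\theta=\tfrac{1+\sqrt{-15}}{2}$, $\eta\in\{\tfrac{3\pm\sqrt{-15}}{2}\}$ and $m=(k+1)/2$. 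Comparing the $\sqrt{-15}$-coefficients in this relation and its conjugate reduces the problem to a binary-recurrence equation of the shape $|w_{m}|=2$, where $w_{m+1}=w_{m}-4w_{m-1}$.

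The main obstacle is precisely bounding $m$ in such an equation. Reducing the recurrence modulo successive powers of $2$ shows that the residues $w_{m}\bmod 2^{j}$ are eventually periodic, which only pins $m$ down $2$-adically and does not by itself exclude a sporadic large solution; to finish one needs either a Skolem--Strassmann $2$-adic analyticity argument applied to $m\mapsto w_{m}+2$, or the primitive-divisor theorems for Lucas sequences (which is exactly the input underlying the exceptional tables in \cite{BSh}). Granting that input, a short check of small $m$ gives $m\in\{1,2,4\}$, i.e.\ $(x,k)\in\{(1,1),(3,3),(13,7)\}$, and the lemma follows.
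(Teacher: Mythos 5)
The paper gives no proof of this lemma at all: it is stated as a quoted result, with the surrounding text pointing to the tables of exceptional generalized Ramanujan--Nagell equations in \cite{BSh} (see also \cite{Le1,Le2}). Your primary route --- identifying the two equations as the instances $(D_1,D_2,\eta,p)=(15,1,\sqrt{2},2)$ and $(3,5,2,2)$ of $D_1x^2+D_2=\eta^2p^k$ and reading the solution sets off those tables --- is therefore exactly the paper's (implicit) argument, and your substitution checks of $(1,3)$ and of $(1,1),(3,3),(13,7)$ are correct. What you add goes beyond the paper. Your elementary proof of the first assertion is complete and correct: the order of $2$ modulo $15$ forces $k+1=4t$; the three factors of $2^{4t}-1$ are pairwise coprime; the congruences modulo $3$ and $5$ show that for $t$ even the factor $2^{2t}+1$ would have to be a perfect square (impossible), while for $t$ odd the factor $2^t-1$ must be a perfect square, forcing $t=1$ and hence $(x,k)=(1,3)$. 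For the second equation, your descent in $\mathbb{Q}(\sqrt{-15})$ correctly reduces the problem to $|w_m|=2$ for the Lucas-type recurrence $w_{m+1}=w_m-4w_{m-1}$ (coming from $\theta=\frac{1+\sqrt{-15}}{2}$ with trace $1$ and norm $4$), and you are right --- and commendably explicit --- that bounding $m$ there cannot be done by congruences alone and requires a Skolem--Strassmann $2$-adic argument or the primitive-divisor input underlying \cite{BSh}; since you ultimately invoke that input rather than prove it, the second assertion still rests on the cited literature, exactly as in the paper. In short: no gap relative to the paper's own standard of proof, and a strictly more informative, partly self-contained treatment.
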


\begin{theorem}\label{thm3.4}
Let $n=2m$, $m=2m_{1}$,
$m_{1}\equiv 1\pmod 2$ and  $m_{1}\geq 3$.
If $n\neq 12, 28$, then the Boolean functions
$f_{a,b}$ in
$\mathcal{H}_{n}$ of the form
$$
\mathrm{Tr}_{1}^{n}(ax^{2^m-1})+
\mathrm{Tr}_{1}^{4}(bx^{\frac{2^n-1}{5}}),
a\in \mathbb{F}_{2^{m_{1}}},
b\in \mathbb{F}_{16},
$$
is not a hyper-bent function. Further, if $n=12$,
all the hyper-bent functions with $a\in \mathbb{F}_{2^{3}}$ in $\mathcal{H}_{12}$ are  $\mathrm{Tr}_{1}^{12}(ax^{2^6-1})
+\mathrm{Tr}_{1}^{4}(bx^{\frac{2^{12}-1}{5}})$, where $(a+1)(a^3+a^2+1)=0$ and $b=\beta^{i}, i=1,2,3,4$.
If $n=28$, all the hyper-bent functions with $a\in \mathbb{F}_{2^{7}}$ in  $\mathcal{H}_{28}$ are  $\mathrm{Tr}_{1}^{28}(ax^{2^{14}-1})
+\mathrm{Tr}_{1}^{4}(bx^{\frac{2^{28}-1}{5}})$, where $(a+1)(a^7+a^6+a^5+a^4+a^3+a^2+1)=0$ and
$b=\beta^{i}, i=1,2,3,4$.
\end{theorem}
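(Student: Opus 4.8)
The plan is to first combine the statement with Theorem~\ref{thm3.3}, which already shows that for $a\in\mathbb{F}_{2^{m_1}}$ the only values of $b$ that can make $f_{a,b}$ hyper-bent are $b\in\{\beta,\beta^2,\beta^3,\beta^4\}$; the excluded case $a=0$ is disposed of directly, since then $\Lambda(a,0)=2^m+1$ and $S_0=\tfrac{2^m+1}{5}$, so the formulas of Proposition~\ref{prop3.6} give $\Lambda(a,b)\neq 1$ for every $b$. So I may assume $a\in\mathbb{F}_{2^{m_1}}^{*}$ and $b\in\{\beta,\beta^2,\beta^3,\beta^4\}$. For such $b$, Proposition~\ref{prop3.6}(2) gives $\Lambda(a,b)=-\tfrac12\bigl(S_0+\Lambda(a,0)\bigr)$, so by Proposition~\ref{prop3.1} the function $f_{a,b}$ is hyper-bent if and only if $S_0+\Lambda(a,0)=-2$.

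Next I would turn this into an arithmetic condition. By Proposition~\ref{prop3.7}, $S_0=\tfrac15\bigl[1-K_m(a)+2Q_m(a)\bigr]$ and $\Lambda(a,0)=1-K_m(a)$, so $S_0+\Lambda(a,0)=-2$ becomes
$$3\bigl(1-K_m(a)\bigr)+Q_m(a)=-5 .$$
Using Lemma~\ref{lem3.3}(1), with $t:=1-K_{m_1}(a)$ one has $1-K_m(a)=t^2-2\cdot 2^{m_1}$, so the condition reads
$$Q_m(a)=6\cdot 2^{m_1}-3t^2-5 .$$
By Lemma~\ref{lem3.3}(2), $Q_m(a)\in\{0,\ 2\cdot 2^{m_1},\ -4\cdot 2^{m_1}\}$, and I would run through the three cases. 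If $Q_m(a)=0$ then $3t^2=6\cdot 2^{m_1}-5$, impossible modulo $3$. If $Q_m(a)=2\cdot 2^{m_1}$ then $3t^2+5=4\cdot 2^{m_1}$, and Lemma~\ref{lem3.4} (with $x=|t|$, $k=m_1$) forces $(|t|,m_1)\in\{(1,1),(3,3),(13,7)\}$; since $m_1\geq 3$ is odd while $4\mid K_{m_1}(a)$ forces $t\equiv 1\pmod 4$, only $(t,m_1)=(-3,3)$ and $(13,7)$ survive. If $Q_m(a)=-4\cdot 2^{m_1}$ then $3t^2+5=10\cdot 2^{m_1}$, whence $5\mid t$; writing $t=5s$ gives $15s^2+1=2\cdot 2^{m_1}$, so by Lemma~\ref{lem3.4} $(|s|,m_1)=(1,3)$, and $4\mid K_3(a)$ leaves $(t,m_1)=(5,3)$. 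In every surviving case $m_1\in\{3,7\}$, i.e.\ $n=4m_1\in\{12,28\}$; so for $n\notin\{12,28\}$ no $f_{a,b}$ with $a\in\mathbb{F}_{2^{m_1}}$ is hyper-bent, and together with Theorem~\ref{thm3.3} the first assertion follows.

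Finally I would pin down the two exceptional dimensions; the converse direction is automatic, since once $K_{m_1}(a)$ and $Q_m(a)$ take the values above the equation $3(1-K_m(a))+Q_m(a)=-5$ holds, hence $\Lambda(a,b)=1$. For $n=12$ ($m_1=3$) the two surviving possibilities are $K_3(a)=4$ with $Q_6(a)=2\cdot 2^{3}$, and $K_3(a)=-4$ with $Q_6(a)=-4\cdot 2^{3}$; by Lemma~\ref{lem3.3}(2) these mean that $p(x)=x^5+x+a^{-1}$ is of factorisation type $(2)(3)$, resp.\ $(1)^3(2)$, over $\mathbb{F}_{8}$, i.e.\ has $0$, resp.\ $3$, roots in $\mathbb{F}_{8}$. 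Computing $K_3(a)$ on the conjugacy classes of $\mathbb{F}_{8}$ and inspecting the fibres of the map $x\mapsto x^5+x$ on $\mathbb{F}_{8}$ then shows that the first case holds exactly for the three roots of $a^3+a^2+1=0$ and the second exactly for $a=1$; hence $f_{a,b}$ is hyper-bent iff $(a+1)(a^3+a^2+1)=0$ and $b\in\{\beta,\beta^2,\beta^3,\beta^4\}$. For $n=28$ ($m_1=7$) only $K_7(a)=-12$, $Q_{14}(a)=2\cdot 2^{7}$ survives, i.e.\ $p(x)$ of type $(2)(3)$ over $\mathbb{F}_{128}$; an explicit finite verification over $\mathbb{F}_{128}$ identifies the admissible $a$ as $a=1$ together with the roots of $a^7+a^6+a^5+a^4+a^3+a^2+1=0$, yielding the stated family.

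The hard part will be this last step for $n=28$: one has to carry out the finite but sizeable computation over $\mathbb{F}_{128}$ — determining all $a$ with $K_7(a)=-12$ and factorisation type $(2)(3)$ of $x^5+x+a^{-1}$ — and to recognise that this set is exactly $\{a:(a+1)(a^7+a^6+a^5+a^4+a^3+a^2+1)=0\}$. A secondary subtlety, easy to overlook, is that $t=1-K_{m_1}(a)$ may be negative when applying Lemma~\ref{lem3.4}, so I would keep both signs and use $4\mid K_{m_1}(a)$ to discard the spurious one.
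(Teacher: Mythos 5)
Your proposal is correct and follows essentially the same route as the paper: reduce (via Theorem~\ref{thm3.3} and Proposition~\ref{prop3.6}) to $b=\beta^i$ and the condition $3(1-K_m(a))+Q_m(a)=-5$, pass to $K_{m_1}$ via Lemma~\ref{lem3.3}, apply the Ramanujan--Nagell Lemma~\ref{lem3.4} to force $m_1\in\{3,7\}$, and finish the exceptional cases $n=12,28$ by a finite computation. Your extra care with the excluded case $a=0$ and with the sign of $t=1-K_{m_1}(a)$ when invoking Lemma~\ref{lem3.4} are small points the paper glosses over, but the substance is identical.
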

\begin{proof}
If $b=\beta^i~(i=1,2,3,4)$, then
$\Lambda(a,b)=-\frac{S_{0}+
\Lambda(a,0)}{2}$.
From Proposition \ref{prop3.7},
\begin{equation*}
\Lambda(a,b)= -\frac{1}{5}[3(1-K_{m}(a))+Q_{m}(a)].
\end{equation*}
Then $f_{a,b}$ is a hyper-bent function if
and only if $\Lambda(a,b)=1$, that is,
\begin{equation}\label{equation7}
3(1-K_{m}(a))+Q_{m}(a)=-5.
\end{equation}
From Lemma \ref{lem3.3},
$Q_{m}(a)\in\{
0, 2\cdot 2^{m_{1}}, -4\cdot
2^{m_{1}}\}$. If $Q_{m}(a)=0$,
(\ref{equation7}) does not hold.

From Lemma \ref{lem3.3},
\begin{equation}\label{equation8}
3(1-K_{m_{1}}(a))^2=6\cdot 2^{m_{1}}-Q_{m}(a)-5.
\end{equation}
If $Q_{m}(a)=2\cdot 2^{m_{1}}$,
\begin{equation}\label{equation9}
3(1-K_{m}(a))^2+5=4\cdot2^{m_{1}}.
\end{equation}
From Lemma \ref{lem3.4}, we have
$(K_{m_{1}}(a), m_{1})
\in\{(4,3), (-2,3), (14,7),(-12,7)\}$.
Since $4|K_{m_{1}}(a)$, $(K_{m_{1}}(a), m_{1})
=(4,3)$ or $(-12,7)$.

If $Q_{m}(a)=-4\cdot 2^{m_{1}}$, then
$3(1-K_{m_{1}}(a))^2=5(2\cdot 2^{m_{1}}-1)$.
Hence,  $5|1-K_{m_{1}}(a)$. Then
\begin{equation}\label{equation11}
2\cdot2^{m_{1}}=15(
\frac{1-K_{m_{1}}(a)}{5})^2+1.
\end{equation}
From Lemma \ref{lem3.4}, $(K_{m_{1}}(a), m_{1})
=(6,3)$ or $(-4,3)$.
Since $4|K_{m_{1}}(a)$, $(K_{m_{1}}(a), m_{1})
=(-4,3)$.

If $(K_{m_{1}}(a), m_{1})
=(4,3)$, then $Q
_{m}(a)=2\cdot 2^{m_{1}}$. If $(K_{m_{1}}(a), m_{1})=(-4,3)$, then $Q
_{m}(a)=-4\cdot 2^{m_{1}}$.
From Lemma \ref{lem3.3}, if $m_{1}=3$, then
$n=12$ and $f_{a,\beta^i}~(a\in\mathbb{F}_{8},
i=1,2,3,4)$ is a hyper-bent function if and only if one of the assertions $(1)$ and $(2)$ holds.

{\rm (1)} $p(x)=x^5+x+a^{-1}
=(2)(3)$ and  $K_{3}(a)=4$.

{\rm (2)} $p(x)=x^5+x+a^{-1}
=(1)^3(2)$ and $K_{3}(a)=-4$.

If $(K_{m_{1}}(a), m_{1})
=(-12,7)$, then $Q
_{m}(a)=2\cdot 2^{m_{1}}$.
From Lemma \ref{lem3.3}, if
$m_{1}=7$, then $n=28$ and $f_{a,\beta^i}~(a\in\mathbb{F}_{128}, i=1,2,3,4)$ is a hyper-bent function if and only if $p(x)=x^5+x+a^{-1}
=(2)(3)$ and $K_{7}(a)=-12$.

With the help of experiments on the computer,
if $m_{1}=3$, then $n=12$ and all the hyper-bent  functions with $a\in\mathbb{F}_{8}$ in  $\mathcal{H}_{12}$ are
$$
\mathrm{Tr}_{1}^{12}(ax^{2^6-1})+
\mathrm{Tr}_{1}^{4}(bx^{\frac{2^{12}-1}{5}}),
$$
where $(a+1)(a^3+a^2+1)=0$ and $b=\beta^i(i=1,2,3,4)$.

If $m_{1}=7$, then $n=28$ and all the hyper-bent functions with $a\in\mathbb{F}_{128}$ in
$\mathcal{H}_{28}$ are
$$
\mathrm{Tr}_{1}^{28}(ax^{2^{14}-1})+
\mathrm{Tr}_{1}^{4}(bx^{\frac{2^{28}-1}{5}}),
$$
where $(a+1)(a^7+a^6+a^5+a^4+a^3+a^2+1)=0$ and $b=\beta^i~(i=1,2,3,4)$.

Above all, this theorem follows.
\end{proof}

\section{Conclusion}
This paper considers
the hyper-bentness of the Boolean functions
$f_{a,b}$ of the form
$f_{a,b}:=
\mathrm{Tr}_{1}^{n}(ax^{2^{m}-1})+
\mathrm{Tr}_{1}^{4}(bx^{\frac{2^{n}-1}{5}})$,
where $n=2m$, $m=2\pmod 4$, $a\in \mathbb{F}_{2^m}$ and $b\in\mathbb{F}_{16}$.
If $b=1$ or $b$ is a primitive element
in $\mathbb{F}_{16}$ such that $\mathrm{Tr}_{1}^{4}(b)=0$, the hyper-bentness
of $f_{a,b}$ can be characterized by
Kloosterman sums and the factorization of $x^5+x+a^{-1}$. If $a\in\mathbb{F}_{2^{\frac{m}{2}}}^*$, with the help of generalized Ramanujan-Nagell equations, we
prove that $f_{a,b}$ is not a hyper-bent function unless $n=12$ or $n=28$. Further, we give all the
hyper-bent functions for $n=12$ or $n=28$.




\end{document}